\DeclareMathOperator{\size}{size}
\DeclareMathOperator{\val}{val}
\DeclareMathOperator{\dist}{dist}
\DeclareMathOperator{\MF}{MF}
\DeclareMathOperator{\opt}{opt}
\newtheorem{theorem}{Theorem}[section]
\newtheorem{lemma}[theorem]{Lemma}
\newtheorem{proposition}[theorem]{Proposition}
\newtheorem{corollary}[theorem]{Corollary}
\newtheorem*{claim}{Claim}
\theoremstyle{definition}
\crefname{equation}{}{}
\title{Node-Connectivity Terminal Backup,\\
Separately-Capacitated Multiflow, and Discrete Convexity}
\author{Hiroshi HIRAI and Motoki IKEDA\\
Department of Mathematical Informatics,\\
Graduate School of Information Science and Technology,\\
The University of Tokyo, Tokyo, 113-8656, Japan\\
\texttt{\normalsize \{hirai,motoki\_ikeda\}@mist.i.u-tokyo.ac.jp}}
\begin{document}

\maketitle

\begin{abstract}
The \emph{terminal backup problems} (Anshelevich and Karagiozova (2011)) form
a class of network design problems:
Given an undirected graph with a requirement on terminals,
the goal is to find a minimum cost subgraph satisfying the connectivity requirement.
The \emph{node-connectivity terminal backup problem} requires
a terminal to connect other terminals with a number of
node-disjoint paths.
This problem is not known whether is NP-hard or tractable.
Fukunaga (2016) gave a $4/3$-approximation algorithm
based on LP-rounding scheme using a general LP-solver.

In this paper, we develop a combinatorial algorithm for the relaxed LP
to find a half-integral optimal solution in
$O(m\log (nUA)\cdot \MF(kn,m+k^2n))$ time,
where $n$ is the number of nodes,
 $m$ is the number of edges, $k$ is the number of terminals,
$A$ is the maximum edge-cost, $U$ is the maximum edge-capacity,
and $\MF(n',m')$ is the time complexity of a max-flow algorithm
in a network with $n'$ nodes and $m'$ edges.
The algorithm implies that the $4/3$-approximation algorithm
for the node-connectivity terminal backup problem
is also efficiently implemented.
For the design of algorithm, we explore a connection between
the node-connectivity terminal backup problem and
a new type of a multiflow, called a \emph{separately-capacitated multiflow}.
We show a min-max theorem which extends Lov\'{a}sz--Cherkassky theorem
to the node-capacity setting.
Our results build on discrete convexity in
the node-connectivity terminal backup problem.
\end{abstract}

Keywords: terminal backup problem, node-connectivity, separately-capacitated multiflow, discrete convex analysis.

\section{Introduction}
Network design problems are central problems in combinatorial optimization. 
A large number of basic combinatorial optimization problems  
are network design problems. Examples are spanning tree, matching, TSP, and Steiner networks.
They admit a typical formulation of a network design problem: Find a minimum-cost network satisfying given connectivity requirements.
The present paper addresses a relatively new class of network design problems, called \emph{terminal backup problems}.
The problem is to find a cheapest subnetwork
in which each terminal can send 
a specified amount of flows to other terminals, 
i.e., the data in each terminal can be backed up,
possibly in a distributed manner, in other terminals.

A mathematical formulation of the terminal backup problem is given as follows.
Let $((V,E),S,u,c,a,r)$ be an undirected network,
where $(V,E)$ is a simple undirected graph,
$S\subseteq V\ (\lvert S\rvert\geq 3)$ is a set of \emph{terminals},
$u:E\rightarrow\mathbb{Z}_+$ is a nonnegative edge-capacity function,
$c:V\setminus S\rightarrow\mathbb{Z}_+$ is a nonnegative node-capacity function,
$a:E\rightarrow\mathbb{Z}_+$ is a nonnegative edge-cost function,
and $r:S\rightarrow\mathbb{Z}_+$ is a nonnegative requirement function on terminals.
The goal is to find a feasible edge-capacity function $x$ of minimum cost $\sum_{e \in E}a(e)x(e)$.
Here an edge-capacity function $x$ is said to be \emph{feasible} 
if $0 \leq x \leq u$ and each terminal $s \in S$ has a flow 
from $s$ to $S\setminus\{s\}$,  
an $\{s\}$--$(S\setminus\{s\})$ flow, of total flow-value $r(s)$ 
in the network $((V,E),S,x,c)$ capacitated by the edge-capacity $x$ and the node-capacity $c$.

The original formulation, due to Anshelevich and Karagiozova~\cite{Anshelevich2011Terminal}, 
is uncapacitated (i.e., $u,c$ are infinity), 
requires $x$ to be integer-valued, and assumes $r(s) = 1$ for all $s \in S$.
They showed that an optimal solution can be obtained in polynomial time.  
Bern\'{a}th et al.~\cite{Bernath2015Generalized} 
extended this polynomial time solvability to an arbitrary integer-valued requirement $r$.
For the setting of general edge-capacity (and infinite node-capacity), 
which we call  
the \emph{edge-connectivity terminal backup problem (ETB)}, 
it is unknown whether ETB is NP-hard or tractable.

Fukunaga~\cite{Fukunaga2016Approximating} considered the above setting including both edge-capacity and node-capacity,
which we call the \emph{node-connectivity terminal backup problem (NTB)},
and explored intriguing features of its fractional relaxation. 
The \emph{fractional ETB (FETB)} and \emph{fractional NTB (FNTB)} are
LP-relaxations obtained  from ETB and NTB, respectively,
by relaxing solution $x$ to be real-valued.
Fukunaga showed the half-integrality property of FNTB, that is,
there always exists an optimal solution that is half-integer-valued. 
Based on this property, 
he developed a $4/3$-approximation algorithm for NTB
by rounding a half-integral (extreme) optimal solution.
Moreover, he noticed a useful relationship
between FETB and \emph{multicommodity flow (multiflow)}. 
In fact, a solution of FETB is precisely the edge-support 
of a multiflow consisting of the $r(s)$ amount of 
$\{s\}$--$(S\setminus\{s\})$ flow for each $s \in S$.
This is a consequence of Lov\'asz--Cherkassky theorem~\cite{Cherkassky1977solution,Lovasz1976some} in multiflow theory.
In particular, FETB is equivalent to 
a minimum-cost multiflow problem, 
which is a variant of the one studied
by Karzanov~\cite{Karzanov1979minimum,Karzanov1994Minimum}
and Goldberg and Karzanov~\cite{Goldberg1997Scaling}. 
 
Utilizing this connection, Hirai~\cite{Hirai2015L} developed a 
combinatorial polynomial time algorithm for FETB and 
the corresponding multiflow problem. This algorithm uses 
a max-flow algorithm as a subroutine, and  
brings 
a combinatorial implementation of 
Fukunaga's $4/3$-approximation algorithm for ETB, 
where he used a generic LP-solver (e.g., the ellipsoid method) 
to obtain a half-integral extreme optimal solution.

Our first contribution is an extension of this result to the node-capacitated setting, 
implying that the $4/3$-approximation algorithm for NTB is also efficiently implemented. 
\begin{theorem}
	\label{thm:main}
	A half-integral optimal solution of FNTB can be obtained
	in $O(m\log (nUA)\cdot \MF(kn,m+k^2n))$ time.
\end{theorem}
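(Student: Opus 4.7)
The plan is to extend Hirai's combinatorial algorithm for FETB to the node-capacitated setting FNTB by combining three ingredients: a reformulation of FNTB as a minimum-cost separately-capacitated multiflow problem, a discrete convexity structure on an associated Lagrangian dual, and a capacity-and-cost scaling descent scheme whose basic operation is a max-flow computation.

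First, I would translate feasibility in FNTB into the existence of a separately-capacitated multiflow. Using the node-capacitated analogue of Lov\'asz--Cherkassky established elsewhere in the paper, a feasible edge-capacity function $x \le u$ for FNTB is precisely the edge-support of a separately-capacitated multiflow shipping $r(s)$ units from $s$ to $S\setminus\{s\}$ for every terminal $s$. Hence FNTB is equivalent to a minimum-cost separately-capacitated multiflow problem whose optimum is half-integral by Fukunaga's result.

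Second, I would attach a potential to each nonterminal node and each terminal and show that the resulting Lagrangian dual function $\Phi$ is discrete convex (L$^{\natural}$-convex in the sense of discrete convex analysis) on a half-integer lattice. A steepest-descent step of $\Phi$ would be computed by a single max-flow call in an auxiliary network formed from $k$ copies of $V$ (giving $kn$ nodes, to decouple the commodities indexed by terminals) with $m$ edge-copies plus $O(k^2 n)$ connection edges between terminal copies, and with standard node-splitting to encode $c$. Combining steepest descent with capacity-and-cost scaling over $O(\log(nUA))$ phases, and extracting a half-integral primal optimum via complementary slackness at the minimizer, would yield the stated running time $O(m\log(nUA)\cdot \MF(kn, m+k^2n))$.

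The main obstacle is verifying the discrete convexity of $\Phi$ on the half-integer lattice. Unlike FETB, whose natural dual lives on an integer lattice, the half-integrality of FNTB forces the dual domain to be a half-integer lattice, and the submodularity-type inequalities required for L$^{\natural}$-convexity must be checked through the more intricate combinatorial structure of separately-capacitated multiflows. Proving these inequalities, and simultaneously showing that one steepest-descent step reduces to a single max-flow call in the auxiliary network of the announced size, constitutes the technical heart of the argument.
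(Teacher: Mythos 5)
Your high-level plan---dualize, establish discrete convexity, do steepest descent with scaling, and implement each step via max-flow---captures the shape of the paper's argument, and your guess of the auxiliary network size ($kn$ nodes, $m+k^2n$ edges) and the number of phases $O(\log(nUA))$ are both correct. However, there is a genuine gap in your central step: you posit that the Lagrangian dual $\Phi$ is L$^\natural$-convex on a half-integer lattice, and identify proving this as the technical heart. In fact it is not L$^\natural$-convex on a lattice at all, and the paper explicitly remarks that the convexity used is \emph{not} equivalent to ordinary L($^\natural$)-convexity. The dual of FNTB lives naturally on the Cartesian power of the space $\mathbb{S}$ of \emph{subtrees of a star tree} (each node of $V$ gets assigned a subtree of $\mathbb{T}$, not a number), and $\mathbb{S}$ embeds into a median graph $\overline{\mathbb{S}}$ that is a gated amalgam of a grid $(\mathbb{Z}_+^*)^k$ with $k$ ``wing'' grids $\mathbb{Z}_+^*\times(-\mathbb{Z}_+^*)$. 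This amalgam is a median graph but not a grid, so the required notion is the more exotic ``L-convexity on well-oriented median graphs'' from the DCA-beyond-$\mathbb{Z}^n$ literature, checked via submodularity on bounded pairs and convexity along antipodal pairs. If you try to force the dual onto a flat half-integer lattice, the objective will fail the submodularity/antipodality inequalities; the node-capacity terms $\size(p_i)$ and the distance terms $\dist(p_i,p_j)$ are only convex in the correct amalgamated geometry. (Your claim that the FETB dual lives on an integer lattice is also off: there it lives on a product of trees.)

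Two smaller discrepancies: the paper uses cost scaling only, with the $U$-dependence entering because a perturbation that forces strictly positive costs inflates $A$ to $O(mUA)$; your ``capacity-and-cost scaling'' is a different mechanism whose correctness you would still need to establish. And the per-iteration work is an \emph{undirected circulation} feasibility problem built from the subtree-valued potential, which reduces to max-flow of the stated size---your instinct on the sizes is right, but the construction is via node-splitting according to the subtree types of the potentials, not merely ``$k$ copies of $V$ to decouple commodities.'' Finally, note that the separately-capacitated Lov\'asz--Cherkassky theorem (your first ingredient) is in the paper a \emph{byproduct} of the descent algorithm rather than a prerequisite; it cannot be invoked independently to seed the argument. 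The paper also needs an N-convexity sensitivity lemma to bound each scaling phase by $O(m)$ descent steps, which your sketch does not address.
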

Here $n:=\lvert V\rvert$, $m:=\lvert E\rvert$,
$k:=\lvert S\rvert$, $U:=\max_{e\in E}u(e)$, and $A:=\max_{e\in E}a(e)$, and 
$\MF(n',m')$ is the time complexity of
an algorithm for solving the max-flow problem
in the network with $n'$ nodes and $m'$ edges.

As in the ETB case, we explore and utilize 
a new connection between NTB and a multiflow problem.
We introduce a new notion of a \emph{free multiflow with separate node-capacity constraints}
 or simply a \emph{separately-capacitated multiflow}.
Instead of the usual node-capacity constraints,
this multiflow should satisfy 
the separate node-capacity constraints: For each terminal $s\in S$ and each node $i\in V$,
the total flow-value of flows
connecting $s$ to the other terminals and flowing into $i$
is at most the node capacity $c(i)$.

Our second contribution is a min-max theorem for separately-capacitated multiflows, 
which extends Lov\'{a}sz--Cherkassky theorem to the node-capacitated setting 
and implies that a solution of FNTB is precisely the edge-support of
a separately-capacitated multiflow. 
This answers Fukunaga's comment: \emph{how the computation should proceed in the node capacitated setting remains elusive}~\cite[p.~799]{Fukunaga2016Approximating}.   

\begin{theorem}
	\label{thm:lc}
	The maximum flow-value of a separately-capacitated multiflow is equal to
	$(1/2)\sum_{s\in S}\nu_s$,
	where $\nu_s$ is the minimum capacity of an $\{s\}$--$(S\setminus\{s\})$ cut.
	Moreover, a half-integral maximum multiflow exists,
    and it can be found in $O(n\cdot\MF(kn,m+k^2 n))$ time.
\end{theorem}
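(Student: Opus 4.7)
The plan is to establish the $\le$ direction through a per-commodity cut analysis, and then to obtain the $\ge$ direction together with the algorithm by reducing the separately-capacitated multiflow problem to a single-commodity max-flow in a suitable auxiliary network.

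For the upper bound, I would fix any feasible separately-capacitated multiflow $f$ and, for each $s\in S$, let $f_s$ denote the sub-multiflow consisting of paths having $s$ as an endpoint, viewed as a single-commodity $\{s\}$--$(S\setminus\{s\})$ flow in the original network. At each edge $e$ we have $f_s(e)\le f(e)\le u(e)$, and at each non-terminal $i$ the value of $f_s$ passing through $i$ is at most $c(i)$, which is exactly the separate node-capacity constraint at the pair $(s,i)$. Hence $\val(f_s)\le\nu_s$ by the standard max-flow min-cut theorem with both edge and node capacities. Each path of $f$ contributes to exactly two of the $f_s$'s (one per endpoint), so $\sum_{s\in S}\val(f_s)=2\val(f)$, yielding $\val(f)\le(1/2)\sum_{s\in S}\nu_s$.

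To match this bound constructively, I would build an auxiliary directed network $\tilde N$ with $O(kn)$ nodes and $m+O(k^2 n)$ arcs: one ``layer'' per terminal $s$ containing a copy of each non-terminal $i$ with throughput $c(i)$ (enforced by the usual node-splitting trick), coupled at each non-terminal by an $O(k^2)$-size gadget that models the per-commodity constraint, while the original edges are shared across layers with their common capacity $u(e)$. A super-source pushes the $s$-commodity into the $s$-layer and a super-sink absorbs the other-end terminals; a path-decomposition of an integer max-flow in $\tilde N$, halved, then yields a half-integral separately-capacitated multiflow. A single max-flow in $\tilde N$ takes $\MF(kn,m+k^2 n)$ time, and the additional factor $n$ in the claimed running time would come from an outer loop — a capacity-scaling or path-rerouting phase controlled by a potential of size $O(n)$ — that progressively refines the single-commodity flow into one that decomposes into valid inter-terminal paths. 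Half-integrality is then inherited from the integrality of max-flow in $\tilde N$.

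The main obstacle is the design of the coupling gadget and the proof that the min-cut of $\tilde N$ equals $\sum_{s\in S}\nu_s$. Because edge capacities are shared across commodities while node capacities are separated per commodity, the gadget must allow each $s$-layer to independently consume $c(i)$ at a non-terminal $i$ while all layers contribute additively at each shared edge. Showing that every cut in $\tilde N$ decomposes into per-terminal cuts summing to exactly $\sum_{s\in S}\nu_s$ — equivalently, that no cheaper ``mixed'' cut exists — is the technical heart of the argument, and is where I would expect the bulk of the work, and likely the use of the discrete-convexity framework alluded to in the abstract, to lie.
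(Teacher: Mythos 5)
Your upper-bound argument is the same as the paper's: for each $s$, the sub-multiflow $f_s$ is an ordinary $\{s\}$--$(S-s)$ flow obeying both $u$ and $c$, so $\val(f_s)\le\nu_s$ by Menger/max-flow min-cut, and each $S$-path is counted twice across the $f_s$, giving $\val(f)\le(1/2)\sum_s\nu_s$. That part is fine.

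The lower-bound and algorithmic half of your proof does not go through, and the gap is exactly where you flag uncertainty. The idea of a single auxiliary network $\tilde N$ in which layers share edge capacities but consume node capacities separately is not a max-flow instance: a single-commodity max-flow has no mechanism to enforce that several arcs (one per layer) jointly respect a common budget $u(e)$ while each layer independently respects $c(i)$. If you merge the layers at edges to share $u(e)$, you lose the per-layer node constraints; if you keep the layers separate, the edge capacities are not shared. This is not a gadget-design detail you can defer --- it is the reason a separately-capacitated multiflow is genuinely a multicommodity object. Even the edge-capacitated Lov\'asz--Cherkassky theorem has no known proof via a single auxiliary max-flow; the known proofs use flow manipulation (Cherkassky's T-operation) or LP/polymatroid duality, and the paper explicitly notes that its derivation of this node-capacitated extension is not a flow-manipulation argument either. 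Your halving step also doesn't fit your own construction: in $\tilde N$ each source-to-sink path lives in one layer, so there is no double-counting to halve, whereas the factor $1/2$ in the theorem comes from each undirected $S$-path contributing to two of the $f_s$.

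What the paper actually does is reduce to the machinery developed for FNTB: take the instance with unit costs $a\equiv 1$ and requirements $r_s:=\nu_s$ (feasible by the cut condition), run the cut-descent algorithm \textsc{Descent} to obtain a half-integral optimal edge capacity $x$ together with a dual subtree-valued potential $p$ satisfying (C1)--(C5), observe that $x$ comes from a half-integral circulation so $x(\delta i)\in\mathbb{Z}_+$, and then apply \textsc{Decompose} (\cref{cor:halfflow}) to $x$ and $p$ to produce a half-integral separately-capacitated multiflow with $f(s)\ge r_s=\nu_s$ for all $s$. The factor $n$ in the running time is the bound on the number of descent iterations when $A=1$ (via \cref{lem:range} and \cref{lem:descentnum}), not an ad hoc outer loop. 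The decomposition into $S$-paths crucially uses the complementary-slackness structure (C3)--(C4) at the dual potential to route flow through each non-terminal without violating the separate node capacities --- this is the piece your sketch has no substitute for.
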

Here, a \emph{$T$--$T'$ cut} is a pair of an edge-subset $F\subseteq E$ and a node-subset $X\subseteq V\setminus (T\cup T')$
such that removing those subsets disconnects $T$ and $T'$, and its capacity 
is defined as $u(F) + c(X)$.

Our algorithm for Theorem~\ref{thm:main} builds on 
the ideas of \emph{Discrete Convex Analysis (DCA) beyond $\mathbb{Z}^n$} ---
a theory of discrete convex functions 
on special graph structures generalizing $\mathbb{Z}^n$ (the grid graph), 
which has been recently differentiated from the original DCA~\cite{Murota2003Discrete} 
and has been successfully applied to algorithm design 
for well-behaved classes of multiflow and related network design 
problems~\cite{Hirai2015L,Hirai2016Discrete,Hirai2018Dual,Hirai2019cost}.
The algorithm in \cite{Hirai2015L} for FETB was indeed designed by this approach: 
Formulate the dual of FETB as a minimization of an \emph{L-convex function} 
on the (Cartesian) product of trees, 
apply the framework of the \emph{steepest descent algorithm (SDA)}, 
and show that it can be implemented by using a max-flow algorithm as a subroutine.

We formulate the dual of FNTB as an optimization problem on 
the product of the spaces of all subtrees of a fixed tree.
We develop a simple cut-descent algorithm for this optimization problem.
Then we prove that this coincides with SDA for an L-convex function defined on 
the graph structure on the space of all subtrees. 
Then the number of descents is estimated by a general theory of SDA, 
and the cost-scaling method is naturally incorporated to derive the time complexity.
\Cref{thm:lc} is obtained as a byproduct of these arguments. 

The rest of this paper is organized as follows.
In \cref{sec:pre}, we introduce notations and basic notions.
In \cref{sec:algo}, we give the combinatorial algorithm for FNTB
and proves \cref{thm:main,thm:lc} without the time complexity results.
The time complexity results are shown in \cref{sec:disc}
using DCA beyond $\mathbb{Z}^n$.

\paragraph*{Related work.}

ETB is a \emph{survivable network design problem (SND)} with a special
skew-supermodular function,
and NTB is a node connectivity version (NSND)
with a special skew-supermodular biset function.
In his influential paper~\cite{Jain2001Factor}, Jain devised the iterative rounding method,
and obtained a 2-approximation algorithm for SND,
provided that an extreme optimal solution of the LP-relaxation of SND
(with modified skew-supermodular functions) is available.
Fleischer, Jain, and Williamson~\cite{Fleischer2006Iterative} and
Cheriyan, Vempala, and Vetta~\cite{Cheriyan2006Network} extended
this iterative rounding 2-approximation algorithm to some classes of NSND.
One of important open problems in the literature is
a design of a combinatorial 2-approximation algorithm for (V)SND
with the skew-supermodular (biset) function associated with connectivity requirements.
One approach is to devise a combinatorial polynomial time algorithm
to find an extreme optimal solution of its LP-relaxation;
the currently known only polynomial time algorithm
is a general LP-solver (e.g., the ellipsoid method).
Our algorithm for FNTB, though it is the LP-relaxation of a very special NSND,
may give an insight on such a research direction.

The notion of a separately-capacitated multiflow, introduced in this paper,
is a new variation of $S$-paths packing.
As seen in \cite[Chapter~73]{Schrijver2003Combinatorial},
$S$-paths packing is one of the well-studied subjects
in combinatorial optimization.
Recent work~\cite{Iwata201801All} developed a fast algorithm
for half-integral \emph{nonzero $S$-paths packing problem
on a group-valued graph} (with unit-capacity).
Our derivation of \cref{thm:lc} is different from flow-manipulation arguments
such as Cherkassky's T-operation~\cite{Cherkassky1977solution} or those in \cite{Iwata201801All}.
This is a future research to establish such an argument for a separately-capacitated multiflow.
Also, exploring an integer version of Theorem~\ref{thm:lc},
an analogue of Mader's theorem~\cite{Mader1978Uber}, is an interesting future direction.

\section{Preliminaries}
\label{sec:pre}

\subsection{Notation}
Let $\mathbb{Z},\mathbb{Z}_+,\mathbb{R},\mathbb{R}_+$ be
the set of integers, nonnegative integers, reals,
and nonnegative reals, respectively.
Let $\mathbb{Z}^*$ and $\mathbb{Z}^*_+$ be the set of half-integers and
nonnegative half-integers, respectively, i.e.,
 $\mathbb{Z}^*:=\mathbb{Z}/2$ and $\mathbb{Z}^*_+:=\mathbb{Z}_+/2$.
Let $\overline{\mathbb{R}}:=\mathbb{R}\cup\{+\infty\}$ and
$\underline{\mathbb{R}}:=\mathbb{R}\cup\{-\infty\}$.
Let denote $(a)^+:=\max\{a,0\}$ for $a\in \mathbb{R}$.
For a finite set $V$,
we often identify a function $f:V\to \mathbb{R}$ with a vector $f=(f_i)_{i\in V}\in \mathbb{R}^V$, by $f_i=f(i)$.
For a subset $U\subseteq V$,
we denote $f(U):=\sum_{i\in U} f(i)$.
For $i\in V$, its characteristic function $\chi_i:V\rightarrow \mathbb{R}$
is defined by $\chi_i(j)=1$ if $j=i$ and $\chi_i(j)=0$ otherwise.
For $U\subseteq V$, let $\chi_U:=\sum_{i\in U}\chi_i$.
We write $V-s:=V\setminus\{s\}$ for $s\in V$.

In this paper, all graphs are simple.
An edge connecting $i$ and $j$ is denoted by $ij$.
For an undirected graph on a node set $V$,
the set of edges connecting $U_1$ and $U_2$ ($U_1,U_2\subseteq V$)
is denoted by $\delta(U_1,U_2)$.
If $U_2=V\setminus U_1$, we simply denote it by $\delta U_1$.
For a singleton $\{i\}$, we write $\delta i$ to denote $\delta\{i\}$.

\subsection{Undirected Circulation}

Let $(U,F)$ be an undirected graph,
and let $\underline{b}:F\rightarrow\underline{\mathbb{R}}$ and
$\overline{b}:F\rightarrow\overline{\mathbb{R}}$
be lower and upper capacity functions
satisfying $\underline{b}(e)\leq \overline{b}(e)$ for each $e\in F$.
The graph $(U,F)$ may contain self-loops (but no multiedges).
The \emph{(undirected) circulation problem} on $((U,F),\underline{b},\overline{b})$
is the problem of finding an edge-weight $y:F\rightarrow\mathbb{R}$
satisfying $\underline{b}(e)\leq y(e)\leq \overline{b}(e)$ for each $e\in F$
and $\sum_{ij\in F} y(ij)=0$ for each $i\in U$.
Such a $y$ is called a \emph{circulation}.

Let $3^U$ denote the set of pairs $(Y,Z)$ of two subsets $Y,Z\subseteq U$
with $Y\cap Z=\emptyset$.
For $(Y,Z)\in 3^U$,
let $\chi_{Y,Z}:=\chi_Y-\chi_Z\in \mathbb{R}^U$.
Let denote $\chi_{Y,Z}(i_1,i_2,\dotsc,i_t)^+:=(\chi_{Y,Z}(\{i_1,i_2,\dotsc,i_t\}))^+$
(possibly $i_t=i_{t'}$ for some distinct $t,t'$).
Define the \emph{cut function} $\kappa:3^U\rightarrow \underline{\mathbb{R}}$ by
\begin{align*}
    \kappa(Y,Z)&:=
        \sum_{ij\in F}
        \chi_{Y,Z}(i,j)^+\underline{b}(ij)
        -\chi_{Z,Y}(i,j)^+\overline{b}(ij)
        \quad ((Y,Z)\in 3^U).
\end{align*}
See \cref{fig:cutfunc} for $\chi_{Y,Z}(i,j)\ (:=\chi_{Y,Z}(\{i,j\}))$.
It is known that the feasibility of the circulation problem is characterized by this cut function.
We can show it by reducing to Hoffman's circulation theorem.
A cut $(Y,Z)\in 3^U$ with $\kappa(Y,Z)>0$ is called \emph{violating},
and is called \emph{maximum violating} if it attains the maximum $\kappa(Y,Z)$ among all violating cuts.

\begin{figure}[t]
\centering
\includegraphics[scale=0.5]{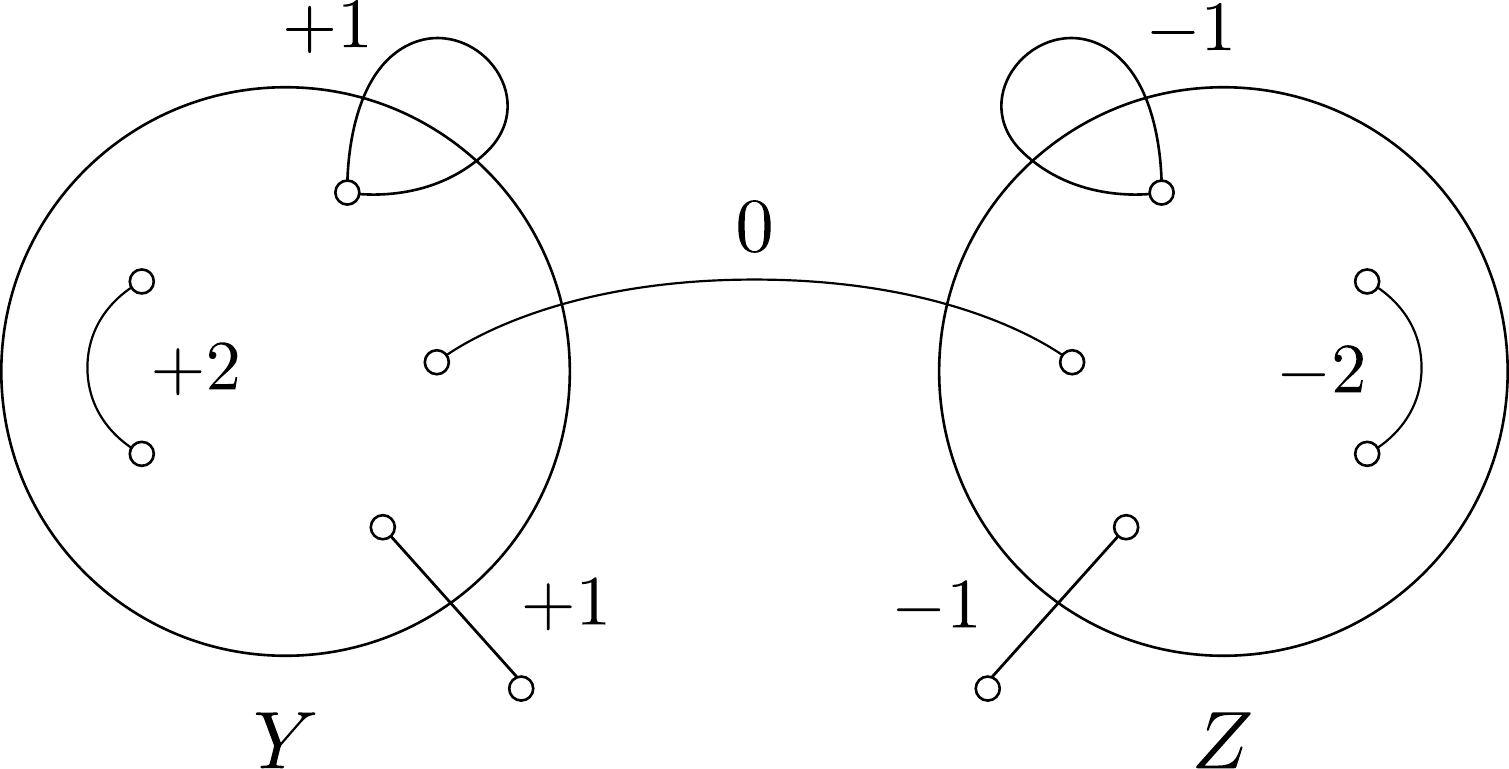}
\caption{$\chi_{Y,Z}(i,j)$.}
\label{fig:cutfunc}
\end{figure}

\begin{lemma}[see, e.g., {\cite[Theorems 2.4, 2.7]{Hirai2019cost}}]
\label{lem:feascirc}
Let $((U,F),\underline{b},\overline{b})$ be an undirected network.
\begin{enumerate}
\renewcommand{\labelenumi}{\textup{(\arabic{enumi})}}
\item The circulation problem is feasible if and only if
$\kappa(Y,Z)\leq 0$ for any $(Y,Z)\in 3^U$.
\item If $\underline{b}$ and $\overline{b}$ are integer-valued,
then there exists a half-integer-valued circulation
$y:E\rightarrow\mathbb{Z}^*$.
\item We can obtain, in $O(\MF(\lvert U\rvert,\lvert F\rvert))$ time,
 a half-integer-valued circulation or a maximum violating cut.
\end{enumerate}
\end{lemma}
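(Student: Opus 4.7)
Plan. For necessity in part~(1), I would aggregate the conservation constraints directly: given any feasible $y$, multiplying $\sum_{ij\ni i}y(ij)=0$ by $\chi_Y(i)-\chi_Z(i)$ and summing over $i\in U$ yields $\sum_{e=ij\in F}\chi_{Y,Z}(i,j)\,y(e)=0$. Splitting this sum according to the sign of $\chi_{Y,Z}(i,j)$ and applying $\underline{b}(e)\le y(e)\le\overline{b}(e)$ lower-bounds the resulting zero by $\kappa(Y,Z)$, so $\kappa(Y,Z)\le 0$ for every $(Y,Z)\in 3^U$.

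For sufficiency in (1), half-integrality in (2), and the algorithmic statement (3), I would reduce to the classical integer-valued, algorithmic version of Hoffman's directed circulation theorem on a suitable auxiliary graph, as hinted at by the paper. The construction I have in mind is a double cover of $(U,F)$: take two node copies $U^+\sqcup U^-$, replace each undirected edge $e=ij$ by a pair of directed arcs $(i^+,j^-)$ and $(j^+,i^-)$ with capacity window $[\underline{b}(e),\overline{b}(e)]$, and add auxiliary arcs joining $i^+$ and $i^-$ whose bounds are tuned so that the directed conservation at $i^+$ and $i^-$ together encodes the unsigned conservation $\sum_{ij\ni i}y(ij)=0$ in $(U,F)$. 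Under this dictionary, an integer directed circulation decodes, by averaging the two arcs associated with each undirected edge, to a half-integer undirected circulation on $(U,F)$; conversely, directed cuts of the auxiliary graph project to pairs $(Y,Z)\in 3^U$ with matching cut value $\kappa(Y,Z)$. Hoffman's integer circulation theorem then yields both sufficiency in (1) and half-integrality in (2). For (3), the standard max-flow realization of Hoffman's feasibility test applied to the auxiliary graph (which has $O(|U|)$ nodes and $O(|F|)$ arcs) returns in one call either an integer directed circulation, decoded to a half-integer $y$, or a minimum directed cut, decoded to a maximum violating $(Y,Z)$, within $O(\MF(|U|,|F|))$ time.

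The main obstacle will be pinning down the capacities on the $i^+\leftrightarrow i^-$ arcs so that three conditions hold simultaneously: the encoding is faithful (directed feasibility matches undirected feasibility), directed cuts project bijectively onto $3^U$ with preserved $\kappa$-weight, and integrality upstairs descends to half-integrality downstairs. Once this bookkeeping is set correctly, all three parts follow from Hoffman's theorem and its algorithmic form essentially as a black box; everything else is routine.
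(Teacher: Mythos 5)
Your plan matches the paper's one-line hint exactly---the paper does not prove this lemma but cites \cite{Hirai2019cost} and notes only that it can be ``shown by reducing to Hoffman's circulation theorem''---and the overall architecture (aggregate conservation for necessity; bipartite double cover plus directed integral Hoffman for sufficiency, half-integrality, and the algorithm) is sound. Two remarks on what you flag as the main obstacle. First, the auxiliary $i^+\leftrightarrow i^-$ arcs can simply be omitted (equivalently, given bounds $[0,0]$): since every cover arc runs from the $+$ side to the $-$ side, directed conservation at $i^+$ and at $i^-$ already forces $\sum_{j}y(i^+,j^-)=0$ and $\sum_{j}y(j^+,i^-)=0$, and averaging the two arc copies of each undirected edge then yields an undirected circulation, half-integral if $y$ is integral. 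Second, and more substantively, the cut correspondence is \emph{not} a bijection onto $3^U$: a Hoffman cut $W$ in the double cover may contain both $i^+$ and $i^-$, so ``decoded to a maximum violating $(Y,Z)$'' hides a genuine uncrossing step. Setting $Y:=\{i:\ i^+\in W,\ i^-\notin W\}$ and $Z:=\{i:\ i^-\in W,\ i^+\notin W\}$, an expansion of $\kappa(Y,Z)$ minus the Hoffman cut value of $W$ collapses to a sum of terms of the form $(\text{nonpositive coefficient})\cdot(\underline{b}(e)-\overline{b}(e))\ge 0$, so the uncrossed pair is at least as violating and the maximum is preserved; you should make this calculation explicit rather than asserting a projection. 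With those two refinements your reduction closes all three parts.
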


\subsection{Fractional NTB}
\label{subsec:lp}

Let $((V,E),S,u,c,a,r)$ be a network.
We assume that $S=\{1,\dotsc,k\}$ in this paper.
A sufficient and necessity condition for the feasibility of NTB
is easily derived from the Menger's theorem as follows.
A \emph{biset} is a pair of node subsets $X,X^+\subseteq V$ with $X\subseteq X^+$.
We write $\hat{X}=(X,X^+)$ for a biset.
Let $\Gamma(\hat{X}):=X^+\setminus X$,
and let $\delta(\hat{X}):=\delta(X,V\setminus X^+)$.
For $s\in S$, let $\mathcal{C}_s$ be a family of bisets defined by
\[
\mathcal{C}_s:=\{(X,X^+)\mid s\in X\subseteq X^+
\subseteq V\setminus (S-s)\}.
\]
Let $\mathcal{C}:=\bigcup_{s\in S}\mathcal{C}_s$.
Then an edge-capacity $x:E\rightarrow\mathbb{Z}_+$ is feasible if and only if
\begin{equation}
\label{cond:ntbfeas}
x(\delta\hat{X})+c(\Gamma(\hat{X}))\geq r(s)\quad (s\in S,\ \hat{X}\in \mathcal{C}_s).
\end{equation}
We assume that $u$ satisfies \cref{cond:ntbfeas} throughout the paper
(otherwise NTB is infeasible).

As in \cite{Fukunaga2016Approximating},
we consider the following LP-relaxation problem FNTB:
\begin{align}
\text{(FNTB)\quad Minimize}\quad & \sum_{e\in E}a(e)x(e)\notag\\
\label{eq:ptb1}\text{subject to}\quad & x(\delta \hat{X})+c(\Gamma(\hat{X}))\geq r(s)\quad (s\in S,\ \hat{X}\in \mathcal{C}_s),\\
\label{eq:ptb2}& 0\leq x(e)\leq u(e)\quad (e\in E).
\end{align}
From the above assumption, the polytope defined by
\cref{eq:ptb1,eq:ptb2} is nonempty.
Also, it is known~\cite[Corollary~3.3]{Fukunaga2016Approximating}
that the polytope is half-integral.
Thus FNTB has a half-integral optimal solution.
In \cite[Lemma~4.4]{Fukunaga2016Approximating},
a general LP solver was used for obtaining such a solution.
The goal of this paper is to develop a combinatorial algorithm for FNTB.

\subsection{Reduction}
\label{subsec:red}

For a technical argument, we convert the original instance
to an equivalent instance satisfying the following condition:
\begin{itemize}
\setlength{\itemindent}{15pt}
\setlength{\labelsep}{10pt}
\item[\textbf{(CP)}] $a(e)>0$ for all $e\in E$.
\end{itemize}
If $Z:=\{e\in E\mid a(e)=0\}$ is nonempty,
we use the following perturbation technique based on \cite{Goldberg1997Scaling,Karzanov1994Minimum}.
Recall that $U$ is the maximum edge capacity.
Define a positive edge-cost $a':E\to \mathbb{Z}_+$ by
$a'(e):=1$ for $e\in Z$ and $a'(e):=(2U\lvert Z\rvert+1)a(e)$ for $e\notin Z$.
Let $x^*$ be a half-integral optimal solution for FNTB under the edge-cost $a'$
(it exists by the half-integrality).
We prove that $x^*$ is also optimal under the original edge-cost $a$.
It suffices to show that
$\sum_{e\in E}a(e)x^*(e)\leq \sum_{e\in E}a(e)x(e)$
for any feasible half-integral edge-capacity $x$.
Observe that
$(2U\lvert Z\rvert+1)(\sum_{e\in E}a(e)x^*(e)
    -\sum_{e\in E}a(e)x(e))
=\sum_{e\in E}a'(e)x^*(e)-\sum_{e\in E}a'(e)x(e)-x^*(Z)+x(Z)
\leq U\lvert Z\rvert$
and thus $\sum_{e\in E}a(e)x^*(e)-\sum_{e\in E}a(e)x(e)\leq
U\lvert Z\rvert/(2U\lvert Z\rvert+1)<1/2$.
By the half-integrality, we obtain $\sum_{e\in E}a(e)x^*(e)-\sum_{e\in E}a(e)x(e)\leq 0$.

The maximum edge-cost of the resulting instance becomes $O(mUA)$.
In the rest of this paper,
we assume that the original instance itself satisfies \textbf{(CP)}
unless we discuss the time complexity result.

\subsection{The Space of Subtrees in a Tree}
\label{subsec:space_of}

We utilize a combinatorial dual problem for FNTB.
In this subsection, we introduce the underlying space of the dual problem,
which we call the \emph{subtree space}.

For each $s\in S$, consider an infinite path graph $P_s$
with one endpoint.
Glue those $k\,(=|S|)$ endpoints, and denote the resulting graph by $\mathbb{T}$.
We also use $P_s$ and $\mathbb{T}$ to represent the node sets of those graphs.
We give length 1/2 for each edge in $\mathbb{T}$.
The glued endpoint is denoted by 0,
and the point in $P_s$ ($s\in S$) having the distance $l$ from $0$
is denoted by $(l,s)$.
We denote the set of all subtrees of $\mathbb{T}$ by $\mathbb{S}=\mathbb{S}(\mathbb{T})$.
If a subtree $T$ does not contain $0$,
then it is contained in some $P_s$; see the right side of \cref{fig:dtb}.
Such a subtree $T$ is said to be of \emph{$s$-type}
and is denoted by $[l,l']_s$,
where $(l,s)$ and $(l',s)$ are the closest and farthest
nodes from $0$ in $T$, respectively.
If a subtree $T$ contains 0,
then it is said to be of \emph{0-type}
and is denoted by a $k$-tuple $[l_1,l_2,\dotsc,l_k]=[l_{s}]_{s\in S}$,
where $(l_{s},s)$ is the node in $T\cap P_s$ farthest from 0 for each $s\in S$.
We also use $[0,l']_s$ for denoting a 0-type subtree
which is contained in $P_s$.
We identify a node on $\mathbb{T}$ with
a subtree consisting of this node only.

In some cases, it is useful
 to denote an $s$-type subtree $[l,l']_s$
 by a $k$-tuple $(-l,-l,\dotsc,l',\dotsc,-l)$
whose $s$-th element is $l'$ and the other elements are all $-l$.
Then any subtree is represented as a $k$-tuple in a unified way.
The subtree space $\mathbb{S}$ can be seen as
the set of vectors $(T_s)_{s\in S}\in (\mathbb{Z}^*)^k$ satisfying
\begin{equation}
\label{eq:subtree}
 T_s \geq 0\ (\forall s\in S) \text{ or } 
 T_s\geq -T_t = - T_{t'} >0\ (\exists s \in S,\ \forall t,t' \in S-s).
\end{equation}

For an $s$-type subtree $T=[l,l']_s\in \mathbb{S}$,
let $\size(T):=l'-l$.
For a 0-type subtree $T=[l_{s}]_{s\in S}\in \mathbb{S}$,
let $\size_s(T):=l_s$ for $s\in S$,
and $\size(T):=\sum_{s=1}^k \size_s(T)$.
For two subtrees $T,T'\in \mathbb{S}$, we denote
the minimum distance between $T$ and $T'$ on $\mathbb{T}$ by $\dist(T,T')$,
i.e., $\dist(T,T'):=\min\{d_{\mathbb{T}}(v,v')\mid v\in T,\ v'\in T'\}$,
where $d_\mathbb{T}$ denotes the distance on $\mathbb{T}$.

\section{Algorithm}
\label{sec:algo}

In this section, we develop a combinatorial algorithm for FNTB with \textbf{(CP)}.
Our algorithm is based on a combinatorial dual problem of FNTB.
In \cref{subsec:dual}, we introduce the duality
and give the outline of the algorithm.
We discuss details in \cref{subsec:opt,subsec:direction},
and give a full description in \cref{subsec:algo}.
This algorithm itself is pseudo-polynomial time,
and in \cref{subsec:scaling}, we combine it with the cost-scaling method
to obtain a weakly polynomial time algorithm.

\subsection{Combinatorial Duality for FNTB}
\label{subsec:dual}

We formulate a dual of FNTB as a problem of
assigning a subtree of $\mathbb{T}$ for each node $i\in V$.
That is, subtrees are viewed as node-potentials.
So we use $p_i$ and $p:V\to \mathbb{S}$ for denoting a subtree
assigned for node $i\in V$ and a potential function, respectively.
Formally, let us consider the following optimization problem DTB over subtree-valued potentials:
\begin{align}
\text{(DTB)\quad Maximize} \quad & \sum_{s\in S}r_s\dist(0,p_s)
-\sum_{i\in V\setminus S}c_i\size(p_i)
-\sum_{ij\in E}u_{ij}(\dist(p_i,p_j)-a_{ij})^+\notag\\
\text{subject to}\quad & p:V\rightarrow \mathbb{S},\notag\\
\label{cond:dtbfeas}
& p_s\in P_s\quad (s\in S).
\end{align}
It turns out, in the proof of \cref{prop:weakdual},
that this seemingly strange formulation of DTB is essentially the LP-dual of FNTB.
If $p:V\rightarrow \mathbb{S}$ satisfies \cref{cond:dtbfeas},
then it is called a \emph{potential}.
See \cref{fig:dtb} for an intuition of a subtree-valued potential $p$.
In the figure, $1\in S$ is mapped to a node on $P_1$,
and $i\in V\setminus S$ is mapped to a 1-type subtree.
A potential $p$ is said to be \emph{proper}
if the minimal subtree containing all $p_s\ (s\in S)$ also contains all $p_i\ (i\in V)$.
The following weak duality holds.

\begin{figure}[t]
\centering
\includegraphics[scale=0.6]{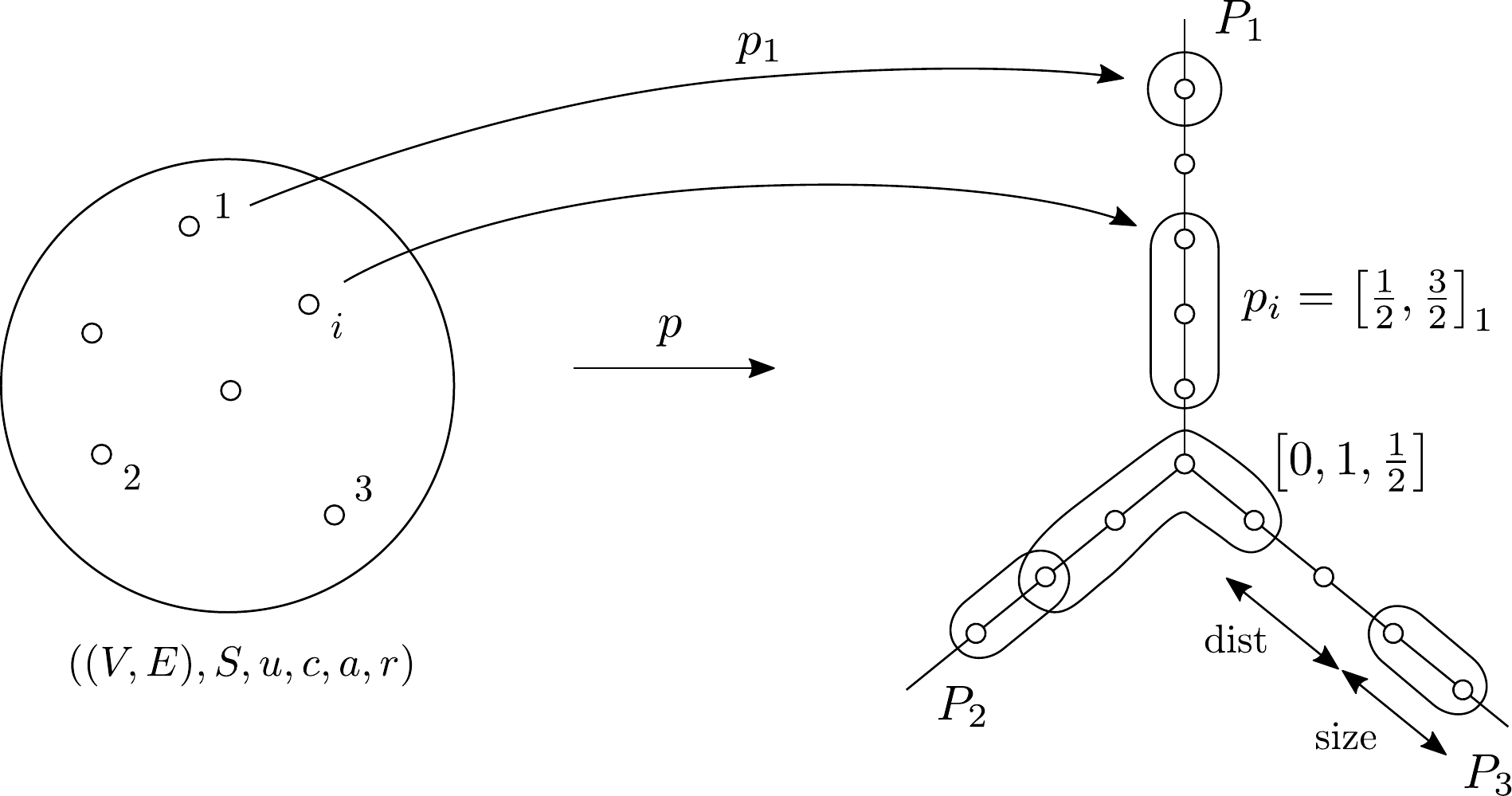}
\caption{A subtree-valued potential $p$.}
\label{fig:dtb}
\end{figure}

\begin{proposition}
\label{prop:weakdual}
The optimal value of FNTB is at least that of DTB.
Moreover, there exists a proper optimal potential for DTB.
\end{proposition}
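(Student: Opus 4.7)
The plan is to establish weak duality by dualizing FNTB in the LP sense and exhibiting, for each subtree-valued potential $p$, a feasible LP-dual solution $(y,z)$ whose objective equals the DTB value of $p$. The LP-dual of FNTB has a nonnegative variable $y_{s,\hat X}$ for every $s\in S$, $\hat X\in\mathcal C_s$, and a nonnegative variable $z_e$ for every $e\in E$, subject to $\sum_{(s,\hat X)\,:\,e\in\delta\hat X}y_{s,\hat X}-z_e\leq a(e)$. I would first reduce to the case of a \emph{proper} $p$ via the projection described below, and then ``slice'' each subtree at every edge of $\mathbb T$: for each $s\in S$ and each positive half-integer $l\leq\dist(0,p_s)$, writing $P_s^{\geq l}:=\{(l',s)\mid l'\geq l\}$, define
\[
X(s,l):=\{i\in V\mid p_i\subseteq P_s^{\geq l}\},\qquad X^+(s,l):=\{i\in V\mid p_i\cap P_s^{\geq l}\neq\emptyset\}.
\]
Then $\hat X(s,l)\in\mathcal C_s$, because $p_s=(\dist(0,p_s),s)\in P_s^{\geq l}$ forces $s\in X(s,l)$, and $p_t\cap P_s^{\geq l}=\emptyset$ for $t\in S-s$ keeps $X^+(s,l)$ disjoint from $S-s$. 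Set $y_{s,\hat X(s,l)}:=1/2$, all other $y$'s zero, and $z_e:=(\dist(p_i,p_j)-a(e))^+$ for $e=ij\in E$.

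The crux is to verify three identities that match the three terms of the DTB objective:
$\sum_{\hat X\in\mathcal C_s}y_{s,\hat X}=\dist(0,p_s)$;
$\sum_{(s,\hat X):\,i\in\Gamma(\hat X)}y_{s,\hat X}=\size(p_i)$ for $i\in V\setminus S$;
$\sum_{(s,\hat X):\,e\in\delta\hat X}y_{s,\hat X}=\dist(p_i,p_j)$ for $e=ij$.
The first is immediate from the definition. The second follows from the observation that $i\in\Gamma(\hat X(s,l))$ iff $p_i$ contains the $\mathbb T$-edge between $(l-1/2,s)$ and $(l,s)$, together with the fact that $p_i$ has exactly $2\,\size(p_i)$ edges. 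The third uses that $\mathbb T$ is a tree, so $\dist(p_i,p_j)$ equals $1/2$ times the number of edges of $\mathbb T$ separating $p_i$ from $p_j$, and one checks that $e=ij\in\delta\hat X(s,l)$ holds precisely when the edge $\epsilon(s,l)$ of $\mathbb T$ separates $p_i$ from $p_j$. The third identity directly yields LP-dual feasibility for $z_e=(\sum_{(s,\hat X):\,e\in\delta\hat X}y-a(e))^+$, and summing up, the LP-dual objective $\sum_{s,\hat X}(r(s)-c(\Gamma(\hat X)))y-\sum_e u(e)z_e$ coincides with the DTB value of $p$. LP weak duality then gives DTB$(p)\leq$ FNTB-optimal.

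For the existence of a \emph{proper} optimum I would use a projection onto the minimal $0$-type subtree $T^*:=[\dist(0,p_s)]_{s\in S}$ containing all $p_s$. Given any optimum $p$, replace each $p_i$ by its metric projection $p_i':=\{\mathrm{proj}_{T^*}(x)\mid x\in p_i\}$ onto $T^*$. Since $\mathbb T$ is a tree and $T^*$ is convex, projection onto $T^*$ is non-expansive and never enlarges subtrees: $\size(p_i')\leq\size(p_i)$, $\dist(p_i',p_j')\leq\dist(p_i,p_j)$, and $p_s'=p_s$ because $p_s\in T^*$. Each term of the DTB objective is therefore weakly improved, so $p'$ is still optimal and now proper. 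Existence of an optimum itself follows from the upper bound DTB $\leq$ FNTB $<\infty$ combined with the fact that, after projection, attention is confined to proper potentials, for which a bound on $\dist(0,p_s)$ in terms of the input reduces the search to a finite set. The main difficulty I anticipate is identity (C) — correctly pairing edges of $\mathbb T$ that separate $p_i$ from $p_j$ with bisets $\hat X(s,l)$ having $e\in\delta\hat X(s,l)$ — and this is exactly where properness is essential: without it, an edge of $p_i$ past the tip $(\dist(0,p_s),s)$ would fail to be counted by any $\hat X(s,l)\in\mathcal C_s$, breaking identity (B).
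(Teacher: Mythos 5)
Your construction is essentially identical to the paper's: the metric projection onto the minimal $0$-type subtree $T^*$ spanning $\{p_s\}_{s\in S}$ is exactly the paper's coordinate-wise $\min$-clipping by $L_s$, and your sliced bisets $\hat X(s,l)$ are exactly the paper's $(X_e,X_e^+)$ indexed by edges $e$ of $\mathbb{T}$, with the same three identities \eqref{eq:pot1}--\eqref{eq:pot3} driving weak duality. The only imprecision is that $y_{s,\hat X(s,l)}:=1/2$ must \emph{accumulate} over all half-integers $l$ yielding the same biset --- the paper's $\pi(\hat X)=\tfrac12\bigl\lvert\{e:\hat X=(X_e,X_e^+)\}\bigr\rvert$ --- since otherwise identity (A) fails whenever several consecutive slices of $P_s$ produce the same cut.
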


\begin{proof}
Let $p:V\rightarrow\mathbb{S}$ be any potential (not necessarily proper).
For each $s\in S$, suppose that $p_s$ is written as $p_s=(L_s,s)$ where $L_s\in \mathbb{Z}^*_+$.
Define a new proper potential $p':V\rightarrow\mathbb{S}$ by
\[
    p'_i:=\begin{cases}
        [\min\{l,L_s\},\min\{l',L_s\}]_s & \text{if $p_i=[l,l']_s$ ($s$-type),}\\
        [\min\{l_1,L_1\},\dotsc,\min\{l_k,L_k\}] & \text{if $p_i=[l_1,\dotsc,l_k]$ (0-type).}
    \end{cases}
\]
Then the objective function does not decrease.
This implies the latter part of the statement.

We next show the former part, i.e., the weak duality.
The LP dual of FNTB is written as
\begin{align*}
\text{Maximize} \quad & \sum_{s\in S}\sum_{\hat{X}\in\mathcal{C}_s}
    (r_s-c(\Gamma(\hat{X})))\pi(\hat{X})
	-\sum_{e\in E}u_e\left(\sum_{\hat{X}\in\mathcal{C}:e\in\delta\hat{X}}\pi(\hat{X})-a_e\right)^+\\
\text{subject to}\quad & \pi:\mathcal{C}\rightarrow\mathbb{R}_+.
\end{align*}
We show that for any proper potential $p:V\rightarrow\mathbb{S}$,
we can construct $\pi:\mathcal{C}\rightarrow\mathbb{R}_+$ such that
\begin{align}
\label{eq:pot1}\sum_{\hat{X}\in \mathcal{C}_s} \pi(\hat{X})&=\dist(0,p_s)
\quad (s\in S),\\
\label{eq:pot2}\sum_{\hat{X}\in \mathcal{C}:i\in \Gamma(\hat{X})} \pi(\hat{X})&=\size(p_i)
\quad (i\in V\setminus S),\\
\label{eq:pot3}\sum_{\hat{X}\in \mathcal{C}:ij\in \delta \hat{X}} \pi(\hat{X})&=\dist(p_i,p_j)
\quad (ij\in E).
\end{align}
By $\sum_{\hat{X}\in\mathcal{C}}c(\Gamma(\hat{X}))\pi(\hat{X})
=\sum_{\hat{X}\in\mathcal{C}}\sum_{i\in \Gamma(\hat{X})}c_i\pi(\hat{X})
=\sum_{i\in V\setminus S}c_i\sum_{\hat{X}\in\mathcal{C}:i\in\Gamma(\hat{X})}\pi(\hat{X})$,
the weak duality follows from \eqref{eq:pot1}--\eqref{eq:pot3}.

Let $e$ be an edge in $\mathbb{T}$.
We define a biset $(X_e,X_e^+)$ as follows.
When we remove $e$ from $\mathbb{T}$, there appear two connected components.
Let $T_e$ be the component which does not contain $0\in \mathbb{T}$.
Let $X_e$ be the set of nodes $i\in V$ such that $p_i$ is contained in $T_e$.
Let $X'_e$ be the set of nodes $i\in V$ such that
 $p_i$ is not contained in the component $\mathbb{T}\setminus (T_e\cup\{e\})$.
Since $p$ is proper, if $e$ is an edge in $P_s$ and $X_e\neq\emptyset$, then $(X_e,X_e^+)\in \mathcal{C}_s$.
Observe that $p_i$ contains edge $e$ if and only if $i\in \Gamma(X_e,X^+_e)$.
Also $e$ belongs to the shortest path between $p_i$ and $p_j$
if and only if $i\in X_e$ and $j\notin X^+_e$, or $i\notin X^+_e$ and $j\in X_e$,
i.e., $ij\in \delta(X_e,X^+_e)$.
Thus a potential function $\pi:\mathcal{C}\rightarrow\mathbb{R}_+$ defined by
\[
    \pi(\hat{X}):=\frac{1}{2}\lvert \{e\mid \hat{X}=(X_e,X_e^+)\}\rvert\quad (\hat{X}\in\mathcal{C})
\]
satisfies \eqref{eq:pot1}--\eqref{eq:pot3}.
\end{proof}

We remark that the technique used in the above proof is based on a tree representation of a laminar biset family;
see also \cite{Hirai2013Half} for related arguments.
Our algorithm below will give an algorithmic proof of the strong duality.

We next derive from \cref{prop:weakdual} the complementary slackness condition.
Let $p:V\to \mathbb{S}$ be a proper potential.
According to $p$, we decompose $V$ into $S\cup V_0\cup \bigcup_{s\in S}V_s$, where
\begin{align*}
    V_0&:=\{i\in V\setminus S \mid \text{$p_i$ is of 0-type}\},\\
    V_s&:=\{i\in V\setminus S \mid \text{$p_i$ is of $s$-type}\}.
\end{align*}
In the next lemma, we see that we may only consider edges $ij\in E$
satisfying $\dist(p_i,p_j)\geq a_{ij}$ (by the condition (C2)).
Let denote the set of such edges by
\[
    E_p:=\{ij\in E\mid \dist(p_i,p_j)\geq a_{ij}\}.
\]
For $i\in V_0$ and $s\in S$,
let $\delta_s i$ be the set of edges $ij$ in $\delta i \cap E_p$
with $j \in V_s$.
For $i\in V_s\ (s\in S)$,
there appear two connected components
when we remove $p_i$ from $\mathbb{T}$;
one includes $0\in \mathbb{T}$, and the other does not.
By \textbf{(CP)},
for a node $j\in V$ with $ij\in \delta i\cap E_p$,
its potential $p_j$ is contained in one of the two components.
Let $\delta_0 i$ be the set of edges $ij$ in $\delta i \cap E_p$
such that $p_j$ is in the component including $0$.
Let $\delta_s i:=(\delta i\cap E_p)\setminus \delta_0 i$.
See \cref{fig:dtb_pot} for the definition.

\begin{figure}[t]
\centering
\includegraphics[scale=0.62]{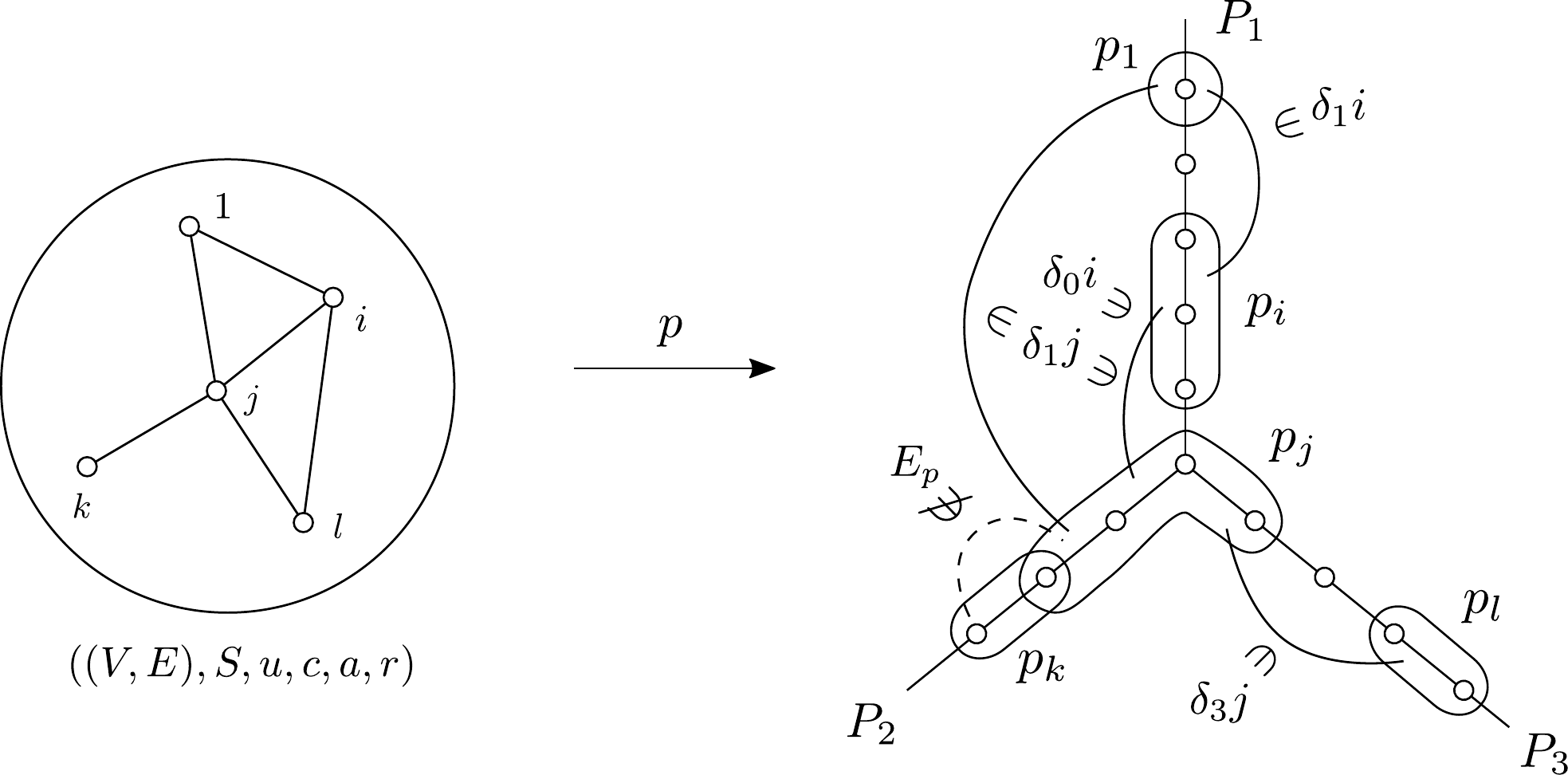}
\caption{$\delta_0 i$ and $\delta_1 i$ for $i\in V_1$
and $\delta_s j\ (s\in S)$ for $j\in V_0$.}
\label{fig:dtb_pot}
\end{figure}

\begin{lemma}
\label{lem:slack}
Let $x:E\rightarrow\mathbb{R}_+$ be
an edge-capacity function with $0\leq x\leq u$,
and let $p:V\rightarrow\mathbb{S}$ be a proper potential.
If $x$ and $p$ satisfy the following conditions (C1--5),
then $x$ and $p$ are optimal solutions for FNTB and DTB, respectively:

\begin{enumerate}
\renewcommand{\labelenumi}{\textup{(C\arabic{enumi})}}
\item For each $ij\in E$, if $\dist(p_i,p_j)>a_{ij}$, then $x_{ij}=u_{ij}$.
\item For each $ij\in E$, if $\dist(p_i,p_j)<a_{ij}$, then $x_{ij}=0$.
\item For each $i\in V_s\ (s\in S)$, it holds $x(\delta_0 i)=x(\delta_s i)\leq c_i$.
    If $\size(p_i)>0$, then $x(\delta_0 i)=x(\delta_s i)=c_i$.
\item For each $i\in V_0$ and $s\in S$, it holds $x(\delta_s i)\leq c_i$
	 and $x(\delta_s i)\leq \sum_{s'\in S-s} x(\delta_{s'} i)$.
    If $\size_s(p_i)>0$, then $x(\delta_s i)=c_i$.
\item For each $s\in S$, it holds $x(\delta s)\geq r_s$.
    If $\dist(0,p_s)>0$, then $x(\delta s)=r_s$.
\end{enumerate}
\end{lemma}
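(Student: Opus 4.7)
The plan is to combine the weak duality $\text{DTB}(p)\leq \text{FNTB}(x)$ from \cref{prop:weakdual} with the reverse inequality $\text{FNTB}(x)\leq \text{DTB}(p)$ derived from (C1--5); equality of the two objective values then forces both $x$ and $p$ to be optimal for their respective programs.

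First, I would use (C1) and (C2) to eliminate the nonlinear $(\cdot)^+$ in the edge penalty of the DTB objective. A case analysis on the sign of $\dist(p_i,p_j) - a_{ij}$ (positive, zero, or negative) yields
\[
u_{ij}\bigl(\dist(p_i,p_j) - a_{ij}\bigr)^+ = x_{ij}\dist(p_i,p_j) - a_{ij}x_{ij}\qquad (ij\in E),
\]
so that substituting this into DTB$(p)$ gives
\[
\text{DTB}(p) = \text{FNTB}(x) + \sum_{s\in S} r_s\dist(0,p_s) - \sum_{i\in V\setminus S} c_i\size(p_i) - \sum_{ij\in E} x_{ij}\dist(p_i,p_j).
\]
It then remains to prove that the three-term remainder on the right is zero.

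Next I would reuse the weights $\pi:\mathcal{C}\to\mathbb{R}_+$ built in the proof of \cref{prop:weakdual}. The identities \cref{eq:pot1,eq:pot2,eq:pot3} convert the remainder into
\[
\sum_{\hat X\in\mathcal{C}} \pi(\hat X)\bigl[r_{s(\hat X)} - c(\Gamma(\hat X)) - x(\delta\hat X)\bigr],
\]
where $s(\hat X)$ is the unique $s$ with $\hat X\in\mathcal{C}_s$. By feasibility \cref{eq:ptb1}, every bracket is $\leq 0$, so the whole sum is $\leq 0$ (this is weak duality again); equality holds exactly when the cut is $x$-tight, $x(\delta\hat X) + c(\Gamma(\hat X)) = r_{s(\hat X)}$, for every $\hat X$ with $\pi(\hat X)>0$. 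From the construction of $\pi$, such $\hat X$ are precisely the bisets $(X_e, X_e^+)$ arising from edges $e$ on the path in $\mathbb{T}$ from $0$ to $p_s$ for some $s\in S$.

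The main obstacle is verifying this cut tightness from the purely local conditions (C3--5), which I would handle by induction on the distance of $e$ from $0$ along $P_s$. The base case is the edge incident to $0$: here $X_e = \{s\}\cup V_s$, so the tightness $x(\delta s) = r_s$ from (C5) initiates the identity, while the balance $x(\delta_0 i) = x(\delta_s i)$ from (C3) cancels the net contribution of every $V_s$-node to $x(\delta(X_e, X_e^+))$. For the inductive step, when $e$ advances one half-unit outward along $P_s$, the sets $X_e$ and $\Gamma(\hat X_e)$ change by identifiable groups of $V_s$- and $V_0$-nodes whose potentials begin, end at, or contain the current point $(l,s)$; (C3) and (C4) supply the tight-capacity equations $x(\delta_s i)=c_i$ (which hold precisely when the relevant size parameter is positive) and the flow-conservation equalities needed to annihilate the net change in $x(\delta\hat X)+c(\Gamma(\hat X))$. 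Careful bookkeeping of which nodes transition across the boundary is the main technical burden; once completed, cut tightness holds for every relevant $\hat X$, the three-term remainder vanishes, and optimality of $x$ and $p$ follows.
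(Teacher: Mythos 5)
Your plan silently assumes that $x$ is feasible for FNTB — you invoke \cref{eq:ptb1} to say that each bracket $r_{s(\hat X)} - c(\Gamma(\hat X)) - x(\delta\hat X)$ is nonpositive — but feasibility is not part of the hypothesis. The lemma only gives you $0\leq x\leq u$ together with the local balance conditions (C1--5), and the conclusion that $x$ is ``optimal for FNTB'' is precisely what asserts feasibility. Showing that the remainder vanishes would establish $\sum_e a(e)x(e) = \text{DTB}(p)$, and weak duality would then certify that this value lower-bounds the FNTB optimum, but nothing in your argument places $x$ inside the feasible region. The paper devotes the first half of its proof to exactly this point: it runs \textsc{Decompose} to build a separately-capacitated multiflow $f$ saturating $x$ with $f(s)\geq r_s$, which certifies feasibility (and is reused in \cref{cor:halfflow,thm:lc}). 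Without some such argument — or an independent proof that (C1--5) imply the biset constraints for every $\hat X\in\mathcal{C}_s$, not just those arising from $p$ — the proposal is incomplete.

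On the optimality part your route is genuinely different from the paper's: instead of computing the objective gap \cref{eq:pridual_diff} via the multiflow $f$, you push the computation through the $\pi$-weights of \cref{prop:weakdual} and aim for cut-tightness on the support of $\pi$. This is a reasonable alternative in spirit, but the induction ``one half-unit outward along $P_s$'' is only sketched and the bookkeeping of which nodes in $V_0$ and $V_s$ cross from $V\setminus X_e^+$ into $\Gamma(\hat X_e)$ or $X_e$ is where the conditions (C3), (C4) actually have to be cashed in; as written, it is an intention rather than a proof. If you complete both pieces — feasibility and the tightness induction — you will have a correct proof, but you should be aware that you lose the multiflow byproduct that the paper's decomposition-based argument produces for free.
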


\begin{proof}
Suppose $x$ and $p$ satisfy (C1--5).
For the feasibility of $x$, it is sufficient to show that,
for each $s\in S$,
there exists a flow satisfying the capacities $x$ and $c$
that connects $s$ and $S-s$ with flow-value $r_s$.
We prove a stronger result that $x$ can be decomposed into
 a separately-capacitated multiflow with sufficient flow-values.
An \emph{$S$-path} is a path connecting distinct terminals.
See the algorithm \textsc{Decompose} (\cref{algo:decompose}),
 which takes $x$ as an input
and outputs a function $\lambda:\mathcal{P}\rightarrow\mathbb{R}_+$,
where $\mathcal{P}$ is a set of $S$-paths.

\begin{algorithm}[t]
\caption{\textsc{Decompose}}
\label{algo:decompose}
\begin{enumerate}
\setcounter{enumi}{-1}
\item Let $\mathcal{P}=\emptyset$.
\item Take $s_0\in S$ and an edge $s_0 j$ satisfying $x(s_0 j)>0$.
 If such a pair does not exist, then stop;
 output $(\mathcal{P},\lambda)$.
 Otherwise, let $j_0\leftarrow s_0,\ j_1\leftarrow j,\ \mu\leftarrow x(s_0 j),\ 
 t\leftarrow 1$.
\item If $j_t$ is a terminal, then add $P=(j_0,j_1,\dotsc,j_t)$ to $\mathcal{P}$
 and let $\lambda(P):=\mu>0$.
 Update $x(e)\leftarrow x(e)-\mu$ on each edge $e$ in $P$, and return to Step 1.
 Otherwise go to Step 3.
\item Suppose $j_t\in V_s\ (s\in S)$.
 Since $x(j_{t-1}j_t)>0$, it holds $j_{t-1}j_t\in \delta_s {j_t}$
 or $j_{t-1}j_t\in \delta_0 {j_t}$ by (C2).
 If the former is the case, take $j_tj_{t+1}\in \delta_0 {j_t}$ with $x(j_tj_{t+1})>0$.
 Such an edge exists by the former part of (C3).
 If the latter is the case, take $j_tj_{t+1}\in \delta_s {j_t}$
 with $x(j_tj_{t+1})>0$.
 Update $\mu\leftarrow \min\{\mu,x(j_tj_{t+1})\},\ t\leftarrow t+1$,
 and return to Step 2.

 Suppose $j_t\in V_0$. As we will show later, it holds $j_{t-1}j_t\in \delta_{s_0} {j_t}$.
 Take $s\in S-s_0$ with maximum $x(\delta_{s} {j_t})\ (>0)$,
 and take $j_tj_{t+1}\in \delta_{s} {j_t}$ with $x(j_tj_{t+1})>0$.
 Such an edge exists by $x(j_{t-1}j_t)>0$ and the former part of (C4).
 Update
 \[
    \mu\leftarrow\min\left\{\mu,x(j_tj_{t+1}),
            \frac{1}{2}\min_{s'\in S\setminus \{s_0,s\}}
			\left\{\sum_{s''\in S-s'} x(\delta_{s''} {j_t})-x(\delta_{s'} {j_t})\right\}\right\},
\]
 and $t\leftarrow t+1$.
 Note that $\mu>0$ by the maximality of $x(\delta_{s} {j_t})$.
 Return to Step 2.
\end{enumerate}
\end{algorithm}

\begin{claim}
The number of iterations of \textsc{Decompose} is at most $O(m+kn)$.
Each iteration can be done in $O(n)$ time.
\end{claim}

\begin{proof}
Suppose that we add $(j_0,j_1,\dotsc,j_\ell)$ to $\mathcal{P}$ in Step 2.
For each $t=1,\dotsc,\ell-1$, observe that
$p_{j_{t+1}}$ is at a side opposite to $p_{j_{t-1}}$ based on $p_{j_{t}}$.
By \textbf{(CP)} and (C2),
$\{j_{t-1},j_{t},j_{t+1}\}$ are distinct.
Thus we have
\[
\dist(p_{j_{t-1}},p_{j_{t+1}})=
\begin{dcases*}
	\dist(p_{j_{t-1}},p_{j_t})+\size(p_{j_{t}})+\dist(p_{j_{t}},p_{j_{t+1}})
		& if $j_{t}\in \bigcup_{s\in S}V_s$,\\
	\dist(p_{j_{t-1}},p_{j_t})+\size_{s}(p_{j_{t}})+\size_{s'}(p_{j_{t}})+\dist(p_{j_{t}},p_{j_{t+1}})
		& if $j_{t}\in V_0$,
\end{dcases*}
\]
where $j_{t-1}\in V_s$ and $j_{t+1}\in V_{s'}$ ($s\neq s'$)
for the case $j_t\in V_0$.
Since $\mathbb{T}$ is a tree and $\dist(p_{j_{t}},p_{j_{t+1}})>0$ for each $t$,
we can see
\begin{equation}
\label{eq:shortestpath}
    \dist(p_{j_0},p_{j_\ell})=\sum_{0\leq t\leq \ell-1} \dist(p_{j_{t}},p_{j_{t+1}})+\sum_{1\leq t\leq \ell-1,\ t\neq t'}\size(p_{j_{t}})
        +\size_{j_0}(p_{j_{t'}})+\size_{j_\ell}(p_{j_{t'}}),
\end{equation}
where $j_{t'}\in V_0$ (if such $t'$ exists); see also \cite[Lemma 3.9]{Hirai2015L}.
Hence $(j_0,j_1,\dotsc,j_\ell)$ is a ``shortest path on $\mathbb{T}$'' from $j_0$ to $j_\ell$,
and $j_0,\dotsc,j_\ell$ are all distinct.

By the distinctness, Step 3 is executed at most $\lvert V\rvert$ times
per an $S$-path $P$.
Also the algorithm keeps (C2) and the former parts of (C3--4).
To see it for (C4), suppose that the algorithm adds an $S$-path $(j_0,j_1,\dotsc,j_t,\dotsc,j_\ell)$ to $\mathcal{P}$ in Step 2,
where $j_0=s\in S$, $j_t\in V_0$ and $j_\ell=s'\in S$.
By the above argument, such $t$ is uniquely determined (if exists).
Then for all $s''\neq s,s'$, we have $\sum_{s'''\in S-s''}x(\delta_{s'''} {j_t})-x(\delta_{s''} {j_t})\geq 2\mu$.
Thus after the decrease of the value of $x$ along with $P$, it satisfies
$\sum_{s'''\in S-s''}x(\delta_{s'''} {j_t})-x(\delta_{s''} {j_t})\geq 0$.

After the decrease along with an $S$-path,
it becomes $x(e)=0$ for at least one edge $e\in E$,
or becomes $\sum_{s'\in S-s}x(\delta_{s'} {i})-x(\delta_s {i})=0$
for at least one pair of $i\in V_0$ and $s\in S$.
The algorithm keeps those values to be zero in the remaining execution,
implying that it terminates after adding at most $O(m+kn)$ paths to $\mathcal{P}$.
To see it, suppose that after the decrease,
it becomes $\sum_{s'\in S-s}x(\delta_{s'} {i})-x(\delta_s {i})=0$ for $i\in V_0$ and $s\in S$.
If the algorithm chooses an $S$-path $(j_0,\dotsc,j_t=i,\dotsc,j_\ell)$ for adding to $\mathcal{P}$
in the remaining execution,
by the maximality of $x(\delta_s i)$,
 it should satisfy that $j_{t-1}j_t\in \delta_s {i}$
or $j_tj_{t+1}\in \delta_s {i}$.
Thus $\sum_{s'\in S-s}x(\delta_{s'} {i})-x(\delta_s {i})$ does not change
 by the decrease along with $(j_0,\dotsc,j_\ell)$.
\end{proof}

We next show that the output $f=(\mathcal{P},\lambda)$
is a separately-capacitated multiflow of the network $((V,E),S,x,c)$.
We first observe that any edge $e\in E$ satisfies $x(e)=0$ at the end of \textsc{Decompose}.
In fact, if there exists an edge $e\in E$ with $x(e)> 0$,
then we can construct an $S$-path with edges having positive $x$-values
by repeating to apply the former parts of (C3--4).

Let $f(e):=\sum_{P\in \mathcal{P}:e\in P}\lambda(P)$ for $e\in E$,
and let $f(i):=\sum_{P\in\mathcal{P}:i\in P}\lambda(P)$ for $i\in V$.
Also let $\mathcal{P}_s\subseteq \mathcal{P}$ be
the subset of paths connecting $s$ to other terminals,
and let $f_s=(\mathcal{P}_s,\lambda_s:=\lambda|_{\mathcal{P}_s})$ for $s\in S$.
Clearly, $f$ satisfies the edge-capacity (by $f(e)=x(e)$).
For $i\in V_s\ (s\in S)$, if a path $P\in\mathcal{P}$ goes through $i$,
then $P$ must be contained in $\mathcal{P}_s$.
Thus by the former part of (C3),
 $f_s(i)=f(i)\leq x(\delta_0 {i})\ (=x(\delta_s {i})) \leq c(i)$,
and $f_{s'}(i)\leq f_s(i)\leq c(i)$ for any other $s'\in S-s$.
For $i\in V_0$,
 if a path in $\mathcal{P}_s\ (s\in S)$ goes through $i$,
then it must do that via an edge in $\delta_s {i}$.
Thus by the former part of (C4), we have $f_s(i)\leq x(\delta_s {i})\leq c(i)$.
Hence $f$ is a separately-capacitated multiflow.
Moreover, $f_s$ satisfies the requirement $r$ by the former part of (C5)
(and $f(e)=x(e)$ for all $e\in E$).
This proves that $x$ is a feasible solution of FNTB.

We next show the optimality of $x$ and $p$.
We see that
\begin{align}
&\quad\sum_{ij\in E} a_{ij}x_{ij}
    -\sum_{s\in S}r_s\dist(0,p_s)
    +\sum_{i\in V\setminus S}c_i\size(p_i)
	+\sum_{ij\in E}u_{ij}(\dist(p_i,p_j)-a_{ij})^+\notag\\
&=\sum_{ij\in E} (\dist(p_i,p_j)-a_{ij})^+(u_{ij}-x_{ij})
	+\sum_{ij\in E} (a_{ij}-\dist(p_i,p_j))^+ x_{ij}
	+\sum_{ij\in E} x_{ij}\dist(p_i,p_j)\notag\\
&\quad +\sum_{i\in V\setminus S}c_i\size(p_i)
    -\sum_{s\in S}r_s\dist(0,p_s)\notag\\
&=\sum_{ij\in E} (\dist(p_i,p_j)-a_{ij})^+(u_{ij}-x_{ij})
	+\sum_{ij\in E} (a_{ij}-\dist(p_i,p_j))^+x_{ij}\notag\\
&\quad +\sum_{s\in S}\sum_{i\in V_s}(c_i-f(i))\size(p_i)
	+\sum_{i\in V_0}\sum_{s\in S}(c_i-f_s(i))\size_s(p_i)
    +\sum_{s\in S}(f(s)-r_s)\dist(0,p_s).\label{eq:pridual_diff}
\end{align}
The first equality comes from $a+(d-a)^+=d+(a-d)^+$ for $a,d\in \mathbb{R}$.
The second equality comes from \cref{eq:shortestpath} and
\begin{align*}
&\quad\sum_{ij\in E} f(ij)\dist(p_i,p_j)
    +\sum_{s\in S}\sum_{i\in V_s}f(i)\size(p_i)
    +\sum_{i\in V_0}\sum_{s\in S}f_s(i)\size_s(p_i)\\
&=\sum_{ij\in E}\sum_{P\in\mathcal{P},ij\in E(P)} \lambda(P)\dist(p_i,p_j)\\
    &\qquad\qquad+\sum_{s\in S}\sum_{i\in V_s}\sum_{P\in\mathcal{P},i\in V(P)} \lambda(P)\size(p_i)
    +\sum_{i\in V_0}\sum_{s\in S}\sum_{P\in\mathcal{P}_s,i\in V(P)} \lambda_s(P)\size_s(p_i)\\
&=\sum_{st}\sum_{P\in\mathcal{P}_s\cap\mathcal{P}_t}
    \lambda(P)\dist(p_s,p_t)
    =\sum_{s\in S}f(s)\dist(0,p_s).
\end{align*}
We see $f(i)=x(\delta_0 {i})$ ($=x(\delta_s {i})$)
for $i\in V_s\ (s\in S)$,
and $f_s(i)=x(\delta_s {i})$ for $i\in V_0$ and $s\in S$.
Also $f(s)=x(\delta s)$ for $s\in S$.
Then \cref{eq:pridual_diff} is zero by (C1--2) and the latter parts of (C3--5).
By \cref{prop:weakdual}, we conclude that $x$ and $p$ are both optimal.
\end{proof}

The algorithm \textsc{Decompose} is based on \cite[Lemma 4.5]{Hirai2013Half};
see also \cite[Lemma 3.3]{Hirai2018Dual}.

\begin{corollary}
\label{cor:halfflow}
Let $x$ be a solution of FNTB satisfying (C1--5)
 with a potential for DTB.
Then there exists a separately-capacitated multiflow $f$
 such that $f(e)=x(e)$ for all $e\in E$ and $f(s)\geq r_s$ for all $s\in S$.
If $x$ is half-integral and $x(\delta i)\in \mathbb{Z}_+$ for any $i\in V$,
then $f$ can be taken as half-integer-valued.
Given $x$, such a flow $f$ can be obtained in $O((m+kn)n)$ time.
\end{corollary}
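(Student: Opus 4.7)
The plan is to obtain the multiflow directly by invoking the algorithm \textsc{Decompose} from the proof of \cref{lem:slack}, and then verify that it preserves half-integrality under the stated hypothesis. The existence of $f$ with $f(e)=x(e)$ on every edge and $f(s)\geq r_s$ for every terminal is essentially already established: the proof of \cref{lem:slack} shows that when $x$ together with a proper potential $p$ satisfies (C1--5), the output $(\mathcal{P},\lambda)$ of \textsc{Decompose} is a separately-capacitated multiflow whose edge-load equals $x$ (since every edge has $x(e)=0$ when the algorithm halts), and that at each terminal $s$ the total flow-value is $x(\delta s)\geq r_s$ by (C5). So the first sentence of the corollary is just a restatement of what has already been proven.

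For half-integrality, my plan is to trace the values of the scalar $\mu$ along the execution. Initially $\mu=x(s_0 j)\in\mathbb{Z}^*_+$ since $x$ is half-integral. In Step 3, the update $\mu\leftarrow\min\{\mu,x(j_tj_{t+1})\}$ keeps $\mu$ half-integral. The only nontrivial case is the update at a node $j_t\in V_0$, where an additional quantity $\tfrac12\bigl(\sum_{s''\in S-s'}x(\delta_{s''}j_t)-x(\delta_{s'}j_t)\bigr)$ enters the minimum. The key observation is that, because (C2) forces $x(e)=0$ on every edge $e\in\delta j_t\setminus E_p$, the sum $\sum_{s''\in S}x(\delta_{s''}j_t)$ is exactly $x(\delta j_t)$, which is assumed to be an integer. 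Combined with $2x(\delta_{s'}j_t)\in\mathbb{Z}$ (since $x$ is half-integral), the numerator is integral and hence this extra term is itself half-integral. Therefore $\mu\in\mathbb{Z}^*_+$ at every step, so every $\lambda(P)$ returned is in $\mathbb{Z}^*_+$ and $f=\sum_P\lambda(P)\chi_P$ is half-integral.

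I also need to confirm that the invariants needed for the above induction are preserved across iterations. After subtracting $\mu$ along a path $P$, each internal node $i\in V\setminus S$ on $P$ has $x(\delta i)$ decreased by $2\mu\in\mathbb{Z}$, so the hypothesis $x(\delta i)\in\mathbb{Z}_+$ persists for nodes in $V\setminus S$ (which is all that is used in the $\mu$-update at a node $j_t\in V_0$). The integrality of $x(\delta s)$ for $s\in S$ may be lost along the way, but it is never consulted in the update formula, so the argument goes through. This completes the half-integrality claim.

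The time bound is already proved inside the proof of \cref{lem:slack}: the claim embedded there shows that \textsc{Decompose} terminates after adding at most $O(m+kn)$ paths to $\mathcal{P}$ and each iteration (traversal of a shortest $\mathbb{T}$-path together with the constant-time $\mu$-update) takes $O(n)$ time, giving $O((m+kn)n)$ overall. So the main obstacle is not really an obstacle — it is a careful book-keeping of parities/half-integrality — and once the identification $\sum_{s''}x(\delta_{s''}j_t)=x(\delta j_t)$ via (C2) is noted, the rest is immediate.
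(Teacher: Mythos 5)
Your proof is correct and takes the same route as the paper's, which compresses the whole argument into a two-line assertion that \textsc{Decompose} keeps $\mu$ half-integral, $x$ half-integral, and $x(\delta i)$ integral. You supply the substantive justification that the paper omits: the identity $\sum_{s''\in S}x(\delta_{s''}j_t)=x(\delta j_t)$ at a $V_0$-node $j_t$ (valid because (C2) together with \textbf{(CP)} zeroes out every edge at $j_t$ outside $E_p$ as well as every edge to another $V_0$-node), which makes the quantity inside the $\frac{1}{2}\min(\cdots)$ equal to the integer $x(\delta j_t)-2x(\delta_{s'}j_t)$; and the invariant that the $S$-paths extracted by \textsc{Decompose} are simple, so each internal node $i\in V\setminus S$ on a path sees exactly two incident edges decremented by $\mu$, hence $x(\delta i)$ drops by $2\mu\in\mathbb{Z}$ and stays integral. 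You also correctly spot a slight imprecision in the paper: the phrase ``keeps the integrality of $x(\delta i)$ for all $i\in V$'' overreaches, since a terminal $s$ on a path loses integrality of $x(\delta s)$ whenever $\mu$ is a proper half-integer. Your observation that the $\mu$-update only ever consults $x(\delta j_t)$ for $j_t\in V_0$ is the right fix, and it matches how the paper actually invokes \cref{cor:halfflow} in the proof of \cref{thm:lc}, where $x(\delta i)\in\mathbb{Z}_+$ is established only for $i\in V\setminus S$.
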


\begin{proof}
For such $x$,
 \textsc{Decompose} always takes $\mu$ as half-integral,
and keeps the half-integrality of $x$ and the integrality of $x(\delta i)$ for all $i\in V$.
Thus the output is a half-integer-valued multiflow.
\end{proof}

As we will see in the next subsection,
 the existence of an edge-capacity $x$ satisfying (C1--5)
can be checked by solving an undirected circulation problem.
This suggests a simple descent algorithm \textsc{Descent} for DTB and FNTB.
Notice that a potential $p:V\to \mathbb{S}$ can be identified with a vector in $\mathbb{S}^n$.
For brevity, we write $p\in \mathbb{S}^n$ below.
Let $h:\mathbb{S}^n\rightarrow \overline{\mathbb{R}}$ be
a function defined by
\begin{equation}
\label{eq:horig}
    h(p):=\begin{dcases*}
		-\sum_{s\in S}r_s\dist(0,p_s)+\sum_{i\in V\setminus S}c_i\size(p_i)\\
		\qquad +\sum_{ij\in E}u_{ij}(\dist(p_i,p_j)-a_{ij})^+
		 & if $p\in \mathbb{S}^n$ is a potential,\\
		\infty & otherwise.
	\end{dcases*}
\end{equation}
Then DTB is precisely minimization of $h$ over $\mathbb{S}^n$.

\textsc{Descent} starts with the initial potential $p\equiv 0$
(i.e., $p_i=0\in\mathbb{S}$ for any $i\in V$).
At the beginning of each iteration,
check the sufficiency of the optimality of $p$
 by finding $x$ satisfying (C1--5).
If $x$ is found, then $x$ and $p$ are optimal.
Otherwise find $q\in \mathbb{S}^n$ with $h(q)<h(p)$,
update $p$ by $q$, and repeat this iteration.
It turns out that $q$ is also computed by the (same) undirected circulation problem.
In the following subsections, we introduce the undirected circulation problem
and discuss how to find $x$ or $q$ in each iteration.
We will give the detail of \textsc{Descent} in \cref{subsec:algo}.

\subsection{Checking Optimality}
\label{subsec:opt}

For a given proper potential $p\in\mathbb{S}^n$,
the existence of $x:E\to \mathbb{R}_+$ satisfying (C1--5)
reduces to an undirected circulation problem on the following network 
$\mathcal{N}_p:=((U,F),\underline{c},\overline{c})$.
See \cref{fig:bisubflow} for the construction.

\begin{itemize}
\item For each $i\in V_s\ (s\in S)$,
 divide $i$ into a two-node set $U_i:=\{i^0,i^s\}$,
 and connect the nodes by an edge $e_i:=i^0i^s$.
For representing (C3), let
\[
	\underline{c}(e_i):=-c_i,\quad
	\overline{c}(e_i):=\begin{dcases*}
		0 & if $\size(p_i)=0$,\\
		-c_i & if $\size(p_i)>0$.
	\end{dcases*}
\]
\item For each $i\in V_0$, divide $i$ into a $2k$-node set $\tilde{U_i}:=U_i^0\cup U_i$,
where $U_i^0:=\{i^{1,0},i^{2,0},\dotsc,i^{k,0}\}$ and
$U_i:=\{i^{1},i^{2},\dotsc,i^{k}\}$.
For all $s\in S$, connect $i^{s,0}$ and $i^{s}$
by an edge $e_i^s:=i^{s,0}i^{s}$ (for representing (C4)) with capacity
\[
	\underline{c}(e_i^s):=-c_i,\quad
	\overline{c}(e_i^s):=\begin{dcases*}
		0 & if $\size_s(p_i)=0$,\\
		-c_i & if $\size_s(p_i)>0$
	\end{dcases*}\quad (s\in S).
\]
Connect each pair of nodes in $U_i^0$ by an edge of
lower capacity 0 and upper capacity $\infty$.

\item For each $s\in S$, let $s^0:=s$ and $U_s:=\{s^0\}$,
and add a self-loop $e_s:=s^0s^0$.
For representing (C5), let
\[
	\underline{c}(e_s):=\begin{dcases*}
		-\infty & if $\dist(0,p_s)=0$,\\
		-r_s & if $\dist(0,p_s)>0$,
	\end{dcases*}
	\quad\overline{c}(e_s):=-r_s.
\]
\end{itemize}

\begin{figure}[t]
\centering
\includegraphics[scale=0.9]{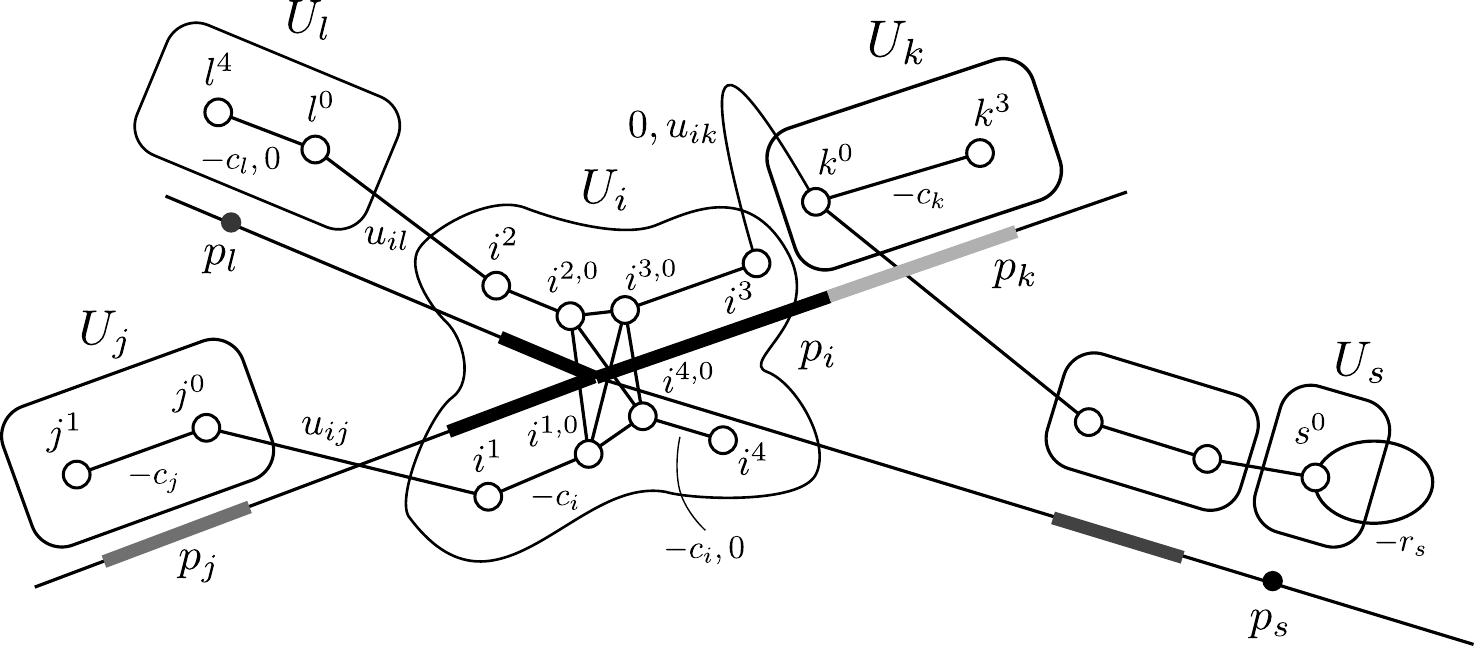}
\caption{The undirected network $\mathcal{N}_p$.}
\label{fig:bisubflow}
\end{figure}

We replace endpoints of each edge $ij\in E_p$ as follows.

\begin{itemize}
\item If $i\in V_0$ and $j\in V_s$, then replace $ij$ with $i^{s}j^0$.
\item If $i\in V_s$ and $j\in V_{s'}$ ($s\neq s'$),
then replace $ij$ with $i^0j^0$.
\item If $i,j\in V_s$ and $p_i$ is closer to 0 than $p_j$,
i.e., $\dist(0,p_i)<\dist(0,p_j)$, then replace $ij$ with $i^sj^0$.
\end{itemize}
For each replaced edge $e$ coming from $ij\in E_p$, let
\[
	\underline{c}(e):=\begin{dcases*}
		0 & if $\dist(p_i,p_j)=a_{ij}$,\\
		u_{ij} & if $\dist(p_i,p_j)>a_{ij}$,
	\end{dcases*}
	\quad\overline{c}(e):=u_{ij}.
\]

The node set $U$ and the edge set $F$ are defined as the union of all nodes and edges
in the above construction, respectively.

\begin{theorem}
\label{thm:bisub_opt}
Let $\mathcal{N}_p=((U,F),\underline{c},\overline{c})$ be
the undirected network
constructed from a proper potential $p:V\rightarrow\mathbb{S}$.
Suppose that $\mathcal{N}_p$ has a (half-integer-valued)
circulation $y:F\rightarrow\mathbb{R}$.
Then an edge-capacity function $x:E\rightarrow\mathbb{R}_+$ defined by
\[
    x(e):=\begin{cases}
        y(e) & \text{if $e\in E_p$},\\
        0 & \text{otherwise}
    \end{cases}
\]
satisfies (C1--5).
\end{theorem}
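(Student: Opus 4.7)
Translate the edge-capacity bounds and flow-conservation equations of the circulation $y$ on $\mathcal{N}_p$ into conditions (C1)--(C5), gadget by gadget. Setting $x(e) := y(e)$ on $E_p$ and $x(e) := 0$ otherwise, the bound $0 \le x \le u$ is immediate: every replaced $E_p$-edge satisfies $\underline{c}(e) \in \{0, u_{ij}\}$ and $\overline{c}(e) = u_{ij}$, while edges outside $E_p$ are absent from $\mathcal{N}_p$. Conditions (C1) and (C2) are then read off directly from these bounds: if $\dist(p_i, p_j) > a_{ij}$ then $\underline{c}(e) = \overline{c}(e) = u_{ij}$, pinning $x(ij) = u_{ij}$; if $\dist(p_i, p_j) < a_{ij}$ then $ij \notin E_p$ and $x(ij) = 0$ by construction.

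\textbf{(C3) and (C4).} At $i \in V_s$ with $s \in S$, the gadget is the two-node fragment $i^0, i^s$ joined by $e_i$ with $y(e_i) \in [-c_i, 0]$, tight at $-c_i$ when $\size(p_i) > 0$. The replacement rules route the $E_p$-edges of $\delta_0 i$ to $i^0$ and those of $\delta_s i$ to $i^s$; to verify this, I will do a case analysis on whether the other endpoint lies in $V_0$, $V_s$, $V_{s'}$ for $s' \neq s$, or $S$, using the position of $p_i$ on $P_s$ and the properness of $p$. Flow conservation at $i^0$ and at $i^s$ then yield $x(\delta_0 i) = -y(e_i) = x(\delta_s i)$, and the capacity bound on $e_i$ delivers (C3). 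At $i \in V_0$ the gadget $\tilde{U_i}$ separates each $i^s$ from $i^{s,0}$ by the edge $e_i^s$, with $y(e_i^s) \in [-c_i, 0]$ and tightness when $\size_s(p_i) > 0$, and the $E_p$-edges of $\delta_s i$ attach at $i^s$; conservation at $i^s$ gives $x(\delta_s i) = -y(e_i^s)$, which yields both the capacity bound $\le c_i$ and its tightness. The averaging inequality $x(\delta_s i) \le \sum_{s' \in S-s} x(\delta_{s'} i)$ is simply the feasibility condition for the infinite-capacity clique on $U_i^0$ to route the aggregate ``demands'' $\{-y(e_i^{s'})\}_{s' \in S}$ with non-negative edge weights, and since $y$ itself supplies such a clique flow, it holds automatically.

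\textbf{(C5) and main obstacle.} Conservation at the terminal node $s^0$, together with the self-loop bound $y(e_s) \le -r_s$, yields $x(\delta s) = -y(e_s) \ge r_s$; when $\dist(0, p_s) > 0$ the loop also satisfies $y(e_s) \ge -r_s$, so the inequality is tight. The substantive part of the argument is not in any individual conclusion but in the edge-routing bookkeeping for (C3) and (C4): one must verify that the replacement rules correctly partition the $E_p$-edges at each $i \in V_s$ into $\delta_0 i \cup \delta_s i$, and the $E_p$-edges at each $i \in V_0$ into $\bigcup_{s \in S} \delta_s i$ as indexed by the clique node $i^s$. Once this accounting is pinned down, every clause of (C1)--(C5) reduces to reading one capacity bound together with one or two conservation equations of $y$.
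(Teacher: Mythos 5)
Your proposal is correct and follows essentially the same approach as the paper's own (very terse) proof: read each of (C1)--(C5) off the capacity bounds and flow-conservation equations of the circulation, gadget by gadget. The one place your phrasing differs — you justify the averaging inequality in (C4) by appealing to feasibility of routing the demands $\{-y(e_i^{s'})\}_{s'}$ on the clique $U_i^0$, whereas the paper gives the direct chain $x(\delta_s i)=\sum_{s'\neq s}y(i^{s,0}i^{s',0})\leq\sum_{s'\neq s}(-y(i^{s',0}i^{s'}))=\sum_{s'\neq s}x(\delta_{s'}i)$ — is a cosmetic difference, not a different route.
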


\begin{proof}
We can check (C1--5) from the construction of $\mathcal{N}_p$ easily.
For example, the former part of (C4) follows from
$x(\delta_s i)=-y(i^{s,0}i^{s})\leq -\underline{c}(i^{s,0}i^{s})=c_i$ and
\[
    x(\delta_s i)=-y(i^{s,0}i^{s})
    =\sum_{s'\neq s} y(i^{s,0}i^{s',0})
    \leq \sum_{s'\neq s} -y(i^{s',0}i^{s'})
    =\sum_{s'\neq s} x(\delta_{s'} i).
\]
Also the latter part of (C4) follows from
$x(\delta_s i)=-y(i^{s,0}i^{s})\geq -\overline{c}(i^{s,0}i^{s})=c_i$
for $i\in V_0$ and $s\in S$ with $\size_s(p_i)>0$.
\end{proof}

\subsection{Finding a Descent Direction}
\label{subsec:direction}

If the algorithm in \cref{lem:feascirc} outputs a circulation
in $\mathcal{N}_p$, then an optimal edge-capacity is computed from the circulation,
and $p$ is optimal by \cref{lem:slack,thm:bisub_opt}.
Otherwise the algorithm outputs a maximum violating cut.
We can find $q\in\mathbb{S}^n$ with $h(q)<h(p)$
using the maximum violating cut.
This implies the necessity of \cref{lem:slack}
and the strong duality of \cref{prop:weakdual}.

We find such a  $q\in\mathbb{S}^n$ with the property that
 that $p_i$ is of $s$-type (i.e., $p_i\subseteq P_s\setminus\{0\}$)
implies $q_i\subseteq P_s$.
Let $\tilde{U} :=\bigcup_{i\in V} U_i$.
In the view of \cref{subsec:space_of}, such a subtree-valued potential 
$q \in \mathbb{S}^n$ is identified with $q \in  (\mathbb{Z}^*)^{\tilde U}$ by the following correspondence:
\begin{align}
q_s=(l,s) & \Leftrightarrow q(s^0)=-l \leq 0 \quad (s \in S), \label{eq:s_in_S}\\
  q_i = [l,l']_s & \Leftrightarrow q(i^s) =l'\geq l=-q(i^0) \geq 0 \quad (i \in V_s,\ s\in S), \label{eq:i_in_Vs} \\
 q_i =(l_s)_{s \in S} & \Leftrightarrow q(i^s) = l_s \quad (i \in V_0,\ s \in S), \label{eq:i_in_V0}
\end{align}
where 
$(l_s)_{s \in S} = (q(i^s))_{s \in S}$ must 
satisfy \cref{eq:subtree} in \cref{eq:i_in_V0}. 
The correspondence \eqref{eq:s_in_S}--\eqref{eq:i_in_V0}
can be considered as 
a projection from $(\mathbb{Z}^*)^{kn}$ to $(\mathbb{Z}^*)^{\tilde{U}}$.
We will particularly care with the ``becoming $s$-type'' move 
$(0,\ldots,0, l_s,0,\ldots,0) \to (- 1/2,\ldots,-1/2,l_s',-1/2,\ldots,-1/2)$.

The movement from $p$ to $q$ is done
according to the intersection pattern of
a maximum violating cut $(Y,Z)$ with $U_i$.
For a cut $(Y,Z)\in 3^U$, let $\tilde{\chi}_{Y,Z}$ denote
 the restriction of $\chi_{Y,Z}\in \mathbb{Z}^U$ to $\tilde{U}$.
Then we can consider a vector
\begin{equation}
p^{Y,Z}:=p+ \frac{1}{2} \tilde{\chi}_{Y,Z} \in (\mathbb{Z}^*)^{\tilde{U}}.
\end{equation}
A cut  $(Y,Z)$ is said to be \emph{movable}
if $p^{Y,Z} \in ({\mathbb{Z}^*})^{\tilde U}$ is regarded as a potential in $\mathbb{S}^n$
by the correspondence \eqref{eq:s_in_S}--\eqref{eq:i_in_V0}.
Since no condition on $U_i^0$
is imposed, different movable cuts
may determine the same subtree-valued potential.
To eliminate such redundancy, 
we additionally impose the following normalization condition on movable cuts $(Y, Z)$:
\begin{itemize}
\item
For each $i \in V_0$, if $(p^{Y,Z})_i$ becomes $s$-type, 
then $i^{s,0} \in Z$ and $i^{s',0} \in Y$ for all $s' \in S-s$.
Otherwise $Y\cap U_i^0=Z\cap U_i^0=\emptyset$.
\end{itemize}
We will see in the proof of Theorem~\ref{thm:map}
that we can make $(Y,Z)$ satisfy this condition without decreasing 
the cut value $\kappa$. 

We further introduce two types of movable cuts.
Let define a partition $(U^\uparrow,U^\downarrow)$ of $U$ by
\begin{align*}
    U^\uparrow&:=\{j\in\tilde{U} \mid \text{$p(j)$ is integral}\}
		        \cup \textstyle\bigcup_{i\in V_0} U_i^0,\\
	U^\downarrow&:=\{j\in\tilde{U} \mid \text{$p(j)$ is non-integral}\},
\end{align*}
where $p$ is regarded as $p:\tilde U \to \mathbb{Z}^*$
 by \eqref{eq:s_in_S}--\eqref{eq:i_in_V0}.
A movable cut $(Y,Z)\in 3^U$ is \emph{upward-movable} (resp. \emph{downward-movable})
if $Y\cap U^\downarrow=Z\cap U^\downarrow=\emptyset$
(resp. $Y\cap U^\uparrow=Z\cap U^\uparrow=\emptyset$).
Let  $\mathcal{M}^\uparrow$ and $\mathcal{M}^\downarrow$ denote the sets of all upward-movable cuts and downward-movable cuts, respectively.
An upward-movable (resp. downward-movable) cut $(Y,Z)\in 3^U$
is \emph{maximum} if it attains the maximum $\kappa(Y,Z)$ among
all upward-movable (resp. downward-movable) cuts.
The goal of this subsection is to show the following.

\begin{theorem}
\label{thm:map}
\leavevmode
\begin{enumerate}
\renewcommand{\labelenumi}{\textup{(\arabic{enumi})}}
\item For $(Y,Z)\in \mathcal{M}^\uparrow\cup \mathcal{M}^\downarrow$,
    it holds $h(p^{Y,Z})-h(p)=-\kappa(Y,Z)/2$.
\item Given a maximum violating cut, 
we can compute in $O(kn)$ time, a maximum upward-movable cut and 
a maximum downward-movable cut, 
where at least one of these movable cuts is violating.
\end{enumerate}
\end{theorem}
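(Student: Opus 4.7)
My plan for part (1) is to verify the identity $h(p^{Y,Z}) - h(p) = -\kappa(Y,Z)/2$ by a direct, term-by-term computation. I will decompose $h$ into its three summands---the terminal rewards $-r_s \dist(0, p_s)$, the node-capacity penalties $c_i \size(p_i)$ or $c_i \size_s(p_i)$, and the edge-capacity penalties $u_{ij}(\dist(p_i, p_j) - a_{ij})^+$---and decompose $\kappa(Y,Z)$ into contributions from the corresponding edges of $\mathcal{N}_p$ (the self-loops $e_s$, the edges $e_i$ for $i \in V_s$, the edges $e_i^s$ together with the intra-$U_i^0$ edges for $i \in V_0$, and the replacements of the edges in $E_p$). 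For each local piece I compute the change induced by $p \mapsto p^{Y,Z}$: e.g., $\dist(0, p^{Y,Z}_s) - \dist(0, p_s) = -\tfrac{1}{2}\chi_{Y,Z}(s^0)$ at a terminal and $\size(p^{Y,Z}_i) - \size(p_i) = \tfrac{1}{2}\chi_{Y,Z}(i^0, i^s)$ at $i \in V_s$, and similarly for $\size_s$ at $i \in V_0$. Each such local change matches $-\kappa_e(Y,Z)/2$ by construction of $\underline c$ and $\overline c$, which precisely encode the slopes of the corresponding piecewise linear term. The normalization clause on $U_i^0$ kills the interior infinite-capacity edges, and movability guarantees that $p^{Y,Z}$ is a valid subtree-valued potential so no hidden ``type-change'' contributes an unaccounted jump.

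For part (2), my plan is to produce the two candidates directly from $(Y^*, Z^*)$ by restriction,
\[
(Y^\uparrow, Z^\uparrow) := (Y^* \cap U^\uparrow,\, Z^* \cap U^\uparrow), \qquad
(Y^\downarrow, Z^\downarrow) := (Y^* \cap U^\downarrow,\, Z^* \cap U^\downarrow),
\]
followed by a local renormalization on each $U_i^0$ (for $i \in V_0$) so that the first pair satisfies the normalization clause defining movability. Since $|U| = O(kn)$ and the renormalization is local, this runs in $O(kn)$ time. Both cuts lie in $\mathcal{M}^\uparrow$ and $\mathcal{M}^\downarrow$ respectively by construction. To show they are the maxima over their classes, I will use an exchange argument: for any $(Y', Z') \in \mathcal{M}^\uparrow$, the hybrid cut $(Y' \cup (Y^* \cap U^\downarrow),\, Z' \cup (Z^* \cap U^\downarrow))$ lies in $3^U$, and decomposing its $\kappa$-value along $U^\uparrow$-edges, $U^\downarrow$-edges, and mixed edges shows that maximality of $(Y^*, Z^*)$ over $3^U$ forces $\kappa(Y', Z') \leq \kappa(Y^\uparrow, Z^\uparrow)$; symmetrically for $\mathcal{M}^\downarrow$. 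The ``at least one is violating'' statement will follow because $\kappa(Y^*, Z^*) > 0$ is exhausted by the $U^\uparrow$, $U^\downarrow$, and mixed contributions, forcing at least one of $\kappa(Y^\uparrow, Z^\uparrow), \kappa(Y^\downarrow, Z^\downarrow)$ to be strictly positive.

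The main obstacle I expect is handling the mixed-endpoint edges in $\mathcal{N}_p$---those joining an integer-$p$ node to a half-integer-$p$ node, such as $e_i = i^0 i^s$ when $p(i^0)$ and $p(i^s)$ have different integralities, or the replaced edges between $V_s$ and $V_{s'}$. The $(\cdot)^+$ operator makes the contribution of such an edge behave nonlinearly under restriction, so the edge-additive decomposition does not split cleanly between $U^\uparrow$ and $U^\downarrow$. The exchange argument in part (2) and the term-by-term matching in part (1) will therefore require a careful case analysis on the sign patterns of $\chi_{Y,Z}$ at the two endpoints. Fortunately the specific signs of $\underline c, \overline c$ on each edge type (non-positive on $e_i$ and $e_i^s$, non-negative on the replacements of $E_p$) make the relevant $(\cdot)^+$-inequalities go in the favorable direction; verifying this sign bookkeeping is the most delicate step, but it reduces to a finite check on the handful of edge types in $\mathcal{N}_p$.
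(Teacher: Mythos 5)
Your overall skeleton matches the paper's: a term-by-term computation of $h(p^{Y,Z})-h(p)$ versus the edge-wise decomposition $\kappa(Y,Z)=\sum_{uv}\kappa_{uv}(Y,Z)$ for part (1), and restriction of the maximum violating cut to $U^\uparrow$ and $U^\downarrow$ together with an additivity-based exchange argument for part (2). However, there is a genuine gap in how you plan to discharge the ``most delicate step.''

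You correctly identify the mixed-endpoint edges as the obstruction, but you describe the resolution as ``sign bookkeeping'' driven by the signs of $\underline{c},\overline{c}$, expected to make $(\cdot)^+$-inequalities go the right way. That is not what makes the argument work: for the identity in (1) and the additivity $\kappa(Y,Z)=\kappa((Y,Z)^\uparrow)+\kappa((Y,Z)^\downarrow)$ in (2), you need \emph{exact} equalities, and these come from a \emph{parity} argument that you have not isolated. Concretely, if $\dist(p_i,p_j)=a_{ij}$ for a replaced edge, then $-p(i^\sigma)-p(j^{\sigma'})=a_{ij}\in\mathbb{Z}$, so $2p(i^\sigma)$ and $2p(j^{\sigma'})$ have the same parity, which forces both endpoints of that edge to lie entirely in $U^\uparrow$ or entirely in $U^\downarrow$ --- the edge is never mixed exactly when $\underline c=0$ would make $\kappa_e$ non-linear. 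When $\dist(p_i,p_j)>a_{ij}$, we have $\underline c=\overline c=u_{ij}$, so $\kappa_e$ is genuinely linear in $\chi_{Y,Z}$ and splits across the restriction trivially. The analogous parity fact (different parities of $2p(i^0)$ and $2p(i^s)$ when $\size(p_i)=1/2$) is also exactly what guarantees the restrictions $(Y,Z)^\uparrow$ and $(Y,Z)^\downarrow$ cannot ``overturn'' a half-sized subtree and are actually movable. Without pinpointing this parity observation, the ``finite check'' will not obviously close, and the exchange argument in (2), which rests on $\kappa(Y'\cup Y'',Z'\cup Z'')=\kappa(Y',Z')+\kappa(Y'',Z'')$, is unsupported.

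Two further under-specified steps. First, before restriction the paper performs a substantial modification of the maximum violating cut (so that $p+\epsilon\tilde\chi_{Y,Z}$ is a valid continuous subtree-potential): for $i\in V_s$ it may remove $U_i$ from $Z$ when $\size(p_i)=0$, and for $i\in V_0$ it moves $i^s$ between $Y$ and $Z$ depending on the ``becoming $s$-type'' configuration; your proposal only mentions a local renormalization on $U_i^0$, which is weaker and would leave restrictions that are not movable. Second, for part (1) you do not flag the ``becoming $s$-type'' case for $i\in V_0$, whose contribution to $\size$ is computed differently (the shift $\chi_{Y,Z}(i^s)-1$ in the paper's \eqref{eq:i_in_V0_cost_2}); this is precisely where the normalization clause on $U_i^0$ earns its keep, and the computation is not a routine instance of the other cases. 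Your plan is structurally sound but the central technical content --- the parity argument and the full preprocessing of the cut --- is missing.
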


\begin{proof}
(1) Let $(Y,Z)\in\mathcal{M}^\uparrow\cup\mathcal{M}^\downarrow$, and let $p':=p^{Y,Z}$.
Let
\begin{equation}
\label{eq:cutvaledge}
\kappa_{uv}(Y,Z):=\chi_{Y,Z}(u,v)^+ \underline{c}(uv)
        -\chi_{Z,Y}(u,v)^+\overline{c}(uv).
\end{equation}
Then $\kappa(Y,Z) = \sum_{uv \in F} \kappa_{uv}(Y,Z)$. 
We will estimate each term $\kappa_{uv}(Y,Z)$.
We first observe from the vector representation of subtrees
that for nodes $i,j\in V$ it holds
\[
\dist (p_i, p_j) > 0 \Rightarrow  \dist (p_i, p_j) = - p(i^{\sigma}) - p(j^{\sigma'})
\]
for appropriate $\sigma,\sigma' \in S \cup \{0\}$. 
Since $(Y,Z)\in\mathcal{M}^\uparrow\cup\mathcal{M}^\downarrow$,
if $\dist (p_i, p_j) > 0$ is non-integral,  then 
one of $p(i^{\sigma})$ and $p(j^{\sigma})$ does not change in the move from $p$ to $p'$.
Consequently, 
the sign of  $\dist (p_i, p_j) - a_{ij}$ does not invert. 
Thus, for an edge $ij \in E_{p}$, we have 
\begin{align}
\label{eq:ij_cost}
&\quad u_{ij}(\dist(p'_i,p'_j) - a_{ij})^+ - u_{ij} (\dist(p_i,p_j) - a_{ij})^+ \notag \\
&=\begin{dcases*}
u_{ij} \chi_{Z,Y}(i^{\sigma},j^{\sigma'})^+/2 &  if $\dist(p_i,p_j)=a_{ij}$,\\
- u_{ij} \chi_{Y,Z}(i^{\sigma},j^{\sigma'})/2 &  if $\dist(p_i,p_j)>a_{ij}$
\end{dcases*}\notag\\
&= - \kappa_{i^{\sigma}j^{\sigma'}}(Y,Z)/2.
\end{align}
For $i \in V_s\ (s\in S)$, since $\size (p_i) = p(i^0) + p(i^s)$ it holds
\begin{equation}\label{eq:i_in_Vs_cost}
c_i \size (p_i') - c_i \size (p_i) = c_i \chi_{Y,Z}(i^0,i^s)/2 = - \kappa_{i^0i^s}(Y,Z)/2.
\end{equation}
The second equality can be seen as follows:
If $\size(p_i) > 0$, then $\overline{c}(e_i) = \underline{c}(e_i)= - c_i$.
If $\size(p_i) = 0$, i.e.,  $p(i^{s}) = - p(i^{0})$, 
then  $\underline{c}(e_i) = - c_i$, $\overline{c}(e_i)= 0$,
and $\chi_{Y,Z}(i^{0},i^{s}) \geq 0$ by the movability \eqref{eq:i_in_Vs}.
For both cases, it holds
 $-\underline{c}(e_i)\chi_{Y,Z}(i^{0},i^{s})^+
 + \overline{c}(e_i) \chi_{Z,Y}(i^{0},i^{s})^+ = c_i \chi_{Y,Z}(i^0,i^s)$.
Consider $i \in V_0$. Suppose that $p'_i$ is of 0-type, i.e., $p'(i^s) \geq 0$ $(s \in S)$.
Then $\size (p'_i) = \sum_{i \in S} p'(i^s)$, and hence 
\begin{equation}\label{eq:i_in_V0_cost_1}
c_i \size (p_i') - c_i \size (p_i)  = c_i \chi_{Y,Z}(U_i)/2 = 
- \sum_{s \in S} \kappa_{i^s i^{s,0}}(Y,Z)/2.
\end{equation}
The second equality can be seen as follows:
By the normalization condition, it holds $\chi_{Y,Z}(i^{s}) = \chi_{Y,Z}(i^{s,0}, i^{s})$.
Since $\size_{s} p_i = 0$ implies $\chi_{Y,Z}(i^{s,0},i^{s}) \geq 0$, as above,  we have~\eqref{eq:i_in_V0_cost_1}.
In the case where $p'_i$ becomes $s$-type, it holds 
\begin{equation}\label{eq:i_in_V0_cost_2}
c_i \size (p_i') - c_i \size (p_i)  =  c_i (\chi_{Y,Z}(i^s) - 1)/2 = - \sum_{s \in S} \kappa_{i^s i^{s,0}}(Y,Z)/2.
\end{equation}
This can be verified  from 
the normalization condition $i^{s,0} \in Z$, 
$i^{s',0} \in Y$ with $i^{s'} \in Z$ and $\size_{s'}(p_i) = 0$ for $s' \in S-s$.
For $s \in S$, since $\dist (0, p_s)=0$ implies $\chi_{Y,Z}(s^0) \leq 0$, 
a similar consideration (more easily) verifies 
\begin{equation}\label{eq:s_in_S_cost}
- r_s \dist (0, p'_s) + r_s \dist (0, p_s) = r_s (p'(s^0) - p(s^0)) = r_s \chi_{Y,Z}(s^0)/2 = - \kappa_{s^0s^0}(Y,Z)/2.
\end{equation}
By \eqref{eq:ij_cost}--\eqref{eq:s_in_S_cost}, we obtain (1).

(2) Let $(Y,Z)\in 3^U$ be a maximum violating cut.
We first modify $(Y,Z)$, without decreasing $\kappa$, so that $p+ \epsilon \tilde \chi_{Y,Z}$ for small $\epsilon >0$
 satisfies \eqref{eq:s_in_S}--\eqref{eq:i_in_V0}
 (except being half-integer-valued).
Namely  $p+ \epsilon \tilde \chi_{Y,Z}$ is viewed as a ``continuous'' subtree-potential.
For each $i \in V_s\ (s\in S)$, do the following:
\begin{itemize}
\item
If $\size (p_i) = 0$,  $Y\cap U_i=\emptyset$, and $Z\cap U_i\neq \emptyset$, then remove $U_i$ from $Z$.
\end{itemize}
This modification does not decrease $\kappa$,
 which can be seen from $\overline{c}(e_i) = 0$.
For $i \in V_0$,  do the following:
\begin{itemize}
\item For ``becoming $s$-type'' configuration
\[
i^{s,0} \in Z, \ i^{s',0} \in Y,\ i^{s'} \in Z,\  \size_{s'}(p_i) = 0 \quad (\exists s \in S, \forall s' \in S-s),
\]
remove $i^s$ from $Z$ and add to $Y$ if $\size_{s}(p_i) = 0$.
For other configuration,
remove $U_i^0$ from $Y \cup Z$ and remove each $i^{s}\ (s\in S)$ from $Z$ if $\size_{s}(p_i) = 0$.
\end{itemize}
This modification does not decrease $\kappa$.
This can be seen from observations: 
(i) $i^s$ can be removed from $Z$ if $i^{s,0} \not \in Y$ and $\size_s (p_i) = 0$ 
($\overline{c}(e_i^s) = 0$),
and can be added to $Y$ if moreover $i^{s,0}\in Z$.
(ii) $i^{s,0}$ can be removed from $Y$ 
if there is no $s' \neq s$ with $i^{s',0} \in Z$.
(iii) Suppose that $i^{s,0} \in Z$. 
Then $i^{s',0} \in Y$ for all $s' \in S-s$ by $\kappa(Y,Z)>0$.
If there is $s' \in S-s$ such that $\size_{s'}(p_i) > 0$ 
or $i^{s'} \not \in Z$, then $i^{s,0}$ and $i^{s',0}$ can be removed from $Z$ and $Y$, respectively;
the removal of $i^{s,0}$ decreases $\kappa$ by most $c_i$ 
but the removal of $i^{s',0}$ increases $\kappa$ by $c_i$.

After the procedure, 
we obtain a maximum (normalized) violating cut $(Y,Z)$ 
so that $p + \epsilon \tilde \chi_{Y,Z}$ is a continuous subtree-valued potential.
Still $p^{Y,Z}$ is not necessarily movable. 
This is caused by subtrees $p_i$ with $\size(p_i) = 1/2$ and 
$\chi_{Y,Z}(i^{s},i^{0}) = -2$ (or $\chi_{Y,Z}(i^s, i^{s'}) = -2$).
We decompose $(Y,Z)$ into upward- and downward-movable cuts. 
Let $(Y,Z)^\uparrow :=(Y\cap U^\uparrow, Z\cap U^\uparrow)$, and 
$(Y,Z)^\downarrow :=(Y\cap U^\downarrow,Z\cap U^\downarrow)$. 
They are movable.
Indeed, 
subtrees $p_i$ with $\size(p_i) = 1/2$ cannot become ``overturned,'' 
since $2 p(i^s)$ and $2 p(i^0)$ (or $2 p(i^{s'})=0$) have different parities.
The normalization condition obviously holds by construction.
Hence $(Y,Z)^\uparrow$ and $(Y,Z)^\downarrow$ 
are upward- and downward-movable, respectively. 

Moreover, the following holds:
\begin{equation}
 \label{eq:ntb_viocut_sep}
 \kappa(Y,Z)=\kappa((Y,Z)^\uparrow) +\kappa((Y,Z)^\downarrow ).
\end{equation}
Indeed, 
the last equality in \cref{eq:i_in_Vs_cost}--\cref{eq:s_in_S_cost}
holds also for $(Y,Z)$ 
(since they only use \cref{eq:s_in_S}--\cref{eq:i_in_V0}).
By $\tilde \chi_{Y,Z} = \tilde \chi_{(Y,Z)^\uparrow} + 
 \tilde \chi_{(Y,Z)^\downarrow}$,
  it holds $\kappa_{uv}(Y,Z)=\kappa_{uv} ((Y,Z)^\uparrow) +\kappa_{uv}((Y,Z)^\downarrow )$
 for these edges $uv$.
Similarly the last equality in \cref{eq:ij_cost} holds for $(Y,Z)$
 if $\dist(p_i,p_j)> a_{ij}$.
If $\dist(p_i,p_j)=a_{ij}$, then it holds
 $\dist(p_i,p_j) = - p(i^{\sigma}) - p(j^{\sigma'})\in \mathbb{Z}_+$.
Hence $2p(i^{\sigma})$ and $2p(j^{\sigma'})$ have the same parity.
Consequently $\kappa_{uv}(Y,Z)=\kappa_{uv} ((Y,Z)^\uparrow)$ 
with $\kappa_{uv}((Y,Z)^\downarrow ) = 0$ or
 $\kappa_{uv}(Y,Z)=\kappa_{uv} ((Y,Z)^\downarrow)$ 
with $\kappa_{uv}((Y,Z)^\uparrow ) = 0$. Summarizing, we have (\ref{eq:ntb_viocut_sep}).
By $\kappa(Y,Z)>0$, one of $(Y,Z)^\uparrow$  and $(Y,Z)^\downarrow$
is violating.

Consider arbitrary upward-  and downward-movable cuts 
$(Y',Z')$ and $(Y'',Z'')$, and consider the cut $(Y' \cup Y'', Z' \cup Z'')$. 
By the same argument, it holds $\kappa(Y' \cup Y'', Z' \cup Z'') = \kappa(Y',Z') + \kappa(Y'',Z'')$.
This implies the maximality of $(Y,Z)^\uparrow$  and $(Y,Z)^\downarrow$. 
\end{proof}

\begin{corollary}
\label{cor:bisub_nonopt}
Let $\mathcal{N}_p=((U,F),\underline{c},\overline{c})$
be the undirected network constructed from a proper potential
$p\in\mathbb{S}^n$.
Suppose that the instance is infeasible.
Given a maximum violating cut,
we can obtain a proper potential $q\in \mathbb{S}^n$ with $h(q)<h(p)$ in $O(kn)$ time.
\end{corollary}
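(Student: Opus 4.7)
The plan is to combine the two parts of \cref{thm:map} with a properness-repair step borrowed from the proof of \cref{prop:weakdual}.

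First, I would feed the given maximum violating cut into the procedure described in \cref{thm:map}(2). In $O(kn)$ time, this produces a maximum upward-movable cut $(Y^\uparrow,Z^\uparrow)\in\mathcal{M}^\uparrow$ and a maximum downward-movable cut $(Y^\downarrow,Z^\downarrow)\in\mathcal{M}^\downarrow$, at least one of which is violating because the original cut is violating (by the decomposition identity $\kappa(Y,Z)=\kappa((Y,Z)^\uparrow)+\kappa((Y,Z)^\downarrow)$ established there). Pick one such violating movable cut and call it $(Y^*,Z^*)$.

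Next, I would form $q:=p^{Y^*,Z^*}\in\mathbb{S}^n$ by reading off the vector $p+\tilde{\chi}_{Y^*,Z^*}/2\in(\mathbb{Z}^*)^{\tilde U}$ under the identification \eqref{eq:s_in_S}--\eqref{eq:i_in_V0}. Since $(Y^*,Z^*)$ is movable, $q$ is a genuine potential, and since $p_s\in P_s$ is preserved (the modifications on the $s$-coordinates respect the normalization), the condition \cref{cond:dtbfeas} still holds. By \cref{thm:map}(1),
\[
h(q)-h(p)=-\kappa(Y^*,Z^*)/2<0,
\]
so we already have a strict descent. Computing $q$ from $(Y^*,Z^*)$ is clearly $O(|\tilde U|)=O(kn)$.

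The one remaining concern is properness: $q$ may fail to be proper, since the movable cut is not required to keep every $q_i$ inside the minimal subtree spanned by $\{q_s\}_{s\in S}$. To fix this, I would apply exactly the truncation from the opening paragraph of the proof of \cref{prop:weakdual}: write $q_s=(L_s,s)$ and replace each coordinate $q_i(l_s)$ by $\min\{l_s,L_s\}$. That proof already shows this projection only weakly decreases $h$ (equivalently, weakly increases the DTB objective), so the resulting proper potential $q'$ still satisfies $h(q')\leq h(q)<h(p)$. The projection touches $O(k)$ coordinates at each of $n$ nodes, hence runs in $O(kn)$ time. Overall the procedure is $O(kn)$, matching the claimed bound.

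I do not expect a real obstacle here; the only subtlety is verifying that the projection step does not conflict with the movable-cut construction, which follows because the projection only contracts subtrees toward $0$ while \cref{thm:map}(1)'s bookkeeping of $h$ is monotone in exactly that direction.
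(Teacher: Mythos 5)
Your proof is correct and follows essentially the same route as the paper: apply \cref{thm:map}~(2) to get the two maximum movable cuts, select a violating one, form $q:=p^{Y^*,Z^*}$, invoke \cref{thm:map}~(1) for the strict decrease $h(q)-h(p)=-\kappa(Y^*,Z^*)/2<0$, and then repair properness via the truncation from the start of the proof of \cref{prop:weakdual}. The one small deviation worth flagging: the paper does not pick merely ``a violating'' movable cut but specifically the one with the larger $\kappa$-value among the two maxima. For the corollary as stated (only $h(q)<h(p)$ is required) your choice is sufficient; however, the max-$\kappa$ choice is what the paper relies on later (around \cref{lem:bijection}) to argue that $p^{Y^*,Z^*}$ is a \emph{local minimizer} over $\mathcal{F}_p\cup\mathcal{I}_p$ in the sense of SDA, which in turn drives the iteration bound in \cref{lem:descentnum}. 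So when you carry this step into \textsc{Descent}, you should take the $\kappa$-maximizer rather than an arbitrary violating one.
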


\begin{proof}
By \cref{thm:map}~(2), we can obtain
 a maximum upward-movable cut $(Y,Z)\in \mathcal{M}^\uparrow$
and a maximum downward-movable cut $(Y',Z')\in \mathcal{M}^\downarrow$ in $O(kn)$ time.
Let $(Y^*,Z^*)$ be the cut attaining the maximum $\kappa$-value among $\{(Y,Z),(Y',Z')\}$,
and let $q:=p^{Y^*,Z^*}\in \mathbb{S}^n$.
Then $h(q)<h(p)$ by \cref{thm:map}~(1) and (2).
We can make $q$ proper by the procedure given in the beginning of the proof of \cref{prop:weakdual}.
\end{proof}

\subsection{Cut-Descent Algorithm}
\label{subsec:algo}

Now we are ready to present our algorithm \textsc{Descent} for
minimization of $h$; see \cref{algo:descent}.
The input is the network $((V,E),S,u,c,a,r)$,
and the output is a pair $(p,x)$ of optimal solutions for DTB and FNTB.
As mentioned in \cref{subsec:dual},
 the algorithm initializes $p\equiv 0$.
At the beginning of each iteration,
construct the network $\mathcal{N}_p$
 from the current proper potential $p\in \mathbb{S}^n$,
and run the algorithm given in \cref{lem:feascirc} to solve the undirected circulation problem.
There are two cases:
 the instance is feasible and a feasible half-integer-valued circulation is obtained,
or the instance is infeasible and a maximum violating cut is obtained.
If the former is the case,
then a half-integral optimal edge-capacity is obtained by \cref{thm:bisub_opt}.
If the latter is the case,
then a proper potential $q\in \mathbb{S}^n$ with $h(q)<h(p)$ is obtained by \cref{cor:bisub_nonopt}.
Each iteration of \textsc{Descent} can be done in $O(\MF(kn,m+k^2n))$ time.

\begin{algorithm}[t]
\caption{\textsc{Descent}}
\label{algo:descent}
\begin{enumerate}
\setcounter{enumi}{-1}
\item Initialize $p\equiv 0$ (i.e., $p(i)=0$ for any $i\in V$).
\item Construct $\mathcal{N}_p$ and solve the undirected circulation problem (by \cref{lem:feascirc}).
\item If a feasible half-integer-valued circulation is obtained,
	then $p$ is optimal;
	compute a half-integral optimal edge-capacity $x$ (by \cref{thm:bisub_opt})
	and output $(p,x)$.
\item If a maximum violating cut is obtained,
	then compute $q\in \mathbb{S}^n$ with $h(q)<h(p)$ (by \cref{cor:bisub_nonopt}),
	update $p\leftarrow q$, and go to Step 1.
\end{enumerate}
\end{algorithm}

The value $-h(p)$ is at most $mUA$ (by \cref{prop:weakdual})
and $-h(p)\in \mathbb{Z}_+^*$.
Thus the number of iterations is at most $O(mUA)$.
This is very rough analysis;
the number of iterations can be evaluated as $O(nA)$.
Recall $\mathbb{S}$ can be seen as the subset of $\mathbb{Z}^k$.
Let $\lVert T-T'\rVert:=\max_{s\in S}\lvert T_s-T'_s\rvert$
for $T,T'\in \mathbb{S}$.
For two potentials $p,q\in \mathbb{S}^n$, let $\lVert p-q\rVert:=\max_{i\in V} \lVert p_i-q_i\rVert$.
Let $\opt(h)$ denote the set of minimizers of $h$.

\begin{lemma}
\label{lem:descentnum}
With the initial potential $p_0\in \mathbb{S}^n$,
\textsc{Descent} finds an optimal potential at most $2\min_{q\in \opt(h)} \lVert p_0-q\rVert+2$ iterations.
\end{lemma}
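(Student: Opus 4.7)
The plan is to identify \textsc{Descent} as an instance of the Steepest Descent Algorithm (SDA) for the function $h$ viewed as a discrete convex (``L-convex'') function on the product subtree space $\mathbb{S}^n$, and then invoke the standard iteration-complexity bound for SDA on such functions.

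First, I would confirm the local-steepest-descent nature of the update. By Theorem~\ref{thm:map}(1), for any proper potential $p\in\mathbb{S}^n$ and any $(Y,Z)\in\mathcal{M}^\uparrow\cup\mathcal{M}^\downarrow$, we have $h(p^{Y,Z})-h(p) = -\kappa(Y,Z)/2$. Hence the update $p \mapsto p + \tfrac{1}{2}\tilde\chi_{Y^*,Z^*}$ that \textsc{Descent} performs (via Corollary~\ref{cor:bisub_nonopt}) is exactly the $h$-minimizing move within the single-step half-integer neighborhood $\{p + \tfrac12\tilde\chi_{Y,Z} : (Y,Z)\in\mathcal{M}^\uparrow\cup\mathcal{M}^\downarrow\}$. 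So \textsc{Descent} is SDA with respect to this neighborhood system on $\mathbb{S}^n$.

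Second, I would establish the following proximity-type property: for any non-optimal proper $p$ and any $q\in\opt(h)$, the steepest descent step $p \mapsto p' := p + \tfrac12\tilde\chi_{Y^*,Z^*}$ admits some $q'\in\opt(h)$ with $\lVert p' - q'\rVert \leq \lVert p - q\rVert$, and every two consecutive iterations strictly decrease the distance $\min_{q\in\opt(h)}\lVert p - q\rVert$ by at least $1/2$. Since a single coordinate of $p$ changes by at most $1/2$ per iteration while $\lVert \cdot\rVert$ is measured in integer units on the $\mathbb{Z}^k$-vector representation of $\mathbb{S}$, reducing the distance from $D := \min_{q\in\opt(h)}\lVert p_0 - q\rVert$ to zero requires at most $2D$ iterations, plus a boundary correction (at most $2$) coming from the possibly half-integer parity of the starting potential and the fact that the algorithm may need one further step to detect optimality.

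The main obstacle is the justification of the proximity property, which amounts to showing that $h$ is L-convex on the non-standard graph structure of $\mathbb{S}^n$ (the product of subtree spaces of $\mathbb{T}$) so that SDA enjoys its characteristic termination guarantee. This is precisely the content of DCA beyond $\mathbb{Z}^n$ that the paper sets up in Section~\ref{sec:disc}; at this point I would forward-refer to the general L-convex SDA bound established there, apply it to $h$, and conclude that \textsc{Descent} terminates in at most $2\min_{q\in\opt(h)}\lVert p_0 - q\rVert + 2$ iterations, yielding the lemma.
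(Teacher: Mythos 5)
Your high-level route is the same one the paper takes: view \textsc{Descent} as SDA for an L-convex function on the (well-oriented median graph of the) subtree space, and invoke the general SDA iteration bound. However, two steps that you wave at but do not pin down are exactly the ones the paper has to supply, and your sketch of them is not quite right.

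First, to claim ``\textsc{Descent} \emph{is} SDA'' you must show that the set of potentials reachable by a movable cut coincides with the SDA neighborhood $\mathcal{F}_p\cup\mathcal{I}_p$ (the principal filter and ideal of $p$ in the oriented median graph). You only observe that the update minimizes over $\{p+\tfrac12\tilde\chi_{Y,Z}:(Y,Z)\in\mathcal{M}^\uparrow\cup\mathcal{M}^\downarrow\}$, but nothing in your argument identifies this set with $\mathcal{F}_p\cup\mathcal{I}_p$; this is precisely Lemma~\ref{lem:bijection}, and without it the SDA bound does not apply. Second, your ``proximity property'' paragraph substitutes an informal two-iterations-per-half-integer heuristic for what the paper actually uses: Theorem~\ref{thm:iterSDA} bounds the number of SDA iterations by $\min_{v\in\opt(f)}d^\Delta(p_0,v)+2$, and Proposition~\ref{prop:distequiv} shows $d^\Delta_{\mathbb{G}}(p,q)=2\lVert p-q\rVert$ on $\mathbb{D}$. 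The factor $2$ comes from this $\Delta$-distance identity, not from a parity or ``integer-units'' argument (indeed $\lVert\cdot\rVert$ is measured on half-integers, so ``$\lVert\cdot\rVert$ is measured in integer units'' is false as stated). With Lemma~\ref{lem:bijection}, Proposition~\ref{prop:Lconv}, Theorem~\ref{thm:iterSDA}, and Proposition~\ref{prop:distequiv} explicitly in place, the conclusion follows exactly as you intend.
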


\Cref{lem:descentnum} can be shown by using \emph{DCA beyond $\mathbb{Z}^n$},
which we will discuss in \cref{sec:disc}.

\begin{lemma}
\label{lem:range}
There exists an optimal potential $p\in \opt(h)$
such that, for any $i\in V$,
$p_i$ is contained in $[2nA,2nA,\dotsc,2nA]\in \mathbb{S}$.
\end{lemma}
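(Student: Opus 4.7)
The plan is to start from any proper optimal potential $p$ (which exists by \cref{prop:weakdual}) and, for each $s \in S$, to successively \emph{contract} long intervals of the path $P_s$ until $L_s := \dist(0, p_s) \leq 2nA$. Once every $L_s$ is bounded by $2nA$, properness forces every $p_i$ to lie inside $[2nA, \dotsc, 2nA]$, as required.

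First I would analyze the geometry on a single $P_s$. Each $p_i$ projects onto $P_s$ as a (possibly empty) subinterval of $[0, L_s]$, and the lower/upper endpoints of these projections yield at most $2n$ distinct positions; together with $0$ and $L_s$ as boundaries, they split $[0, L_s]$ into at most $2n$ open intervals. On each such interval $(a, b)$ the partition $V = V^{\uparrow} \cup V^{=} \cup V^{\downarrow}$ (projection strictly above $b$, straddling $[a,b]$, or strictly below $a$) is constant, and the biset $\hat X := (V^{\uparrow},\, V^{\uparrow} \cup V^{=})$ lies in $\mathcal{C}_s$: we have $s \in V^{\uparrow}$ (since $p_s = (L_s, s)$ with $L_s \geq b$), and $S - s \subseteq V^{\downarrow}$ (every $p_{s'}$, $s' \neq s$, sits on $P_{s'}$ and has empty or $\{0\}$ projection on $P_s$). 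Consequently the standing feasibility assumption \eqref{cond:ntbfeas} applied at $\hat X$ gives
\[
u(\delta(V^{\uparrow}, V^{\downarrow})) + c(V^{=}) \geq r_s.
\]

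Next I would carry out the contraction step. If the interval $(a,b)$ has length $g > A$, I shift every point of $P_s$ at position $\geq b$ toward $0$ by $\epsilon := g - A$, modifying every subtree coordinate-wise; this yields a proper potential $p'$ with $\dist(0, p'_s) = L_s - \epsilon$. I then compute $h(p') - h(p)$ term by term. The term $-r_s L_s$ gains $r_s \epsilon$; only subtrees $p_i$ with $i \in V^{=}$ shrink (the $s$-coordinate drops by $\epsilon$), contributing $-\epsilon\, c(V^{=})$; and $\dist(p_i, p_j)$ changes only for edges $ij \in \delta(V^{\uparrow}, V^{\downarrow})$, each of which had $\dist(p_i, p_j) \geq g > A \geq a_{ij}$ so its penalty lies on the linear branch both before and after, decreasing by $\epsilon u_{ij}$. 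Summing,
\[
h(p') - h(p) = \epsilon \bigl( r_s - c(V^{=}) - u(\delta(V^{\uparrow}, V^{\downarrow})) \bigr) \leq 0,
\]
so $p'$ remains optimal. Iterating over every over-length interval on each $P_s$ (operations on different $P_s$'s commute) leaves at most $2n$ intervals of length $\leq A$ on each $P_s$, giving $L_s \leq 2nA$ for all $s$.

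The main obstacle I expect is substantiating the distance claim above, namely that $\dist(p_i, p_j)$ is really unchanged outside $\delta(V^{\uparrow}, V^{\downarrow})$. This requires a case check over the types ($0$-, $s$-, $s'$-type) of $p_i$ and $p_j$ and whether the realizing $\mathbb{T}$-geodesic goes along $P_s$ or passes through $0$. The key observations are that for $i \in V^{=}$ the \emph{lower} endpoint of $p_i$'s projection on $P_s$ is not shifted by the contraction, so the nearest point of $p_i$ to anything below $a$ on $P_s$ or on any $P_{s'}$ with $s' \neq s$ is unaffected; and for $i, j$ both in $V^{\uparrow}$ (or both in $V^{=}$) their relevant endpoints shift together by $\epsilon$, leaving the mutual $\mathbb{T}$-distance unchanged.
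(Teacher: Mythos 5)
The proposal is correct and takes essentially the same approach as the paper: pigeonhole a gap of length $>A$ among the at most $2n$ intervals cut out on $P_s$ by the endpoints of the $p_i$-projections, apply the feasibility condition \eqref{cond:ntbfeas} to the biset $(V^{\uparrow}, V^{\uparrow}\cup V^{=})$ determined by that gap, and shrink the gap without increasing $h$. The only cosmetic differences are that the paper shrinks one $\mathbb{T}$-edge (length $1/2$) at a time and tracks termination via the strict decrease of $\sum_s \dist(0,p_s)$, while you collapse the whole over-length interval to length $A$ in one move, and the paper phrases the biset $\hat{X}_e$ via a single edge $e$ of the gap rather than via the interval.
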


\begin{proof}
By \cref{prop:weakdual},
there exists a proper optimal potential $p\in \mathbb{S}^n$.
Suppose that there is $s\in S$ such that $\dist(0,p_s)>2nA$.
Let $\mathcal{S}$ be the set of endpoints of $p_i$ located in $P_s$,
i.e., $\mathcal{S}:=\{p_s\}\cup
		\bigcup_{i\in V_s} \{(-p(i^0),s),(p(i^s),s)\}
		\cup \bigcup_{i\in V_0} \{(p(i^s),s)\}$.
Sort $\mathcal{S}=\{(l_1,s),(l_2,s),\dotsc,(l_N,s)\}$ in
the ascending order $l_0:=0\leq l_1<l_2<\dotsb<l_N$.
By $N\leq 2n$ and
$\sum_{t=1}^N (l_t-l_{t-1})=\dist(0,p_s)>2nA$.
there exists $t\in \{1,\dotsc,N\}$ such that $l_t-l_{t-1}>A$.

Let $e$ be the edge in the subtree $[l_{t-1},l_t]_s$ farthest from 0.
Consider the biset $\hat{X}_e:=(X_e,X^+_e)$ (defined in the proof of \cref{prop:weakdual}).
Then $u(\delta\hat{X}_e)+c(\Gamma(\hat{X}_e))\geq r(s)$ by \eqref{cond:ntbfeas}.
We define a new potential $p'\in \mathbb{S}^n$ by ``shrinking'' $e$.
Formally, let $Y:=\{i^s\mid i\in X_e^+\}$ and $Z:=\{i^0\mid i\in X_e\}$.
Then $p'\in \mathbb{S}$ is defined by $p':=p-(1/2)\chi_{Y,Z}$.
Observe that $\size(p_i)$ for $i\in \Gamma(\hat{X}_e)$ decreases by $1/2$,
and $\dist(p_i,p_j)\,(> l_t-l_{t-1}>A\geq a_{ij})$ for $ij\in \delta(\hat{X}_e)$
 also decreases by $1/2$.
Hence we have
\[
    h(p')-h(p)=\frac{1}{2}(r_s-c(\Gamma(\hat{X}_e))-u(\delta\hat{X_e}))\leq 0.
\]
Thus $p'$ is also an optimal potential.
We apply this procedure iteratively unless the optimal potential does not contained in $[2nA,\dotsc,2nA]$.
(It terminates since $\sum_{s\in S}\dist(0,p_s)$ is strictly decreasing.)
\end{proof}

\begin{corollary}
\label{cor:pseudo}
\textsc{Descent} solves FNTB in $O(nA\cdot \MF(kn,m+k^2n))$ time.
\end{corollary}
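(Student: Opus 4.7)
The plan is straightforward: combine the per-iteration cost $O(\MF(kn, m+k^2n))$ stated just before \cref{lem:descentnum} with the iteration-count bound from \cref{lem:descentnum} applied to a minimizer supplied by \cref{lem:range}.

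First I would observe that \textsc{Descent} is initialized at $p_0 \equiv 0$. By \cref{lem:range} there exists an optimal potential $q^* \in \opt(h)$ such that every $q^*_i$ is contained in the $0$-type subtree $[2nA, 2nA, \ldots, 2nA]$. Reading this under the identification of $\mathbb{S}$ with a subset of $(\mathbb{Z}^*)^k$ from \cref{subsec:space_of}, each coordinate of $q^*_i$ lies in $[-2nA, 2nA]$, and since $p_0 = 0$ this yields $\|p_0 - q^*\| = \max_{i \in V}\|q^*_i\| \leq 2nA$.

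Next, \cref{lem:descentnum} gives that \textsc{Descent} terminates within $2\|p_0 - q^*\| + 2 = O(nA)$ iterations. Each iteration constructs $\mathcal{N}_p$ (with $|U| = O(kn)$ nodes, since each $i \in V_0$ is split into $2k$ copies, and $|F| = O(m + k^2 n)$ edges, counting the $\binom{k}{2}$ intra-$U_i^0$ edges per $i \in V_0$ together with the $m$ edges inherited from $E_p$), solves the circulation problem on $\mathcal{N}_p$ in $O(\MF(kn, m+k^2 n))$ time via \cref{lem:feascirc}, and then either outputs an optimal edge-capacity via \cref{thm:bisub_opt} or computes the next potential in $O(kn)$ time via \cref{cor:bisub_nonopt}; the max-flow call dominates. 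Multiplying the iteration count by this per-iteration cost gives $O(nA \cdot \MF(kn, m+k^2n))$.

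There is no real obstacle here: the corollary is a direct concatenation of the already-established per-iteration cost, \cref{lem:range}, and \cref{lem:descentnum}. The only caveat is that \cref{lem:descentnum} is invoked as a black box at this stage and is proved later in \cref{sec:disc} via discrete convex analysis on the subtree space; once it is available, the argument above assembles in a few lines.
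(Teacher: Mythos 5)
Your proposal is correct and matches the paper's (implicit) proof exactly: the paper states \cref{cor:pseudo} directly after \cref{lem:descentnum} and \cref{lem:range} with no further argument, leaving precisely the assembly you carry out. The coordinate bound $\|p_0 - q^*\| \le 2nA$ from the vector representation of subtrees in $[2nA,\ldots,2nA]$, the $O(nA)$ iteration count from \cref{lem:descentnum}, and the $O(\MF(kn, m+k^2n))$ per-iteration cost from the circulation solve on $\mathcal{N}_p$ are all as intended.
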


We prove \cref{thm:lc} using \cref{cor:pseudo}.

\begin{proof}[Proof of \cref{thm:lc}]
Let $((V,E),S,u,c)$ be a network,
and let $f=(\mathcal{P},\lambda)$ be a separately-capacitated multiflow.
Recall that $f_s=(\mathcal{P}_s,\lambda|_{\mathcal{P}_s})$,
where $\mathcal{P}_s\subseteq\mathcal{P}$ is the subset of paths connecting $s$ to $S-s$.
Let $\val f:=\sum_{P\in\mathcal{P}} \lambda(P)$
and $\val f_s:=\sum_{P\in\mathcal{P}_s} \lambda(P)\ (s\in S)$.
Since $f_s$ can be seen as an ordinary $\{s\}$--$(S-s)$ flow
and is capacitated by $u$ and $c$,
 $\val f_s$ is at most the minimum capacity of $\{s\}$--$(S-s)$ cuts.
Thus we have $\val f=(1/2)\sum_{s\in S} \val f_s\leq (1/2)\sum_{s\in S}\nu_s$.

Consider an instance $I=((V,E),S,u,c,a,r)$ of FNTB,
where $a\equiv 1$ and $r_s:=\nu_s$ for each $s\in S$.
Since $u$ clearly satisfies \eqref{cond:ntbfeas}, the instance is feasible.
Then \textsc{Descent} outputs a half-integral optimal edge-capacity $x$
(and an optimal potential) for $I$.
Since $x$ comes from a half-integral circulation $y$ (\cref{thm:bisub_opt}),
it holds $x(\delta i)\in \mathbb{Z}_+$ for any $i\in V\setminus S$.
(For $i\in V_s\ (s\in S)$, this follows from $x(\delta i)=-2y(i^0i^s)$ and $y(i^0i^s)\in \mathbb{Z}^*$.
For $i\in V_0$, this follows from $x(\delta i)=\sum_{s\in S}-y(i^{s,0}i^{s})=2\sum_{s<s'}y(i^{s,0}i^{s',0})$
and $y(i^{s,0}i^{s',0})\in \mathbb{Z}^*$.)
Then we can apply \textsc{Decompose} to $x$,
and by \Cref{cor:halfflow},
a half-integer-valued separately-capacitated multiflow $f$ is obtained.
It satisfies $\val f=(1/2)\sum_{s\in S}f(s)\geq
(1/2)\sum_{s\in S}r_s=(1/2)\sum_{s\in S}\nu_s$.

The time complexity $O(n\cdot\MF(kn,m+k^2n))$ follows from \cref{cor:pseudo,cor:halfflow}.
\end{proof}

\subsection{Scaling Algorithm}
\label{subsec:scaling}

The time complexity of \textsc{Descent} is pseudo-polynomial.
We improve it by combining with the (cost-)scaling method.

Let $\gamma\in\mathbb{Z}_+$ be an integer
such that $2^\gamma\geq A$.
The scaling algorithm consists of $\gamma+1$ phases.
In $t$-th phase, solve DTB with an edge-cost $a_t:E\rightarrow\mathbb{Z}_+$
defined by $a_t(e):=\lceil a(e)/2^t \rceil\ (e\in E)$,
in other words, minimize $h_{a_t}$. (Recall that $h_a$ is defined by \eqref{eq:horig}.)
Here $\lceil \cdot\rceil$ is the round-up operator.
Note that all $a_t(e)$ are positive.
Begin with $t=\gamma$, and decrease $t$ one-by-one.
Then, when $t=0$, the problem coincides with the original DTB.
In each $t$-th phase, we use \textsc{Descent} to minimize $h_{a_t}$.
At the initial phase $t=\gamma$, we run \textsc{Descent} with the starting point $p\equiv 0$.
For $t$-th phase with $t\leq \gamma-1$, the starting point
is determined from the optimal potential in the previous phase.
The following lemma is easily seen from the strong duality of \cref{prop:weakdual}.

\begin{lemma}
\label{lem:double}
Let $p\in\mathbb{S}^n$ be a minimizer of $h_{a_t}\ (t=1,\dotsc,\gamma)$.
Then $2p\in\mathbb{S}^n$ is a minimizer of $h_{2a_t}$.
\end{lemma}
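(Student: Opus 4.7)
The plan is to establish the pointwise scaling identity $h_{2a_t}(2p) = 2 h_{a_t}(p)$ for every potential $p \in \mathbb{S}^n$, and then combine it with the fact that the optimal value of FNTB is linear in the edge-cost to conclude that $2p$ is a minimizer of $h_{2a_t}$ whenever $p$ minimizes $h_{a_t}$.

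First I would verify that the doubling map $p \mapsto 2p$ is well-defined on the subtree space. Viewing $\mathbb{S}$ via its embedding into $(\mathbb{Z}^*)^k$ described in \cref{subsec:space_of}, scaling every coordinate by $2$ preserves the defining conditions \eqref{eq:subtree} as well as the constraint $p_s \in P_s$ for $s \in S$; hence $2p$ is again a potential. Because the graph $\mathbb{T}$ is a metric tree with half-integer edge lengths, each of the quantities appearing in \eqref{eq:horig}, namely $\dist(0,p_s)$, $\size(p_i)$, $\size_s(p_i)$ and $\dist(p_i,p_j)$, is positively homogeneous of degree one in $p$. Substituting $2p$ into $h_{2a_t}$ therefore multiplies every term by $2$: in particular the edge term becomes
\[
u_{ij}\bigl(\dist(2p_i,2p_j) - 2a_{t,ij}\bigr)^+ = 2\, u_{ij}\bigl(\dist(p_i,p_j) - a_{t,ij}\bigr)^+,
\]
while the requirement and node-capacity terms obviously scale by $2$. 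This yields the identity $h_{2a_t}(2p) = 2 h_{a_t}(p)$.

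Next I would invoke the strong duality proved via \textsc{Descent} (\cref{prop:weakdual} together with \cref{thm:bisub_opt,cor:bisub_nonopt}), which gives $\min_{\mathbb{S}^n} h_a = -\opt_{\mathrm{FNTB}}(a)$ for any positive edge-cost $a$. Since FNTB is a linear program whose objective is linear in $a$, the optimal value doubles under $a \mapsto 2a$, so $\min h_{2a_t} = 2 \min h_{a_t}$. Combining with the previous identity, if $p$ achieves $\min h_{a_t}$ then
\[
h_{2a_t}(2p) \;=\; 2 h_{a_t}(p) \;=\; 2 \min h_{a_t} \;=\; \min h_{2a_t},
\]
so $2p$ achieves $\min h_{2a_t}$, which is exactly the claim.

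The only non-trivial point is the compatibility of the doubling with the subtree structure (closure of $\mathbb{S}^n$ under $p \mapsto 2p$ and the homogeneity of $\dist$, $\size$, $\size_s$ on the metric tree $\mathbb{T}$); everything else is a one-line calculation plus strong duality, which justifies the author's remark that the lemma is immediate.
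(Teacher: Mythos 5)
Your proof is correct and follows exactly the route the paper hints at: the pointwise scaling identity $h_{2a_t}(2p) = 2h_{a_t}(p)$ combined with strong duality and the linearity of the FNTB optimal value in the edge-cost $a$, which pins down $\min h_{2a_t} = 2\min h_{a_t}$ and hence shows $2p$ is a minimizer. The paper gives no details beyond ``easily seen from the strong duality of \cref{prop:weakdual},'' so your argument is a faithful and complete expansion of the intended one-liner.
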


Here, $2p$ is naturally defined via the identification between
$\mathbb{S}^n$ and (the subset of) $(\mathbb{Z}^*)^{kn}$.
Observe that $a_{t-1}=2a_t-\sum\{\chi_e\mid e\in E,\ \text{$a_{t-1}(e)$ is odd}\}$.
The key property is the following sensitivity result.

\begin{lemma}
\label{lem:sensitivity}
Let $a:E\rightarrow\mathbb{Z}_+$ be a positive edge-cost.
Let $e\in E$ be an edge satisfying $a(e)\geq 2$,
and $a':=a-\chi_{e}$.
Let $p\in \opt(h_a)$.
Then $\min_{q\in\opt(h_{a'})}\lVert p-q\rVert\leq 1$.
\end{lemma}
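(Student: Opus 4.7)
The plan is to deduce \cref{lem:sensitivity} from an L-convex proximity theorem on the subtree space $\mathbb{S}^n$, which we expect to be developed alongside \cref{lem:descentnum} in \cref{sec:disc}.

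First I pin down the perturbation. With $e=ij$, the two objectives differ only through the term on edge $e$:
\[
(h_{a'}-h_a)(p) = u_e\bigl[(\dist(p_i,p_j) - a_e + 1)^+ - (\dist(p_i,p_j) - a_e)^+\bigr] =: \phi_e(p).
\]
The function $\phi_e$ depends only on $p_i,p_j$, is nondecreasing in $\dist(p_i,p_j)$, and (because $\dist(p_i,p_j)\in\mathbb{Z}^*_+$) takes only the three values $0, u_e/2, u_e$. Both $h_a$ and $h_{a'}=h_a+\phi_e$ are L-convex on $\mathbb{S}^n$, because the replaced edge-term $(\cdot - a_e + 1)^+$ has exactly the same L-convex shape as $(\cdot - a_e)^+$ in $h_a$.

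Let $q\in\opt(h_{a'})$ minimize $\|p-q\|$, and suppose for contradiction that $\|p-q\|\geq 3/2$. I would invoke the rounded-midpoint operation on $\mathbb{S}^n$ (the analogue on subtree-valued potentials of $\lfloor(p+q)/2\rfloor$ and $\lceil(p+q)/2\rceil$ on $\mathbb{Z}^n$) to produce $\tilde p,\tilde q\in\mathbb{S}^n$ satisfying the L-convex midpoint inequalities
\begin{align*}
h_a(\tilde p)+h_a(\tilde q)&\leq h_a(p)+h_a(q),\\
h_{a'}(\tilde p)+h_{a'}(\tilde q)&\leq h_{a'}(p)+h_{a'}(q),
\end{align*}
together with $\max\{\|p-\tilde p\|,\|p-\tilde q\|\}<\|p-q\|$. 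Combining these with the optimality conditions $h_a(p)\leq h_a(\tilde p),h_a(\tilde q)$ and $h_{a'}(q)\leq h_{a'}(\tilde p),h_{a'}(\tilde q)$, one obtains $h_a(\tilde p),h_a(\tilde q)\in[h_a(p),h_a(q)]$ and $h_{a'}(\tilde p),h_{a'}(\tilde q)\in[h_{a'}(q),h_{a'}(p)]$; subtracting yields $\phi_e(\tilde p)+\phi_e(\tilde q)\leq \phi_e(p)+\phi_e(q)$ and $\phi_e(\tilde p),\phi_e(\tilde q)\geq \phi_e(q)$. A short case analysis on the three possible values of $\phi_e(p)-\phi_e(q)\in\{0,u_e/2,u_e\}$ then forces equality throughout, so $\tilde p,\tilde q\in\opt(h_{a'})$, contradicting the minimality of $\|p-q\|$.

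The main obstacle is the simultaneous midpoint construction on $\mathbb{S}^n$: this space is not a standard lattice, so $\tilde p,\tilde q$ must be defined to (i) remain in $\mathbb{S}^n$ (respect the subtree representation \cref{eq:subtree}); (ii) satisfy the L-convex midpoint inequality universally, in particular for both $h_a$ and $h_{a'}$; and (iii) strictly decrease $\max\{\|p-\tilde p\|,\|p-\tilde q\|\}$ below $\|p-q\|$ whenever the latter is at least $3/2$. These are exactly the midpoint/median tools needed for the proof of \cref{lem:descentnum}, which the DCA-beyond-$\mathbb{Z}^n$ framework of \cref{sec:disc} is expected to deliver. Once those tools are in place, the closing case analysis is routine, since $\phi_e$ takes only three values.
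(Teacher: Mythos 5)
Your plan correctly isolates the key structural fact that $\phi_e:=h_{a'}-h_a$ takes only the three values $0,u_e/2,u_e$, and your intuition that some discrete convexity inequality should force the optimizer to move only a short distance is right in spirit. However, the proposal rests on a tool that the paper's framework does not provide.

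The L-convexity defined on oriented median graphs in \cref{sec:disc} consists of submodularity over bounded pairs plus $\wedge$-/$\vee$-convexity over antipodal pairs; it is \emph{not} a discrete midpoint inequality, and the paper explicitly notes that it is not equivalent to classical L$^\natural$-convexity. There is no ``rounded midpoint'' operation in this framework, and \cref{lem:descentnum} (which you expect to supply one) is in fact proved from the SDA iteration bound \cref{thm:iterSDA}, a $\Delta$-distance argument, not from any midpoint construction. This is precisely why \cref{sec:sensitivity} introduces a \emph{new} notion, N-convexity, defined via normal-path operations $T\to T'$ and $T\twoheadrightarrow T'$: the inequality $f(x)+f(y)\geq f(x\to y)+f(y\to x)$ (and its $\twoheadrightarrow$ counterpart) is the substitute for midpoint convexity, and $h$ is shown to be N-convex (\cref{prop:Nconv_h}) by a separate argument that does not follow from \cref{prop:Lconv}. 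The actual proof then walks the normal path $(p=p^0,\dotsc,p^\ell=q)$ from $p$ to the $\Delta$-nearest $q\in\opt(h_{a'})$ and, using the two N-convexity inequalities together with the optimality of $p$ for $h_a$ and the $\Delta$-minimality of $q$, shows $h_a$ is nondecreasing and $h_{a'}$ is strictly decreasing along the path; hence $\phi_e$ is strictly decreasing, and since $\phi_e$ takes three values, $\ell\leq 2$, i.e., $\lVert p-q\rVert\leq 1$.

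There is also a sign problem in your closing deduction. From the midpoint inequalities $h_a(\tilde p)+h_a(\tilde q)\leq h_a(p)+h_a(q)$, $h_{a'}(\tilde p)+h_{a'}(\tilde q)\leq h_{a'}(p)+h_{a'}(q)$ and the optimality bounds $h_a(\tilde p),h_a(\tilde q)\geq h_a(p)$, $h_{a'}(\tilde p),h_{a'}(\tilde q)\geq h_{a'}(q)$, one obtains $\phi_e(\tilde p),\phi_e(\tilde q)\in[\phi_e(q),\phi_e(p)]$, but \emph{not} $\phi_e(\tilde p)+\phi_e(\tilde q)\leq\phi_e(p)+\phi_e(q)$: that would need a lower bound on $h_a(\tilde p)+h_a(\tilde q)$ of the form $\geq h_a(p)+h_a(q)$, which is the opposite direction from what the midpoint inequality supplies. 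The case analysis can be partly repaired by exploiting the strict inequality $h_{a'}(p)>h_{a'}(q)$, but the whole argument still hinges on a midpoint operation that has not been constructed on $\mathbb{S}^n$. Replacing the midpoint idea by the normal-path chain argument and N-convexity is what makes the proof go through.
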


We prove \cref{lem:sensitivity}
in \cref{sec:sensitivity} using DCA beyond $\mathbb{Z}^n$.

\begin{proof}[Proof of \cref{thm:main}]
For the initial phase $t=\gamma$, an optimal potential can be obtained in $O(n)$ iterations
of \textsc{Descent} by \cref{cor:pseudo}.
For each subsequent scaling phase, an optimal potential can be obtained in
$O(m)$ iterations of \textsc{Descent} by \cref{lem:descentnum,lem:double,lem:sensitivity}.
Let us analyze the time complexity.
Since we first do the reduction for \textbf{(CP)} (\cref{subsec:red}),
the number of phases is $O(\log(mUA))=O(\log(nUA))$.
Thus the number of iterations of \textsc{Descent} is $O(m+m\log(nUA))=O(m\log(nUA))$,
and the overall time complexity is $O(m\log(nUA)\cdot \MF(kn,m+k^2 n))$.
\end{proof}

\section{Analysis via DCA beyond $\mathbb{Z}^n$}
\label{sec:disc}

In this section, we analyze the cost-scaling algorithm
in terms of Discrete Convex Analysis (DCA) beyond $\mathbb{Z}^n$.
We show that \textsc{Descent} is viewed as the \emph{steepest descent algorithm (SDA)}
for an L-convex function on a median graph constructed from the subtree space,
and prove \cref{lem:descentnum} from a general result of SDA.
Also we introduce a new discrete convexity concept, \emph{N-convexity},
to prove \cref{lem:sensitivity}.

\subsection{L-convex Functions on Median Graphs}

We briefly introduce a theory of
discrete convexity on graph structures specialized to median graphs.
See \cite{Chepoi2000Graphs,Hirai2016Discrete,Hirai2018L} for more general theory.
We use basic terminologies of poset (partially ordered set) and lattice.

Let $G$ be a (possibly infinite) undirected graph.
We denote the set of nodes also by $G$.
Let $d=d_G$ be the shortest path metric on $G$.
For $u,v\in G$, the \emph{metric interval} $I(u,v)$ of $u,v$ is defined as
 the set of $w\in G$ satisfying $d(u,v)=d(u,w)+d(w,v)$.
A \emph{median graph} $G$ is a graph that for any $u,v,w\in G$,
$I(u,v)\cap I(v,w)\cap I(w,u)$ is a singleton.

We consider an \emph{orientation} on edges of a median graph $G$,
that takes either $u\searrow v$ or $u\swarrow v$ on each edge $uv$.
An orientation is \emph{admissible}
if for any 4-cycle $(u_1,u_2,u_3,u_4)$,
$u_1\searrow u_2$ implies $u_4\searrow u_3$.
An \emph{oriented} median graph is a median graph endowed with an admissible orientation.
It is known~\cite[Lemma 2.4]{Hirai2016Discrete} that
an admissible orientation is acyclic.
By considering the transitive (and reflective) closure of the orientation,
$G$ is viewed as a poset with a partial order $\preceq$,
where $u\swarrow v\Rightarrow u\preceq v$.
For $u\preceq v$, the \emph{interval} $[u,v]$
is the set of $w\in G$ satisfying $u\preceq w\preceq v$.
It is known (see, e.g., \cite{Vel1993Theory}) that
every interval of an oriented median graph $G$
 is a distributive lattice.
An oriented median graph $G$ is \emph{well-oriented}
if every interval is a Boolean lattice.
The (Cartesian) product of two well-oriented median graphs
becomes a well-oriented median graph with a natural orientation.

Let $G$ be a well-oriented median graph.
A pair of nodes $(u,v)$ of $G$ is \emph{non-comparable}
 if $u\not\preceq v$ and $u\not\succeq v$.
Also the pair $(u,v)$ is \emph{bounded above (resp. bounded below)}
 if there exists $w\in G$ such that $u\preceq w\succeq v$ (resp. $u\succeq w\preceq v$).
We call such a pair \emph{upper-bounded} (resp. \emph{lower-bounded}).
For an upper-bounded (resp. lower-bounded) pair $(u,v)$,
 there always exists the join $u\vee v$ (resp. the meet $u\wedge v$) of $u$ and $v$.
A pair bounded above and below is called \emph{bounded}.
A sequence $(u=u_0,u_1,\dotsc,u_\ell=v)\subseteq G$ is a \emph{chain
from $u$ to $v$ with length $\ell$} if $u_{i-1}\preceq u_i$ for any $i=1,\dotsc,\ell$.
For $u\preceq v$ (or $v\preceq u$),
 let $r[u,v]$ be the maximum length of a chain from $u$ to $v$ (or from $v$ to $u$).
A non-comparable lower-bounded pair $(u,v)$ of $G$
is \emph{$\wedge$-antipodal} if
every upper-bounded pair $(a,b)$ of $G$ with
 $u\succeq a\succeq u\wedge v\preceq b\preceq v$
satisfies
\begin{equation}
\label{eq:anti_def}
	r[a,u]r[b,v]\geq r[u\wedge v,a]r[u\wedge v,b].
\end{equation}
This means that a point $(r[u\wedge v,a],r[u\wedge v,b])\in \mathbb{R}^2$
is under or lying on the segment connecting $(r[u\wedge v,u],0)$ and $(0,r[v\wedge v,v])$
(since $r[u\wedge v,u]=r[u\wedge v,a]+r[a,u]$ and so is $r[u\wedge v,v]$).
By the definition, a $\wedge$-antipodal pair is not (upper-)bounded.
Similarly, a non-comparable upper-bounded pair $(u,v)$ of $G$
is \emph{$\vee$-antipodal} if every lower-bounded pair $(a,b)$ of $G$ with
$u\preceq a\preceq u\vee v\succeq b\succeq v$ satisfies
\cref{eq:anti_def} with replacing $\wedge$ by $\vee$.
An \emph{antipodal} pair of $G$ is a pair $\wedge$-antipodal or $\vee$-antipodal.

A function $f:G\rightarrow\mathbb{R}$ is called \emph{L-convex}~\cite[Proposition~8]{Hirai2018L}
 if it satisfies the following conditions:
\begin{itemize}
\item For every bounded pair $(u,v)$ of $G$,
 it satisfies the \emph{submodularity inequality}
\begin{equation}
\label{eq:submod}
    f(u)+f(v)\geq f(u\wedge v)+f(u\vee v).
\end{equation}
\item For every $\wedge$-antipodal pair $(u,v)$ of $G$,
 it satisfies the \emph{$\wedge$-convexity inequality}
\begin{equation}
\label{eq:anti}
    r[u\wedge v,v] f(u)+ r[u\wedge v,u] f(v)
        \geq (r[u\wedge v,u]+r[u\wedge v,v])f(u\wedge v).
\end{equation}
Equivalently, the point $(0,f(u\wedge v))\in\mathbb{R}^2$ is under or lying on the segment
between $(-r[u\wedge v,u],f(u))$ and $(r[u\wedge v,v],f(v))$.
\item For every $\vee$-antipodal pair $u,v$ of $G$,
it satisfies \cref{eq:anti} with replacing $\wedge$ by $\vee$.
\end{itemize}

The global optimality of an L-convex function $f$
is characterized by a \emph{local} condition.
For $u\in G$, the \emph{principal ideal}  $\mathcal{I}_u$ 
(resp. \emph{principal filter} $\mathcal{F}_u$)
is the set of $v\in G$ satisfying $v\preceq u$ (resp. $v\succeq u$).
$u\in G$ is a minimizer of $f$ if and only if
$u$ is a minimizer of $f$ restricted to $\mathcal{F}_u\cup \mathcal{I}_u$~\cite[Theorem~4.2]{Hirai2018L}.
This fact suggests a natural minimization algorithm,
called the \emph{steepest descent algorithm} (SDA); see \cref{algo:SDA}.
The algorithm iteratively updates the current solution $u$
to a local minimizer around $u$.
Then the output of SDA (if it terminates) is optimal.
The number of iterations is bounded
by the \emph{$\Delta$-distance} between the initial point and $\opt(f)$.
Here, a \emph{$\Delta$-path} $(u=u_0,u_1,u_2,\dotsc,u_\ell=v)$
 from $u$ to $v$ is a sequence of $G$ such that
for any $1\leq i\leq \ell$, there exists a cube subgraph containing $\{u_{i-1},u_i\}$,
and the \emph{$\Delta$-distance} $d^\Delta_G(u,v)=d^\Delta(u,v)\ (u,v\in G)$ is
 the minimum length of $\Delta$-paths from $u$ to $v$.

\begin{algorithm}[t]
\caption{SDA}
\label{algo:SDA}
\begin{enumerate}
\setcounter{enumi}{-1}
\item Take any $u\in G$.
\item Find a minimizer $v$ of $f$ over $\mathcal{F}_u\cup\mathcal{I}_u$.
\item If $f(v)=f(u)$, then stop; output $u$. Otherwise update $u$ to $v$
    and go to Step 1.
\end{enumerate}
\end{algorithm}

\begin{theorem}[{\cite[Theorem 4.3]{Hirai2018L}}]
\label{thm:iterSDA}
The number of iterations of SDA with an initial point $u\in G$
is at most $\min_{v\in\opt(f)}d^\Delta(u,v)+2$.
\end{theorem}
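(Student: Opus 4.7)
The plan is to bound the number of iterations of SDA by exhibiting, at each step, a specific optimum $v^{*}$ toward which the steepest descent move makes $\Delta$-progress. First I would fix an optimum $v^{*}$ realizing $\min_{v\in\opt(f)}d^{\Delta}(u,v)$ and argue, by induction on the iteration index $t$, that the SDA iterate $u_{t}$ either already equals some optimum or satisfies $d^{\Delta}(u_{t},v^{*})\leq d^{\Delta}(u_{0},v^{*})-t$ up to a small boundary correction accounting for the ``$+2$''. Since $d^{\Delta}(\cdot,\cdot)\geq 0$, this immediately yields the desired bound on the number of iterations.

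The inductive step is where the three L-convex inequalities are used, and I would split according to the lattice relationship between the current iterate $u_{t}$ and the target $v^{*}$. In the comparable case $v^{*}\in\mathcal{I}_{u_{t}}\cup\mathcal{F}_{u_{t}}$, the local optimality characterization from \cite{Hirai2018L} combined with submodularity on the Boolean lattice $[u_{t}\wedge v^{*},u_{t}\vee v^{*}]$ shows that the steepest descent minimizer $u_{t+1}$ over $\mathcal{I}_{u_{t}}\cup\mathcal{F}_{u_{t}}$ can be taken to lie on a cube-path from $u_{t}$ towards $v^{*}$, so $d^{\Delta}(u_{t+1},v^{*})\leq d^{\Delta}(u_{t},v^{*})-1$. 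In the bounded but non-comparable case, I would apply submodularity \eqref{eq:submod} to the pair $(u_{t},v^{*})$: this forces either $u_{t}\wedge v^{*}$ or $u_{t}\vee v^{*}$ to have value strictly less than $f(u_{t})$, and that point sits in $\mathcal{I}_{u_{t}}\cup\mathcal{F}_{u_{t}}$ on a cube-path shortcut of any $\Delta$-path from $u_{t}$ to $v^{*}$.

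The antipodal case is the main obstacle: here $v^{*}\notin\mathcal{I}_{u_{t}}\cup\mathcal{F}_{u_{t}}$, the meet or join lies ``outside'' the interval searched by SDA, and the submodularity inequality is not available. For a $\wedge$-antipodal pair $(u_{t},v^{*})$ I would use the antipodal inequality \eqref{eq:anti} to show that on the chain from $u_{t}$ down to $u_{t}\wedge v^{*}$ the function strictly decreases at rate $(f(u_{t})-f(v^{*}))/r[u_{t}\wedge v^{*},v^{*}]>0$, so SDA chooses a descent point $u_{t+1}\in\mathcal{I}_{u_{t}}$ that lies strictly below $u_{t}$ in the partial order. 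The key technical lemma is that this $u_{t+1}$ is a cube-neighbor of $u_{t}$ on a $\Delta$-geodesic from $u_{t}$ to some (possibly new) optimum $v^{**}$ with $d^{\Delta}(u_{0},v^{**})=d^{\Delta}(u_{0},v^{*})$; this lemma is proved by combining the antipodal inequality with the median-graph property, which guarantees that the median of $u_{t}$, $u_{t+1}$, $v^{*}$ extends to an optimum via a further median-graph manipulation.

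Once monotone $\Delta$-progress is established in all three cases, the final bound follows by summing. The ``$+2$'' slack accommodates (i) the terminal SDA iteration that certifies $f(v)=f(u)$ without a strict decrease, and (ii) one adjustment step that may be needed when the antipodal analysis replaces $v^{*}$ by $v^{**}$ at an intermediate iteration, so that the telescoping of the one-step distance decreases loses at most one unit relative to the straight-line bound $d^{\Delta}(u_{0},v^{*})$. Assembling these pieces gives $T\leq\min_{v\in\opt(f)}d^{\Delta}(u,v)+2$, as required.
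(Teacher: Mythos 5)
This theorem is only cited in the paper (from \cite[Theorem~4.3]{Hirai2018L}); the paper contains no proof of it, so there is no in-paper argument to compare against. Evaluating your proposal on its own terms, the high-level skeleton (fix a nearest optimum $v^*$, show $\Delta$-progress per iteration, absorb slack into the ``$+2$'') is the right shape of argument, but the inductive step is not actually established and I do not think the proof as sketched can be completed without substantial further machinery.

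Three concrete gaps. First, your trichotomy of the pair $(u_t,v^*)$ into comparable, bounded-non-comparable, and antipodal is not exhaustive: in a general well-oriented median graph a non-comparable pair can be neither bounded nor $\wedge$- nor $\vee$-antipodal, and the L-convexity axioms give you inequalities only for bounded and antipodal pairs. Your induction therefore silently skips the generic case, which is in fact the one needing the most work (one has to decompose a general geodesic into pieces where the axioms apply). Second, the antipodal case rests on the unproved ``key technical lemma'' that the median of $u_t$, $u_{t+1}$, $v^*$ ``extends to an optimum $v^{**}$ via a further median-graph manipulation.'' This is the crux of the whole argument and it is not a consequence of the median-graph axioms alone: it requires a structural statement about $\opt(f)$ (for instance that it is a convex or gated subgraph with respect to the $\Delta$-metric), which in turn must be derived from L-convexity and is not free. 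Third, the replacement of $v^*$ by $v^{**}$ mid-induction breaks the telescoping you need for the final sum. You assert $d^\Delta(u_0,v^{**})=d^\Delta(u_0,v^*)$ but give no reason; in fact if the target optimum can change at every iteration, one must show that the quantity $\min_{v\in\opt(f)}d^\Delta(u_t,v)$ itself is monotone, and this is precisely what is hard. Tracking a fixed $v^*$ does not obviously yield this, because a single SDA step (which, note, can jump to any point of the principal filter/ideal, not just a cube-neighbor — your ``cube-path towards $v^*$'' claim would itself need proof) may increase the distance to the particular $v^*$ you chose while decreasing the distance to another optimum. Without pinning down the gatedness/convexity of $\opt(f)$ and a precise one-step progress lemma, the ``$+2$'' accounting at the end is speculation rather than proof.
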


The half-integer lattice $(\mathbb{Z}^*)^k$ is regarded as an oriented grid graph,
where two half-integers $x,y$ are joined by an edge
 if $x-y=\pm(1/2)\chi_i$ for some $i$,
with the orientation given by $x\swarrow y$ if $x_i\in \mathbb{Z}$
(equivalently, $y_i\in \mathbb{Z}^*\setminus\mathbb{Z}$).
Then $(\mathbb{Z}^*)^k$ is a well-oriented median graph.
Also $(\mathbb{Z}^*_+)^k$, the induced subgraph of $(\mathbb{Z}^*)^k$
to the nonnegative orthant,
is a well-oriented median graph.

The following simple L-convex functions on median graph $(\mathbb{Z}^*)^k$
are building blocks
to represent the objective function of DTB
as an L-convex function on the median graph of the subtree space.

\begin{lemma}
\label{lem:Lconv_grid}
\begin{enumerate}
\renewcommand{\labelenumi}{\textup{(\arabic{enumi})}}
\item $(\mathbb{Z}^*)^k\ni x\mapsto a^\top x+b$ is L-convex for $a\in \mathbb{R}^k$ and $b\in\mathbb{R}$.
\item $(\mathbb{Z}^*)^k\ni x\mapsto (x_i+x_j-a)^+$ is L-convex for $i,j\in\{1,\dotsc,k\}$ and $a\in\mathbb{Z}$.
\item The L-convexity is preserved under an individual sign inversion of a variable,
i.e., if a function $f:(\mathbb{Z}^*)^k\to \mathbb{R}$ is L-convex,
 then $x\mapsto f(x_1,\dotsc,-x_i,\dotsc,x_k)$ is also L-convex.
\end{enumerate}
\end{lemma}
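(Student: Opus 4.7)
I would verify directly the three defining conditions of L-convexity (submodularity on bounded pairs, $\wedge$-convexity on $\wedge$-antipodal pairs, $\vee$-convexity on $\vee$-antipodal pairs) for each of the three statements, after first recording two structural identities on $(\mathbb{Z}^*)^k$. \emph{Identity (A)}: for any bounded pair $(u,v)$, the interval $[u\wedge v,u\vee v]$ is a Boolean cube, and $u+v=u\wedge v+u\vee v$ coordinatewise. \emph{Identity (B)} (the ``anti-parallel'' identity): for any $\wedge$-antipodal pair $(u,v)$ one has $r[u\wedge v,v](u-u\wedge v)+r[u\wedge v,u](v-u\wedge v)=0$, with the analogous identity around $u\vee v$ for $\vee$-antipodal pairs. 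Identity (A) is immediate from the orientation convention (integer coordinates sit below half-integer coordinates in each cube). Identity (B) is the substantive one: the absence of a common upper bound of $u$ and $v$, combined with the geometric $r$-inequality in the definition of antipodal, forces $u$ and $v$ to deviate from $u\wedge v$ in opposite half-integer directions with magnitudes proportional to the respective ranks.

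\emph{Part (1) --- linear $f(x)=a^\top x+b$.} Applying $f$ to (A) yields $f(u)+f(v)=f(u\wedge v)+f(u\vee v)$, and applying it to (B) yields $r[u\wedge v,v]\,f(u)+r[u\wedge v,u]\,f(v)=(r[u\wedge v,u]+r[u\wedge v,v])\,f(u\wedge v)$, with the symmetric identity for $\vee$-antipodal pairs. Hence all three L-convexity inequalities hold with equality.

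\emph{Part (2) --- $f(x)=\phi(x_i+x_j)$ with $\phi(t)=(t-a)^+$, $a\in\mathbb{Z}$.} Let $y(x):=x_i+x_j$. For $\wedge$-convexity (and symmetrically $\vee$-convexity), project (B) to the coordinate $y$: this exhibits $y(u\wedge v)$ as a convex combination of $y(u)$ and $y(v)$ with weights proportional to $r[u\wedge v,u]$ and $r[u\wedge v,v]$, and Jensen's inequality for the convex $\phi$ yields the required inequality. For submodularity, (A) gives $y(u)+y(v)=y(u\wedge v)+y(u\vee v)$. A short case analysis on the sign pattern of the Boolean cube $[u\wedge v,u\vee v]$ restricted to coordinates $i,j$ shows that either (i) all four $y$-values lie in a unit interval $[n,n+1]$ with $n\in\mathbb{Z}$, where $\phi$ is affine because the kink at integer $a$ is at most an endpoint, giving equality in submodularity via (A); or (ii) the multiset $\{y(u),y(v)\}$ is at least as spread as $\{y(u\wedge v),y(u\vee v)\}$, so convexity of $\phi$ combined with (A) yields submodularity. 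The hypothesis $a\in\mathbb{Z}$ is precisely what rules out a kink strictly inside the interval in case (i).

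\emph{Part (3) --- sign inversion.} The map $\sigma(x_1,\dotsc,x_k):=(x_1,\dotsc,-x_i,\dotsc,x_k)$ is a bijection of $(\mathbb{Z}^*)^k$ that sends edges to edges, and since negation preserves the integrality of the $i$-th coordinate it also preserves the orientation relation $\swarrow$. Hence $\sigma$ is an automorphism of the oriented median graph, so it preserves bounded and antipodal pairs, meets, joins, and rank distances. Consequently $f\circ\sigma$ inherits the three L-convexity inequalities from $f$. The main obstacle throughout is verifying identity (B), which pins down the ``mirror-image'' structure of $\wedge$- and $\vee$-antipodal pairs in the integer/half-integer grid; once (B) is in hand, parts (1) and (2) reduce to linearity and to one-dimensional convexity of $\phi$, respectively.
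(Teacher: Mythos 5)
Your outline mirrors the paper's: first pin down the structure of bounded and antipodal pairs in $(\mathbb{Z}^*)^k$, then reduce the three L-convexity inequalities to linearity / one-dimensional convexity of $\phi$. Identity (A) is indeed immediate. But Identity (B) is exactly where the work is, and your justification is not sufficient. The statement "the absence of a common upper bound, combined with the geometric $r$-inequality, forces $u$ and $v$ to deviate from $u\wedge v$ in opposite half-integer directions with magnitudes proportional to the ranks" does not by itself rule out a coordinate $j$ with $|u_j-v_j|=1/2$. Such a coordinate is entirely compatible with $(u,v)$ being lower-bounded and having no common upper bound; what eliminates it is the $\wedge$-convexity inequality \eqref{eq:anti_def} itself. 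The paper's proof makes this concrete: taking $a:=u$ and $b:=u\wedge v$ except $b_j:=v_j$ yields an upper-bounded pair in the required interval with $r[a,u]\,r[b,v]=0 < r[u\wedge v,a]\,r[u\wedge v,b]$, contradicting antipodality. You need this (or an equivalent) explicit step; without it Identity (B) is unproven, and (B) is precisely what you flagged as "the substantive one." Once (B) is established, note also that it holds because $r[u\wedge v,u]=r[u\wedge v,v]$ and $u\wedge v=(u+v)/2$; your phrasing "magnitudes proportional to the respective ranks" is only correct because those ranks coincide, so the vector identity does not generalize the way your prose suggests.

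Given (B), the rest checks out. Your submodularity case analysis for part (2) — split according to whether the restricted Boolean cube puts the extreme $y$-values on the diagonal $\{u,v\}$ or on $\{u\wedge v,u\vee v\}$ — is a valid alternative to the paper's reduction (translating to $\{-1/2,0,1/2\}^2$ and treating $a\le -1$, $a\ge 1$, $a=0$ separately); both use $a\in\mathbb{Z}$ to kill the kink when $\{u\wedge v,u\vee v\}$ realizes the extremes. Part (3) via the orientation-preserving automorphism $\sigma$ is exactly the paper's argument.
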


\begin{proof}
For an antipodal pair $(x,y)$ of $(\mathbb{Z}^*)^k$,
at least one of indices $i$ satisfies $\lvert x_i-y_i\rvert = 1$,
and $x_j=y_j$ for any other index $j$.
Indeed, if $(x,y)$ is $\wedge$-antipodal and
 there exists $j$ with $\lvert x_j-y_j\rvert > 1$,
then $x,y$ cannot be bounded below.
If there exists $j$ with $\lvert x_j-y_j\rvert=1/2$,
then the following $(a,b)$ violates \cref{eq:anti_def}:
We can assume $y_j\in \mathbb{Z}^*\setminus\mathbb{Z}$ without loss of generality.
Let $a:=x$ and let $b\in (\mathbb{Z}^*)^k$ defined by
$b_j:=y_j$ and $b_i:=(x\wedge y)_i$ for any other $i\neq j$.
Then $(a,b)$ is bounded above and satisfies
 $x\succeq a\succeq x\wedge y\preceq b\preceq y$ but violates \cref{eq:anti_def}.
For a $\vee$-antipodal pair, the argument is the same.
Then $x\wedge y$ (or $x\vee y$) is equal to $(1/2)(x+y)$
and the $\wedge$($\vee$)-convexity inequality \cref{eq:anti}
follows from the ordinary convexity inequality.
Thus (1) and (2) satisfy the $\wedge$($\vee$)-convexity inequality \cref{eq:anti}.

If $x,y\in (\mathbb{Z}^*)^k$ is bounded,
then $\lvert x_i-y_i\rvert\leq 1/2$ for any $i$.
Thus the submodularity inequality \cref{eq:submod} for (1)
 is obvious from its linearity.
For (2), it is sufficient to show
 the submodularity inequality \cref{eq:submod}
for $f':\{-1/2,0,1/2\}^2\to \mathbb{Z}^*$ defined by
$f'(x_1,x_2):=(x_1+x_2-a)^+\ (a\in \mathbb{Z})$.
If $a\leq -1$, then $f'$ becomes a linear function $x_1+x_2-a$.
If $a\geq 1$, then $f'\equiv 0$.
So we may assume $a=0$.
Then the submodularity inequality \cref{eq:submod} is easily verified, e.g.,
$f'(-1/2,0)+f'(0,1/2)=0+1/2>0+0=f'(-1/2,1/2)+f'(0,0)$.

(3) The boundedness of a pair $(x,y)$ of $(\mathbb{Z}^*)^k$
and the antipodality are preserved
 under an individual sign inversion.
Also the sign inversion commutes with $\wedge$ and $\vee$,
i.e., the sign inversion of $x\wedge y$ (resp. $x\vee y$) is
the meet (resp. join) of the sign inversions of $x$ and $y$.
Thus the statement holds.
\end{proof}

We note this L-convexity is not equivalent to
 the original L(${}^\natural$)-convexity in DCA~\cite{Murota2003Discrete}.

The following criteria is useful for checking L-convexity of
 a function defined on
the product of two median graphs.
A bounded pair $(x,y)$ of $G$ is \emph{$2$-bounded}
if there is no $z\in G$ satisfying
$x\prec z\prec x\vee y$ or $y\prec z\prec x\vee y$.

\begin{lemma}[{\cite[Proposition 3.8]{Hirai2016Discrete}}]
\label{lem:Lconv_two}
Let $G$ and $G'$ be median graphs.
A function $f:G\times G'\to \mathbb{R}$ is L-convex
 if and only if it satisfies
\begin{enumerate}
\renewcommand{\labelenumi}{\textup{(\arabic{enumi})}}
\item the submodularity inequality \eqref{eq:submod} for every 2-bounded pair $(x,y)$, and
\item the $\wedge$($\vee$)-convexity inequality \eqref{eq:anti} for every pair
    $(x,y)=((x_1,x_2),(y_1,y_2))$
    such that $x_1=y_1$ and $(x_2,y_2)$ is antipodal in $G'$,
    or $x_2=y_2$ and $(x_1,y_1)$ is antipodal in $G$.
\end{enumerate}
\end{lemma}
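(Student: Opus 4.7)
The plan is to prove both directions of the equivalence, with the forward direction ("$f$ L-convex $\Rightarrow$ (1) and (2)") being essentially immediate and the reverse direction requiring a reduction from general bounded/antipodal pairs to the restricted ones appearing in (1) and (2).

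The forward direction follows once we observe that a 2-bounded pair is in particular bounded, and that a pair of the form $((x_1,x_2),(x_1,y_2))$ with $(x_2,y_2)$ antipodal in $G'$ is itself antipodal in $G \times G'$: its meet and join (if they existed) would force a meet or join of $(x_2,y_2)$ to exist in $G'$, violating antipodality on the second factor (and symmetrically for the first). Both inequalities in the definition of L-convexity then specialize directly to (1) and (2).

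For the reverse direction, the first task is to upgrade (1) from 2-bounded pairs to all bounded pairs. Given a bounded pair $(u,v)$, the interval $[u\wedge v,\, u\vee v]$ in the well-oriented median graph $G\times G'$ is a Boolean lattice, so we may write $u = (u\wedge v)\oplus A$ and $v = (u\wedge v)\oplus B$ for disjoint atom sets $A,B$. Inducting on $|A|+|B|$, with 2-bounded pairs (i.e.\ $|A|=|B|=1$) as the base case, we peel off a single atom $\alpha \in A$, apply submodularity to the 2-bounded pair $(u,\, u\setminus \alpha)$ together with a suitable reference element, and telescope; this is a standard argument for deriving full submodularity on a distributive lattice from its behavior on elementary squares.

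The core of the proof is upgrading (2) to the full antipodality inequality for every antipodal pair $(u,v) = ((u_1,u_2),(v_1,v_2))$ of $G\times G'$. The idea is to introduce the diagonal intermediate $z := (u_1,v_2)$ (or $(v_1,u_2)$) and express the desired $\wedge$-convexity inequality for $(u,v)$ as a combination of: the $\wedge$-convexity inequalities for $(u,z)$ and $(v,z)$, which lie in the single-coordinate regime covered by (2) whenever the relevant projected pair is antipodal in its factor, and submodularity inequalities involving $z$, $u\wedge v$, and the partial meets $(u_1,u_2\wedge v_2)$, $(u_1\wedge v_1,v_2)$, which are now available by the first step. One must do a case analysis based on which of the projected pairs $(u_1,v_1)$ and $(u_2,v_2)$ fails to be upper-bounded (at least one must, since $(u,v)$ is not upper-bounded in the product), weight the inequalities by the appropriate chain lengths $r[\cdot,\cdot]$, and use that $r$ in the product graph adds across factors so that the antipodality coefficients in (2) combine correctly. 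The $\vee$-antipodal case is dual.

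The main obstacle will be this last step: the bookkeeping of chain-length weights when combining a single-factor antipodality inequality with submodularity, and handling the case in which both projected pairs are antipodal (so that two applications of (2) must be glued through $z$). Everything else is routine median-graph manipulation once the well-orientation and the cube structure of bounded intervals are in hand.
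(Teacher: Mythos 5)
The paper does not prove \cref{lem:Lconv_two}; it cites it verbatim from Hirai (2016), Proposition~3.8. So there is no internal proof to compare against, and the only question is whether your argument is complete. It is not: the step you yourself flag as ``the main obstacle'' is left unresolved, and it is not merely bookkeeping. To run the $z=(u_1,v_2)$ reduction you need to apply (2) to the pairs $(u,z)$ and $(v,z)$, which requires knowing that the projected pairs $(u_2,v_2)$ and $(u_1,v_1)$ are themselves $\wedge$-antipodal in their factors. This is a genuine structural lemma about $\wedge$-antipodal pairs in a product and does not follow from ``at least one projection fails to be upper-bounded.'' One has to show: (i) for each coordinate $i$, either $u_i=v_i$ or $(u_i,v_i)$ is non-comparable and not upper-bounded (this comes from testing the antipodality inequality with pairs $(a,b)$ where $a=u$ and $b$ has the $i$-th coordinate replaced by $v_i$ and the other by the meet); (ii) writing $\alpha_i:=r[u_i\wedge v_i,u_i]$ and $\beta_i:=r[u_i\wedge v_i,v_i]$, the antipodality inequality for the two ``diagonal'' intermediate pairs forces the ratio identity $\alpha_1\beta_2=\alpha_2\beta_1$; and (iii) the chain-length condition for $(u_1,v_1)$ alone (and symmetrically $(u_2,v_2)$) is extracted from the product's chain-length condition only after multiplying two instances of it and using the ratio identity from (ii). None of this is stated in your sketch.

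Moreover, the ratio identity in (ii) is exactly the hinge that makes the convex combination of your three inequalities (two antipodality inequalities along $z$ and one submodularity across the square $\{u\wedge v,(u_1,u_2\wedge v_2),(u_1\wedge v_1,v_2),z\}$) close up with the correct coefficients $\mu_1=\alpha_1+\beta_1$, $\mu_2=\alpha_2+\beta_2$, $\lambda=(\alpha_1+\beta_1)(\alpha_2+\beta_2)$: the $f(z)$ and intermediate terms cancel precisely because $\alpha_2(\alpha_1+\beta_1)+\beta_1(\alpha_2+\beta_2)=(\alpha_1+\beta_1)(\alpha_2+\beta_2)$, which is equivalent to $\alpha_1\beta_2=\alpha_2\beta_1$. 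Without identifying this relation the ``weighting by chain lengths'' you invoke has no reason to work. A smaller issue: in the forward direction you argue only that $((x_1,x_2),(x_1,y_2))$ is not bounded; you still need to check the chain-length inequality \eqref{eq:anti_def} for the product pair, which does reduce to the factor $G'$ because the first coordinate of any intermediate $(a,b)$ is pinned to $x_1$, but that verification should be written out. The submodularity-on-$2$-bounded-pairs-to-all-bounded-pairs upgrade via induction on the rank of the Boolean interval is fine and standard.
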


\subsection{L-convexity of DTB}
\label{subsec:LconvDTB}

Let $\mathbb{C},\mathbb{W}_s\ (s\in S)$ be subsets of $(\mathbb{Z}^*)^k$
 defined by
\begin{align*}
	\mathbb{C}&:=\{T\in(\mathbb{Z}^*)^k\mid T_s\geq 0\ (\forall s\in S)\},\\
	\mathbb{W}_s&:=\{T\in(\mathbb{Z}^*)^k\mid T_s\geq 0,\ T_t=T_{t'}\leq 0\ 
		(\forall t,t'\in S-s)\}.
\end{align*}
Let $\overline{\mathbb{S}}:=\mathbb{C}\cup\bigcup_{s\in S}\mathbb{W}_s$.
We endow $\overline{\mathbb{S}}$ with a graph structure as follows.
Now $\mathbb{C}$ is naturally regarded as an oriented grid graph $(\mathbb{Z}^*_+)^k$.
Also each $\mathbb{W}_s$ is viewed as an oriented grid graph $\mathbb{Z}^*_+\times (-\mathbb{Z}^*_+)$
via bijection $\mathbb{W}_s\ni T\mapsto (T_s,T_t)\in \mathbb{Z}^*_+\times (-\mathbb{Z}^*_+)$,
where $t\in S-s$.
This construction is well-defined,
since both constructions give the same (induced) subgraph on $\mathbb{C}\cap \mathbb{W}_s$.
This is nothing but the \emph{gated amalgam} of
$(\mathbb{Z}^*_+)^k$ and several $\mathbb{Z}^*_+\times (-\mathbb{Z}^*_+)$;
see \cref{fig:ex_subtree}.
It is a folklore that the gated amalgam of median graphs is a median graph;
see, e.g., \cite{Vel1993Theory}.
Observe that the well-orientedness keeps.
Thus we have

\begin{figure}
\centering
\includegraphics[width=11cm]{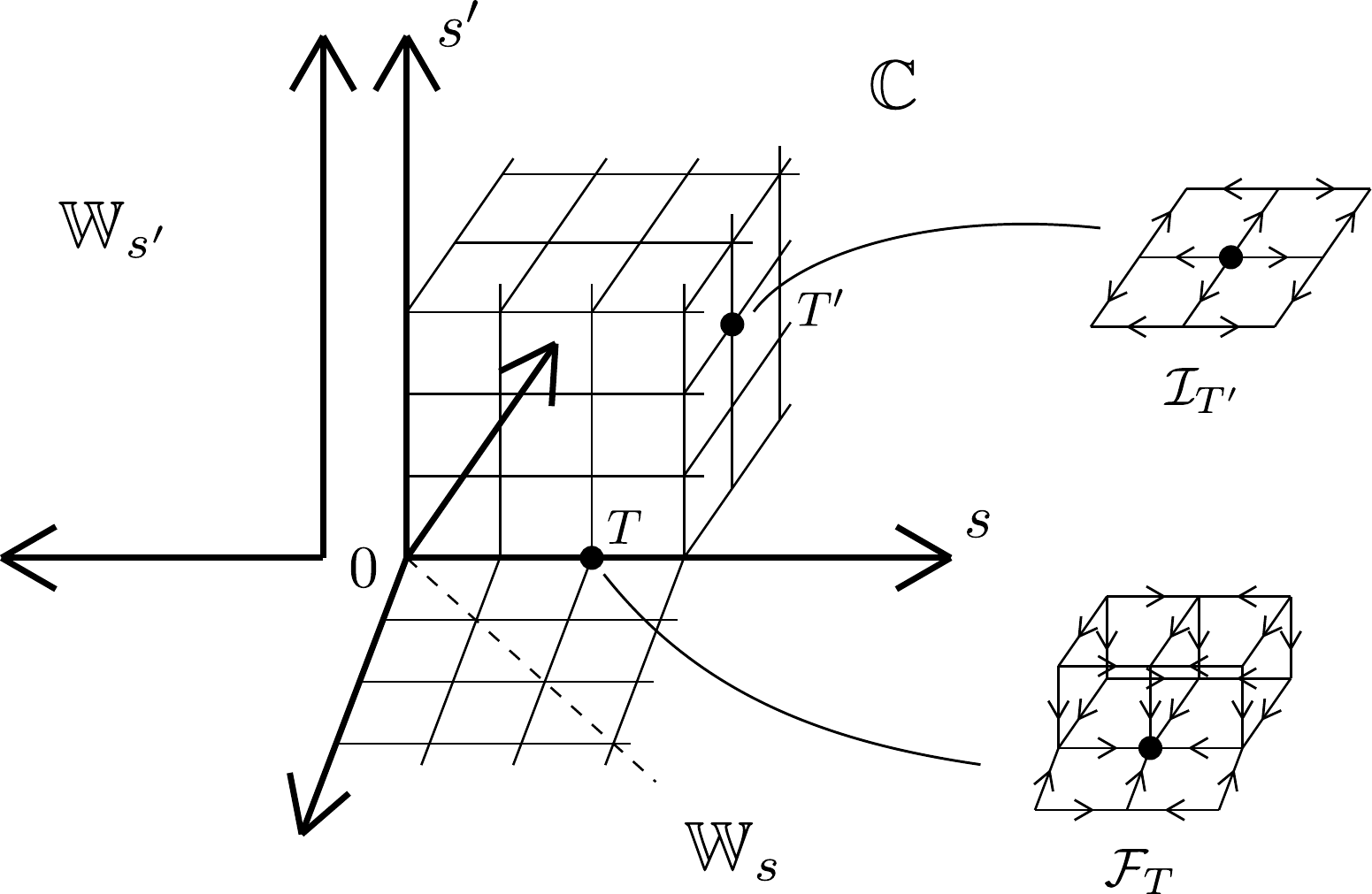}
\caption{The structure of $\overline{\mathbb{S}}$ ($\lvert S\rvert =3$).}
\label{fig:ex_subtree}
\end{figure}

\begin{proposition}
\label{prop:medgraph}
$\overline{\mathbb{S}}$ is a well-oriented median graph, and so is $\overline{\mathbb{S}}^n$.
\end{proposition}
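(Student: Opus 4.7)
The plan is to verify the statement in three stages, following the construction outlined immediately before the proposition. First, I would establish that the building blocks are well-oriented median graphs: $\mathbb{C}$ is the nonnegative orthant $(\mathbb{Z}_+^*)^k$ of the oriented half-integer grid $(\mathbb{Z}^*)^k$, and each $\mathbb{W}_s$ is, via the bijection $T \mapsto (T_s, T_t)$ for any fixed $t \in S-s$, isomorphic to $\mathbb{Z}_+^* \times (-\mathbb{Z}_+^*)$. Both are closed under the coordinate-wise median and inherit the admissible orientation of the ambient grid; their intervals are products of chains in $\mathbb{Z}^*$, hence Boolean lattices, so well-orientedness holds for each piece.

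Second, I would identify $\overline{\mathbb{S}}$ as a (repeated) gated amalgam of $\mathbb{C}$ with each $\mathbb{W}_s$ glued along the common ray $\mathbb{C} \cap \mathbb{W}_s = \{T \in \overline{\mathbb{S}} : T_s \geq 0,\ T_t = 0\ \forall t \in S-s\}$, which is a convex and hence gated subgraph in both pieces. The paragraph preceding the proposition already observes the key compatibility: the graph structures induced from $\mathbb{C}$ and from $\mathbb{W}_s$ coincide on each intersection, and distinct $\mathbb{W}_s, \mathbb{W}_{s'}$ intersect only in the origin, so the amalgam is well-defined. Invoking the folklore fact cited as \cite{Vel1993Theory} that a gated amalgam of median graphs is a median graph, I conclude that $\overline{\mathbb{S}}$ is a median graph. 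Since the admissible orientations on the pieces agree on the spines by construction, they patch to a globally admissible orientation on $\overline{\mathbb{S}}$.

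Third, I would verify well-orientedness of $\overline{\mathbb{S}}$. Given comparable $u \preceq v$ in $\overline{\mathbb{S}}$, the entire interval $[u,v]$ lies in a single piece: if $v \in \mathbb{C}$ (all coordinates nonnegative) then every $w \in [u,v]$ is dominated coordinate-wise by $v$ and so lies in $\mathbb{C}$; if $v \in \mathbb{W}_s \setminus \mathbb{C}$, then $v$ has a strictly negative coordinate outside position $s$, and this forces both endpoints as well as the whole interval into $\mathbb{W}_s$. Hence $[u,v]$ is already a Boolean lattice by well-orientedness of the piece. Finally, for $\overline{\mathbb{S}}^n$ I would apply the fact stated in the preliminaries that the Cartesian product of two well-oriented median graphs with coordinate-wise orientation is again a well-oriented median graph, and induct on $n$. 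The only mildly subtle point is checking that intervals really stay within a single piece, which rests on the sign-pattern analysis above; the rest is bookkeeping.
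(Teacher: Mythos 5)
Your approach matches the paper's: the paper itself identifies $\overline{\mathbb{S}}$ as the gated amalgam of $(\mathbb{Z}^*_+)^k$ with several copies of $\mathbb{Z}^*_+\times(-\mathbb{Z}^*_+)$, invokes the folklore that a gated amalgam of median graphs is median, observes that well-orientedness is preserved, and then uses closure under Cartesian products for $\overline{\mathbb{S}}^n$. Your first, second, and fourth steps reproduce this faithfully.

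The one step that does not hold up as written is the justification in your third paragraph. You argue that if $v\in\mathbb{C}$ then every $w\in[u,v]$ is \emph{coordinate-wise dominated by $v$} and therefore lies in $\mathbb{C}$. But the partial order $\preceq$ on $\overline{\mathbb{S}}$ comes from the orientation $x\swarrow y$ iff $x_i\in\mathbb{Z}$, and this is \emph{not} the coordinate-wise order: for example $0\preceq -1/2$ in a single coordinate, so $w\preceq v$ can have $w_i>v_i$. Moreover, even if $w$ were coordinate-wise below a nonnegative $v$, that alone would not force $w\geq 0$. The conclusion you want (the interval lies inside one piece) is nevertheless true, but it needs an orientation-aware argument: if $w\in[u,v]$ had a strictly negative coordinate $w_t$, then $w\in\mathbb{W}_s\setminus\mathbb{C}$ for some $s$, and following an increasing chain from $w$ to $v$ each coordinate changes at most once and only from an integer to a non-integer value, so $v_t$ stays strictly negative, contradicting $v\in\mathbb{C}$. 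Alternatively, one can appeal to the fact that the factors of a gated amalgam are convex, so the metric (equivalently poset) interval between two vertices of one factor lies in that factor. Either repair makes your sketch complete; the sign-pattern reasoning as you stated it is the gap.
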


(One can directly prove the medianness of $\overline{\mathbb{S}}$ from the definition.)

In this subsection, we show the function $h$ defined in \eqref{eq:horig}
 can be extended to an L-convex function
 on the (Cartesian) product of $\overline{\mathbb{S}}$,
and DTB is minimization of the L-convex function.
For representing \cref{cond:dtbfeas},
let $\mathbb{G}:=\prod_{s\in S}\mathbb{W}_s\times \overline{\mathbb{S}}^{n-k}
\subseteq \overline{\mathbb{S}}^n$,
which is a well-oriented median graph.
We first introduce a barrier function $B:\overline{\mathbb{S}}\to \mathbb{R}_+$
with a property that $B(T)=0$ if and only if $T\in \mathbb{S}\subseteq\overline{\mathbb{S}}$.
Let define $B:\overline{\mathbb{S}}\to \mathbb{R}_+$ by
\[
	B(T):=\begin{dcases*}
		0 & if $T\in \mathbb{C}$,\\
		(-T_s-T_t)^+ & if $T\in \mathbb{W}_s\ (s\in S)$,
	\end{dcases*}
\]
where $t\in S-s$ (arbitrary).
Next let $\size:\overline{\mathbb{S}}\to \mathbb{R}_+$ be
 a function on $\overline{\mathbb{S}}$ defined by
\[
	\size(T):=\begin{dcases*}
		\sum_{s\in S}T_s & if $T\in \mathbb{C}$,\\
		(T_s+T_t)^+ & if $T\in \mathbb{W}_s\ (s\in S)$,
	\end{dcases*}
\]
where $t\in S-s$ (arbitrary).
Notice that $\size(\cdot)$ is consistent with
 the definition given in \cref{subsec:space_of} on $\mathbb{S}$.
Let define $\dist_a:\overline{\mathbb{S}}^2\to \mathbb{R}_+
\ (0<a\in \mathbb{Z}_+)$ by
\[
	\dist_a(T,T'):=\begin{dcases*}
		0 & if $T,T'\in \mathbb{C}$,\\
		(-T_t-T'_s-a)^+ & if $T\in \mathbb{W}_s,\ T'\in \mathbb{C}\cup\mathbb{W}_t\ (s\neq t)$,\\
		(-T_s-T'_t-a)^+ + (-T_t-T'_s-a)^+ & if $T,T'\in \mathbb{W}_s$,\\
	\end{dcases*}
\]
where $t\in S-s$ (arbitrary) for the case $T,T'\in \mathbb{W}_s$.
$\dist_a(T,T')$ is equal to $(\dist(T,T')-a)^+$ on $\mathbb{S}^2$.
So we denote $\dist_0(\cdot,\cdot)$ by $\dist(\cdot,\cdot)$.
Let $M$ be a large number.
Define $\tilde{h}:\mathbb{G}\to \mathbb{R}$ by
\begin{align}
\label{eq:hL}
	\tilde{h}(p)&:=-\sum_{s\in S}r_s\dist(0,p_s)+\sum_{i\in V\setminus S}c_i\size(p_i)
	+\sum_{ij\in E}u_{ij}\dist_{a_{ij}}(p_i,p_j)\notag\\
	&\qquad +M\left(\sum_{s\in S}\left(B(p_s)+\size(p_s)\right)
		+\sum_{i\in V\setminus S}B(p_i)\right).
\end{align}
Since $M$ is large,
any minimizer $p$ of $\tilde{h}$ satisfies $p\in \mathbb{S}^n$
and $\size(p_s)=0\ (s\in S)$.
The latter condition is equivalent to $p_s\in P_s$.
Thus $p$ is a potential in the sense of \cref{cond:dtbfeas}.
Let define $\mathbb{D}\subseteq \mathbb{G}$ by
\[
	\mathbb{D}:=\{p\in \mathbb{G}\mid p_s\in P_s\ (s\in S),\ p_i\in \mathbb{S}\ (i\in V\setminus S)\}.
\]
Then \cref{eq:hL} is consistent with \cref{eq:horig} on $\mathbb{D}$.

\begin{proposition}
\label{prop:Lconv}
$\tilde{h}:\mathbb{G}\to \mathbb{R}$ is an L-convex function.
\end{proposition}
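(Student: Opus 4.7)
The plan is to verify L-convexity of $\tilde h$ summand by summand. Since both defining inequalities \eqref{eq:submod} and \eqref{eq:anti} are linear in $f$, non-negative linear combinations of L-convex functions are L-convex, so it suffices to check L-convexity of each term in \eqref{eq:hL} on the relevant $\overline{\mathbb{S}}$ or $\overline{\mathbb{S}} \times \overline{\mathbb{S}}$. Each term depends on at most two node-potentials.

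The key reduction is sheet-wise. By \cref{prop:medgraph} and the gated amalgam construction, both $\mathbb{C}$ and each $\mathbb{W}_s$ are gated subgraphs of $\overline{\mathbb{S}}$, so every bounded pair, and likewise every antipodal pair (which is upper- or lower-bounded), lies entirely within a single sheet. Hence for each single-node summand---namely $c_i \size(p_i)$, $MB(p_i)$, $-r_s\dist(0,p_s)$, and $M\size(p_s)$---L-convexity on $\overline{\mathbb{S}}$ reduces to L-convexity on $\mathbb{C} \cong (\mathbb{Z}^*_+)^k$ and on each $\mathbb{W}_s \cong \mathbb{Z}^*_+ \times (-\mathbb{Z}^*_+)$ separately. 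On each sheet, the summand is the sum of a linear function and a function of the form $(a^\top x - b)^+$ (after a sign inversion of the non-negative-to-non-positive coordinate of $\mathbb{W}_s$), so \cref{lem:Lconv_grid}(1,2,3) applies directly.

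For the two-node summand $u_{ij}\dist_{a_{ij}}(p_i, p_j)$ on $\overline{\mathbb{S}} \times \overline{\mathbb{S}}$, I would invoke \cref{lem:Lconv_two} and check (i) submodularity on 2-bounded pairs and (ii) the antipodal convexity inequality only for pairs $((p_i, p_j), (q_i, q_j))$ differing in a single coordinate. Both families of pairs lie inside a product of two sheets, so the three-case piecewise definition of $\dist_{a_{ij}}$ reduces on each product of sheets to a sum of terms of the form $(c^\top (p_i, p_j) - d)^+$ in the sheet coordinates. Their L-convexity on the ambient grid follows from \cref{lem:Lconv_grid}(2,3), from which \cref{lem:Lconv_two} yields L-convexity on $\overline{\mathbb{S}} \times \overline{\mathbb{S}}$.

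The main technical obstacle is the case analysis for $\dist_a$: one has to verify that the three case-formulas agree along sheet intersections (guaranteed by gatedness) and that antipodal pairs in $\overline{\mathbb{S}}^2$ straddling sheet boundaries---for instance when one coordinate crosses the common axis $\mathbb{C} \cap \mathbb{W}_s$---still reduce to an inequality within a single product of sheets. Checking that the barrier terms $MB(\cdot)$ and $M\size(p_s)$ behave correctly on these boundary pairs is crucial, since the $(\cdot)^+$-structure of $B$ is exactly what makes $p \in \mathbb{S}^n$ a minimizer. Once the sheet reduction and case analysis are in place, the L-convexity of each summand is a mechanical application of \cref{lem:Lconv_grid}.
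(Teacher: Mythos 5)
Your sheet-reduction argument works for the submodularity check but breaks on the $\wedge/\vee$-convexity check, and that is where the real work lies. You assert that ``every bounded pair, and likewise every antipodal pair (which is upper- or lower-bounded), lies entirely within a single sheet.'' The first half is correct: a bounded pair is bounded both above and below, so the interval $[T\wedge T', T\vee T']$ is a cube lying inside a single convex (gated) sheet. The second half is false. A $\wedge$-antipodal pair is lower-bounded but \emph{not} upper-bounded (the paper notes this explicitly just before \eqref{eq:submod}); it can therefore have $T\in \mathbb{W}_s\setminus\mathbb{C}$, $T'\in\mathbb{C}\setminus\mathbb{W}_s$, with meet $T\wedge T'\in\mathbb{C}\cap\mathbb{W}_s$. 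Such pairs straddle the amalgamation seam, and gatedness does not push them into one sheet. The same goes for $\vee$-antipodal pairs.

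These cross-sheet antipodal pairs are exactly the nontrivial content of \cref{prop:Lconv_term}: the paper devotes the bulk of its proof to the cases $T\wedge T'\in\mathbb{C}\cap\mathbb{W}_s\setminus\{0\}$ and $T\wedge T'=0$, classifying which parities and which coordinate differences are possible for an antipodal pair, and then checking \eqref{eq:anti} for $\size(\cdot)$ and for $T\mapsto\dist_a(T,R)+M\cdot B(T)$ by elementary inspection (and using \cref{lem:Lconv_two} afterward to promote the single-argument check plus a $2$-bounded submodularity check to the full two-argument statement). Your later paragraph does flag that ``antipodal pairs ... straddling sheet boundaries'' need attention, but that is a gesture at the difficulty, not an argument, and it directly contradicts the earlier blanket claim. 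As written, the proposal skips precisely the cases that require work, so it has a genuine gap; to repair it you would need to carry out the cross-sheet case analysis that the paper does, or prove a substitute lemma characterizing antipodal pairs in a gated amalgam in a way that genuinely reduces to within-sheet inequalities.
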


We will prove \cref{prop:Lconv} later.
By \cref{prop:Lconv}, we can apply SDA for minimizing $\tilde{h}$.
In fact, the algorithm \textsc{Descent} in the previous section
is precisely an implementation of SDA for $\tilde{h}$.
Recall \cref{subsec:direction} for
 the sets $\mathcal{M}^\uparrow$ and $\mathcal{M}^\downarrow$ 
of all upward-movable and downward-movable cuts in $\mathcal{N}_p$, respectively.
Then if $(Y,Z)\in \mathcal{M}^\uparrow$,
then $p_i\preceq p^{Y,Z}_i$ for any $i\in V$ and thus $p^{Y,Z}\in \mathcal{F}_p$
(and $p^{Y,Z}\in \mathbb{D}$).
Conversely, any $q\in \mathcal{F}_p\cap \mathbb{D}$
is written as $q=p^{Y,Z}$ with $(Y,Z)\in \mathcal{M}^\uparrow$.
Similarly, $\mathcal{M}^\downarrow$ corresponds to $\mathcal{I}_p\cap \mathbb{D}$.
Hence we have the following lemma.

\begin{lemma}
\label{lem:bijection}
The map $(Y,Z)\mapsto p^{Y,Z}$ is a bijection
from $\mathcal{M}^\uparrow$ (resp. $\mathcal{M}^\downarrow$)
 to $\mathcal{F}_p\cap D$ (resp. $\mathcal{I}_p\cap D$).
\end{lemma}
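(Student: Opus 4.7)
The plan is to argue that each movable cut is a compact encoding of a neighbor of $p$ in $\mathbb{D}$ above (or below) $p$ in the partial order, and that the normalization condition kills exactly the redundancy of such encodings. I will focus on the upward-movable case; the downward case is symmetric.

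First I would verify that the map is well-defined. Let $(Y,Z)\in \mathcal{M}^\uparrow$. By the upward-movability, $Y\cup Z\subseteq U^\uparrow$, so on each coordinate $u$ either $p(u)\in \mathbb{Z}$ and $p^{Y,Z}(u)\in \mathbb{Z}^*$, or $p(u)=p^{Y,Z}(u)$; recalling that the orientation on $(\mathbb{Z}^*)^k$ is defined by $x\swarrow y$ when $x_i\in \mathbb{Z}$, this gives $p\preceq p^{Y,Z}$ coordinatewise, hence $p^{Y,Z}\in \mathcal{F}_p$. The movability together with the normalization condition on $U_i^0$ ensures, via the correspondence \eqref{eq:s_in_S}--\eqref{eq:i_in_V0}, that the resulting half-integer vector lies in $\mathbb{S}^n$ and respects $p_s\in P_s$ for $s\in S$, so $p^{Y,Z}\in \mathbb{D}$.

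Next I would construct the inverse. Given $q\in \mathcal{F}_p\cap \mathbb{D}$, set
\[
Y^q:=\{u\in \tilde U\mid q(u)-p(u)=1/2\},\quad Z^q:=\{u\in \tilde U\mid q(u)-p(u)=-1/2\},
\]
and then extend on each block $U_i^0$ $(i\in V_0)$ according to the normalization: if $q_i$ is of $s$-type, put $i^{s,0}\in Z^q$ and $i^{s',0}\in Y^q$ for $s'\in S-s$; otherwise put $U_i^0\cap (Y^q\cup Z^q)=\emptyset$. Because $p\preceq q$ in the product median graph forces $q(u)-p(u)\in \{0,1/2\}$ (when $p(u)\in \mathbb{Z}$) or $q(u)-p(u)\in \{-1/2,0\}$ (when $p(u)\in \mathbb{Z}^*\setminus \mathbb{Z}$, but only for coordinates in $U^\uparrow$ can a non-zero change occur consistently with going up), one gets $(Y^q,Z^q)\subseteq U^\uparrow$, and by construction $p^{Y^q,Z^q}=q$. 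To see that $(Y^q,Z^q)$ satisfies the movability of the potential corresponding to $q$ and the normalization condition, invoke the definitions directly: $q\in \mathbb{D}$ gives movability, and the block-by-block construction on $U_i^0$ is exactly the normalization.

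Finally I would check the two composite maps are the identity. The composition $(Y,Z)\mapsto p^{Y,Z}\mapsto (Y^{p^{Y,Z}},Z^{p^{Y,Z}})$ recovers the original cut on $\tilde U\setminus \bigcup_{i\in V_0}U_i^0$ from the definition of $p^{Y,Z}$, and on each $U_i^0$ the normalization condition is imposed on both ends, so the cuts agree there too. The other composition $q\mapsto (Y^q,Z^q)\mapsto p^{Y^q,Z^q}$ equals $q$ by construction of $Y^q,Z^q$.

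The main obstacle I expect is the careful bookkeeping on the blocks $U_i^0$ for $i\in V_0$, where a single subtree transition (say $0$-type $\to$ $s$-type) corresponds to a prescribed pattern on $k$ coordinates rather than a single edge of the median graph; the normalization condition was defined precisely to single out one representative among many cuts producing the same $p^{Y,Z}$, so the bijection boils down to checking that this representative is exactly the one recovered by the $Y^q,Z^q$ construction.
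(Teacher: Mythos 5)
Your proposal is correct and matches the approach the paper takes (the paper itself only sketches this lemma in the paragraph preceding it, asserting well-definedness and surjectivity, with injectivity implicit from the normalization condition; you make the inverse map explicit and check both compositions, which is the natural way to flesh out that sketch).

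One sentence in your inverse construction is mis-stated, though the construction survives it. You write that $p\preceq q$ forces $q(u)-p(u)\in\{0,1/2\}$ when $p(u)\in\mathbb{Z}$; in fact the orientation $x\swarrow y$ when $x_i\in\mathbb{Z}$ makes an integral value $p(u)$ lie \emph{below} both of its half-integer neighbours, so $q(u)-p(u)\in\{-1/2,0,1/2\}$ is what is forced (and indeed $Z^q$ must be allowed to be nonempty: e.g.\ a $0$-type $p_i$ with $(p_i)_t=0$ transitioning to an $s$-type $q_i$ with $(q_i)_t=-1/2$ for $t\in S-s$ puts those $i^t$ into $Z^q$). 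Your definitions of $Y^q,Z^q$ do not rely on the incorrect restriction, and the real content you need — that $q(u)=p(u)$ whenever $p(u)\notin\mathbb{Z}$, which gives $(Y^q,Z^q)\subseteq U^\uparrow$ — is correctly extracted. Similarly, in the forward direction, the normalization condition on $U_i^0$ plays no role in showing $p^{Y,Z}\in\mathbb{D}$ (since $\tilde\chi_{Y,Z}$ ignores $U_i^0$); movability alone suffices there, and the normalization is what makes the map injective. With those two points tightened, the argument is complete.
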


Thus, \cref{lem:bijection} and \cref{thm:map} say that
$p^{Y^*,Z^*}$ in the proof of \cref{cor:bisub_nonopt}
 is a local minimizer of $\tilde{h}$ at $p$ in the sense of SDA.
Now \cref{lem:descentnum} follows from \cref{thm:iterSDA}
and the following fact.

\begin{proposition}
\label{prop:distequiv}
For $p,q\in \mathbb{D}$, $d_\mathbb{G}^\Delta(p,q)=2\lVert p-q\rVert$.
\end{proposition}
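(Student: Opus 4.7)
The plan is to exploit the product structure of $\mathbb{G}$ to reduce the claim to a componentwise statement, then analyze each factor separately using the gated-amalgam structure of $\overline{\mathbb{S}}$.

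First I would observe that, because $\mathbb{G} = \prod_{s \in S} \mathbb{W}_s \times \overline{\mathbb{S}}^{n-k}$ is a Cartesian product of well-oriented median graphs, every cube of $\mathbb{G}$ is a product of cubes in the factors. Hence a single $\Delta$-step in $\mathbb{G}$ corresponds to simultaneous (possibly trivial) $\Delta$-steps in each factor, and conversely any collection of $\Delta$-paths of length $\leq L$ in the factors can be padded by trivial steps and assembled into a $\Delta$-path of length $L$ in the product. This gives the identity
\[
d^\Delta_{\mathbb{G}}(p,q) \;=\; \max_{i \in V} d^\Delta_{G_i}(p_i, q_i),
\]
where $G_s = \mathbb{W}_s$ for $s \in S$ and $G_i = \overline{\mathbb{S}}$ for $i \in V \setminus S$. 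Since $\lVert p - q\rVert = \max_i \lVert p_i - q_i\rVert$, the proposition reduces to proving $d^\Delta_{G_i}(p_i, q_i) = 2\lVert p_i - q_i\rVert$ factor by factor.

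Next I would verify the lower bound $d^\Delta \geq 2\lVert \cdot\rVert$ uniformly: a single $\Delta$-step lies in a cube of $G_i$, and in both $\mathbb{W}_s$ and $\overline{\mathbb{S}}$ such a cube is a Boolean interval in the half-integer lattice, so each $k$-tuple coordinate changes by at most $1/2$. For the matching upper bound on the $s \in S$ factor: $\mathbb{W}_s \cong \mathbb{Z}_+^* \times (-\mathbb{Z}_+^*)$, and $p_s, q_s \in P_s$ both lie on its $s$-axis, so walking along this axis in half-integer steps realizes $d^\Delta_{\mathbb{W}_s}(p_s, q_s) = 2|p_s - q_s| = 2\lVert p_s - q_s\rVert$.

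The remaining case is $G_i = \overline{\mathbb{S}}$ with $p_i, q_i \in \mathbb{S}$. I would use the fact that, as a gated amalgam of $\mathbb{C}$ and the $\mathbb{W}_s$'s glued along the rays $P_s = \mathbb{C} \cap \mathbb{W}_s$, every cube of $\overline{\mathbb{S}}$ is contained in a single chamber. Thus any $\Delta$-path between different chambers must pass through the common gate, and
\[
d^\Delta_{\overline{\mathbb{S}}}(p_i, q_i) \;=\; \min_{g \in \mathrm{gate}} \bigl( d^\Delta_{\mathrm{chamber}(p_i)}(p_i, g) + d^\Delta_{\mathrm{chamber}(q_i)}(g, q_i) \bigr).
\]
When $p_i, q_i$ lie in the same chamber ($\mathbb{C}$ or $\mathbb{W}_s$), this is a grid walk and the claim is immediate from the $\ell_\infty$-metric on a grid. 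When they lie in different chambers (e.g., $p_i \in \mathbb{W}_s$ and $q_i \in \mathbb{C}$, or $p_i \in \mathbb{W}_s$ and $q_i \in \mathbb{W}_t$ with $s\neq t$), a short case analysis using the explicit coordinates $T = (T_s, T_t, \dots, T_t)$ with $T_s \geq 0 \geq T_t$ shows that setting $A := -T_t$ and $B := \max_{r \neq s} q_i(r)$ gives $\lVert p_i - q_i\rVert = \max(|p_i(s) - q_i(s)|,\, A + B)$, and choosing $g$ on the appropriate ray optimally trades off the two gate distances to realize $2\lVert p_i - q_i\rVert$.

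The main obstacle is this last case analysis on $\overline{\mathbb{S}}$: one must identify the optimal gate point depending on whether $|p_i(s) - q_i(s)|$ dominates $A+B$ or vice versa, and in the cross-chamber case $p_i \in \mathbb{W}_s$, $q_i \in \mathbb{W}_t$ two gates are traversed. Nevertheless, each subcase is an elementary one-variable minimization over $g_s \geq 0$ whose optimum coincides with $2\lVert p_i - q_i\rVert$, completing the proof.
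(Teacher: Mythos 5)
Your proposal is correct in its overall structure but takes a genuinely different route from the paper. The paper proves the lower bound $d^\Delta \geq 2\lVert\cdot\rVert$ by the cube-displacement observation (as you do), and then establishes the matching upper bound by constructing a single \emph{normal path} $T \to T' \to^2 T' \to \dotsb$ of length $2\lVert T-T'\rVert$ in $\overline{\mathbb{S}}$; this normal-path construction lives in Section~\ref{sec:sensitivity} and is reused for the N-convexity argument underlying Lemma~\ref{lem:sensitivity}, so the paper gets two things for the price of one. You instead use the Cartesian-product identity $d^\Delta_{\mathbb{G}} = \max_i d^\Delta_{G_i}$ to reduce to one factor, and then exploit the gated-amalgam structure of $\overline{\mathbb{S}}$: cubes lie in single chambers, so $\Delta$-paths across chambers must cross the ray $P_s$, and you minimize the sum of per-chamber $\Delta$-distances over gate points. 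This is a self-contained, hands-on alternative.

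Two details in your plan deserve tightening. First, the assertion that every cube of $\overline{\mathbb{S}}$ lies in a single chamber is stated but not checked; it does follow from gatedness/convexity of the pieces (or by noting that a vertex of $\mathbb{W}_s\setminus\mathbb{C}$ has exactly one neighbor on $P_s$, which rules out spanning $2$-cubes), but it should be argued. Second, the cross-chamber case $p_i \in \mathbb{W}_s$, $q_i \in \mathbb{W}_t$ is described as an ``elementary one-variable minimization over $g_s \geq 0$,'' but naturally it requires choosing two gate points $g_1 \in P_s$ and $g_2 \in P_t$, so it is a two-variable optimization over $\max(|l'-g|,l) + \max(g,h) + \max(|m'-h|,m)$; the minimum $\max(l'+m,\,l+m')$ is attained at $g = l'-l$, $h = m'-m$, which agrees with $\lVert p_i - q_i\rVert$, but the one-variable characterization is misleading. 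With these two gaps filled, your argument is sound and complete.
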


\begin{proof}
It is sufficient to show $d_\mathbb{S}^\Delta(T,T')=2\lVert T-T'\rVert$ for any $T,T'\in\mathbb{S}$.
By the construction, we can easily see that $d_\mathbb{S}^\Delta(T,T')\geq 2\lVert T-T'\rVert$.
We will construct in \cref{sec:sensitivity}
a canonical $\Delta$-path with the length $2\lVert T-T'\rVert$
between any $T,T'\in\mathbb{S}$.
\end{proof}

The remaining part is proving \cref{prop:Lconv}.
It is known that the L-convexity is preserved under the addition
and the multiplication of a nonnegative real number~\cite[Lemma 4.9]{Hirai2016Discrete}.
Thus it is sufficient to show the following proposition.

\begin{proposition}
\label{prop:Lconv_term}
\textup{(1)} $-\dist(0,\cdot)$ on $\mathbb{W}_s\ (s\in S)$,
\textup{(2)} $B(\cdot)$ on $\overline{\mathbb{S}}$,
\textup{(3)} $\size(\cdot)$ on $\overline{\mathbb{S}}$,
\begin{align*}
	\text{\textup{(4)}}&\quad \overline{\mathbb{S}}\ni T\mapsto
	 \dist_a(T,R)+M\cdot B(T)\quad (0<a\in \mathbb{Z}_+,\ R\in\overline{\mathbb{S}}),\\
	\text{\textup{(5)}}&\quad \overline{\mathbb{S}}^2\ni (T,R)\mapsto
	 \dist_a(T,R)+M\cdot (B(T)+B(R))\quad (0<a\in \mathbb{Z}_+)
\end{align*}
are all L-convex, where $M$ is a large number.
\end{proposition}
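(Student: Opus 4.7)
The plan is to verify the defining conditions of L-convexity (submodularity on bounded pairs, $\wedge/\vee$-convexity on antipodal pairs) for each of the five listed functions, by exploiting two tools: (a) Lemma~\ref{lem:Lconv_grid}, which gives L-convexity of linear functions and functions of the form $(x_i+x_j-a)^+$ on the half-integer grid (together with closure under individual sign inversion), and (b) Lemma~\ref{lem:Lconv_two}, which reduces L-convexity on a product median graph to submodularity on 2-bounded pairs and antipodal convexity on ``one-component-fixed'' antipodal pairs. The underlying idea is that $\overline{\mathbb{S}}$ is a gated amalgam of the cone $\mathbb{C}\cong(\mathbb{Z}_+^*)^k$ with the wings $\mathbb{W}_s\cong\mathbb{Z}_+^*\times(-\mathbb{Z}_+^*)$, so each function restricted to any single piece is, after an appropriate sign inversion on the wing coordinate, a sum of linear terms and $(x_i+x_j-a)^+$ terms to which Lemma~\ref{lem:Lconv_grid} applies directly.

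For functions (1)--(3) I would first rewrite each on every piece in this form: on $\mathbb{W}_s$ one has $-\dist(0,T)=T_t$ (linear), $\size(T)=(T_s+T_t)^+$, and $B(T)=(-T_s-T_t)^+$; on $\mathbb{C}$ one has $\size(T)=\sum_s T_s$ and $B(T)=0$; and one checks continuity across the interface $\mathbb{C}\cap\mathbb{W}_s$. For the 2-bounded submodularity I would observe that any 2-bounded pair $(T,T')$ in $\overline{\mathbb{S}}$ satisfies $|T_i-T'_i|\le 1/2$ in every coordinate, which forces both points into a common piece (either $\mathbb{C}$, a single $\mathbb{W}_s$, or the interface); hence submodularity follows from Lemma~\ref{lem:Lconv_grid}. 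For antipodal pairs of (1)--(3), a similar case analysis on the wings and cone, combined with the structure of antipodal pairs in grids developed in the proof of Lemma~\ref{lem:Lconv_grid}, confines antipodal pairs to a single piece and reduces again to that lemma.

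For (4) and (5) the function $\dist_a(T,R)$ is defined piecewise according to which pieces contain $T$ and $R$; on each such ``bi-piece'' it is a sum of terms $(-T_\sigma-R_{\sigma'}-a)^+$, which are L-convex by Lemma~\ref{lem:Lconv_grid}(2)--(3), so L-convexity holds on any single combination of pieces. The main obstacle is controlling antipodal pairs that cross between different bi-pieces, where $\dist_a$ can switch its piecewise formula abruptly; this is precisely where the barrier $M\cdot B(\cdot)$ is needed. I would argue that for a sufficiently large constant $M$, if an antipodal perturbation moves $T$ or $R$ out of $\mathbb{S}$ (so that $B$ becomes positive), the resulting jump of at least $M/2$ in the objective dominates the negative part of the $\wedge/\vee$-convexity inequality, which then holds trivially; whereas antipodal pairs that remain in $\mathbb{S}\times\mathbb{S}$ stay inside a fixed bi-piece and are handled by Lemma~\ref{lem:Lconv_grid}. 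The careful bookkeeping of which $(\cdot)^+$ brackets are active along the perturbation, and of the exact parities of the relevant coordinates at the interface between pieces, is where most of the work will lie; throughout, Lemma~\ref{lem:Lconv_two} ensures that we need only handle the two classes of pairs described above rather than arbitrary ones.
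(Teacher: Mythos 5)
Your plan identifies the right ingredients (Lemmas~\ref{lem:Lconv_grid} and~\ref{lem:Lconv_two}, the gated-amalgam structure, the barrier $M\cdot B$), but it contains a substantive gap in the treatment of antipodal pairs. You claim that antipodal pairs in $\overline{\mathbb{S}}$ are ``confined to a single piece'' ($\mathbb{C}$ or some $\mathbb{W}_s$), so that the $\wedge/\vee$-convexity inequality \eqref{eq:anti} always reduces to Lemma~\ref{lem:Lconv_grid} inside one grid. This is false. The confinement argument is correct for \emph{bounded} pairs (so 2-bounded submodularity does localize as you say), but the nontrivial $\wedge$-antipodal pairs have $T\in\mathbb{W}_s\setminus\mathbb{C}$ and $T'\in\mathbb{C}\setminus\mathbb{W}_s$ (or $T'\in\mathbb{W}_{s'}\setminus\mathbb{C}$, $s'\neq s$), with the meet $T\wedge T'$ lying on the interface $\mathbb{C}\cap\mathbb{W}_s$; the two endpoints straddle the interface. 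Such pairs lie wholly within $\mathbb{S}$, so the barrier $M\cdot B(\cdot)$ vanishes identically on them and cannot dispose of them --- indeed in the main case $T\wedge T'\in\mathbb{C}\cap\mathbb{W}_s\setminus\{0\}$ one has $B(T)=B(T')=B(T\wedge T')=0$. Verifying~\eqref{eq:anti} for these cross-piece pairs is the heart of the proof and your proposal skips it.

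What is actually needed is, first, a structural description of cross-piece antipodal pairs (one checks that $\lvert T_s-T'_s\rvert\in\{0,1\}$, $T_t=-1/2$ and $T'_t\in\{0,1/2\}$ for all $t\in S-s$, with at least one $T'_t=1/2$), and then a direct comparison of $\size$ (for~(3)) and $\dist_a(\cdot,R)$ (for~(4)) at $T$, $T'$ and $T\wedge T'$, split according to whether $R\in\mathbb{C}$, $R\in\mathbb{W}_s$, or $R\in\mathbb{W}_t$ for $t\neq s$. The barrier plays a role only in the remaining sub-case $T\wedge T'=0$ with $T,T'$ in distinct wings, where $M\gg 0$ lets one assume $T=[1/2,1/2]_s$ and $T'=[1/2,1/2]_{s'}$; it neither eliminates cross-piece antipodal pairs nor ``dominates the negative part'' of the inequality for pairs inside $\mathbb{S}$. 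Finally, for~(5) your appeal to Lemma~\ref{lem:Lconv_two} is sound for the 2-bounded submodularity check (both components do localize to a common $\mathbb{W}_s\cup\mathbb{C}\cup\mathbb{W}_t$ and one computes in the grid), but the one-component-fixed antipodal check should be routed through part~(4), not invoked directly from Lemma~\ref{lem:Lconv_grid}.
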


\begin{proof}
As we have already noted,
 $\mathbb{W}_s=\mathbb{Z}^*_+\times(-\mathbb{Z}^*_+)$.
Thus the submodularity for (1) is obtained from \cref{lem:Lconv_grid}.

We show the submodularity for (2), (3) and (4).
Any bounded pair is contained in $\mathbb{C}$ or some $\mathbb{W}_s$.
Thus the submodularity inequality \cref{eq:submod} is obtained from \cref{lem:Lconv_grid}.
Similarly, for any antipodal pair
 contained in $\mathbb{C}$ or some $\mathbb{W}_s$,
the $\wedge$($\vee$)-convexity inequality \cref{eq:anti} follows from \cref{lem:Lconv_grid}.
Thus we can only consider a $\wedge$-antipodal pair $(T,T')$
with $T\wedge T'\in \mathbb{C}\cap \mathbb{W}_s\ (s\in S)$.
By $B(T\wedge T')=0$ for $T\wedge T'\in \mathbb{C}\cap\mathbb{W}_s$ and
 the nonnegativity of $B(\cdot)$, we may only consider (3) and (4).

Suppose that $T\wedge T'\in \mathbb{C}\cap \mathbb{W}_s\setminus\{0\}$.
We may assume $T\in \mathbb{W}_s\setminus\mathbb{C}$
and $T'\in\mathbb{C}\setminus\mathbb{W}_s$.
As \cref{lem:Lconv_grid},
we observe $\lvert T_s-T'_s\rvert\in\{0,1\}$,
$T_t=-1/2,\ T'_t\in\{0,1/2\}$ for any $t\in S-s$,
and at least one of $t\in S-s$ satisfies $T'_t=1/2$.
If $\lvert T_s-T'_s\rvert=1$, then exactly one $t\in S-s$ satisfies $T'_t=1/2$.
Otherwise, we can obtain $a,b$ violating \cref{eq:anti_def} as follows:
Let $a\in \overline{\mathbb{S}}$ defined by
$a_s:=T_s$ and $a_t:=0$ for any other $t\neq s$.
Let $b\in \overline{\mathbb{S}}$ defined by
$b_s:=(T_s+T'_s)/2$ and $b_t:=T'_t$ for any other $t\neq s$.
Then $(a,b)$ is bounded below and satisfies $T\succeq a\succeq T\wedge T'\preceq b\preceq T'$
but violates \cref{eq:anti_def} by $r[T\wedge T',b]>r[b,T']\ (=1)$
 and $r[T\wedge T',a]=r[a,T]\ (=1)$.

We check (the equivalent condition of)
 the $\wedge$-convexity inequality \cref{eq:anti} for (3) and (4).
Let $b:=(T\wedge T')_s-T_s\ (=T'_s-(T\wedge T')_s)\in\{-1/2,0,1/2\}$ and
$c:=\sum_{t\in S-s} T'_t\geq 1/2$.
For (3), we see
 $(-r[T\wedge T',T],\size(T))=(-2(1/2+\lvert b\rvert),\size(T\wedge T')-1/2-b)$
and $(r[T\wedge T',T'],\size(T'))=(2(c+\lvert b\rvert),\size(T\wedge T')+c+b)$.
We can determine that the point $(0,\size(T\wedge T'))$ is under or lying on the segment
connecting those two points in $\mathbb{R}^2$.

For (4), we note $B(T)=B(T')=B(T\wedge T')=0$.
We have three cases: (i) $R\in\mathbb{C}$,
(ii) $R\in\mathbb{W}_s\setminus\mathbb{C}$,
 (iii) $R\in\mathbb{W}_t\setminus\mathbb{C}\ (t\in S-s)$.
For (i), we have \cref{eq:anti} by $\dist_a(T\wedge T',R)+M\cdot B(T\wedge T')=0$
and the nonnegativity of the function.
For (ii), we can assume $\dist_a(T\wedge T',R)>0$ and
 not all $\dist_a(T,R),\dist_a(T',R),\dist_a(T\wedge T',R)$ are the same.
Then we have
$\min\{\dist_a(T,R),\dist_a(T',R)\}=\dist_a(T\wedge T',R)-1/2$
and $\max\{\dist_a(T,R),\dist_a(T',R)\}=\dist_a(T\wedge T',R)+1/2$.
Also $T_s\neq T'_s$ and thus $r[T\wedge T',T]=r[T\wedge T',T']=2$.
Then the three points $(-r[T\wedge T',T],\dist_a(T,R)),\ (0,\dist_a(T\wedge T',R))$
 and $(r[T\wedge T',T'],\dist_a(T',R))$ are collinear.
For (iii), we can assume $\dist_a(T\wedge T',R)>0$ similarly.
Then $\dist_a(T,R)=\dist_a(T\wedge T',R)+1/2$
and $\dist_a(T',R)\geq \dist_a(T\wedge T',R)-1/2$.
Moreover $r[T\wedge T',T]\leq r[T\wedge T',T']$.
Thus the point $(0,\dist_a(T\wedge T',R))$ is under or lying on the segment
connecting $(-r[T\wedge T',T],\dist_a(T,R))$ and $(r[T\wedge T',T'],\dist_a(T',R))$.

Suppose that $T\wedge T'=0$.
Then the $\wedge$-convexity inequality \cref{eq:anti} for (3) is immediate
from $\size(0)=0$ and the nonnegativity of $\size(\cdot)$.
For (4), we may assume $T\in \mathbb{W}_s\setminus\mathbb{C}$
 and $T'\in \mathbb{W}_{s'}\setminus\mathbb{C}$ for distinct $s,s'\in S$;
the other cases can be shown as the previous case.
Since $M$ is large, we may further assume
$T=[1/2,1/2]_s$ and $T'=[1/2,1/2]_{s'}$.
We may assume $\dist_a(0,R)>0$, and thus $R\notin \mathbb{C}$.
If $R\in \mathbb{W}_s\setminus\mathbb{C}$,
 then $\dist_a(T,R)+1/2=\dist_a(0,R)=\dist_a(T',R)-1/2$.
Thus three points $(-r[0,T],\dist_a(T,R))=(-2,\dist_a(T,R))$,
$(0,\dist_a(0,R))$, $(r[0,T'],\dist_a(T',R))=(2,\dist_a(T',R))$ are collinear.
If $R\in \mathbb{W}_t\setminus\mathbb{C}\ (t\in S\setminus\{s,s'\})$,
then $\dist_a(T,R)=\dist_a(0,R)+1/2=\dist_a(T',R)$.
Thus the point $(0,\dist_a(0,R))$ is under the segment
connecting $(-2,\dist_a(T,R))$ and $(2,\dist_a(T',R))$.

(5) By \cref{lem:Lconv_two},
it suffices to show that $(T,R)\mapsto \dist_a(T,R)$ satisfies
 the submodularity inequality \cref{eq:submod}
for any 2-bounded pair.
Let $(T,R),(T',R')\in \overline{\mathbb{S}}^2$ be a 2-bounded pair.
Then the pair $(T,T')$ (resp. $(R,R')$) is 2-bounded and $R=R'$ (resp. $T=T'$),
 or each pair $(T,T')$ and $(R,R')$ are joined with an edge in $\overline{\mathbb{S}}$.
In either case, there exists $s,t\in S$ such that
$T,R,T',R'$ are contained in $\mathbb{W}_s\cup \mathbb{C}\cup\mathbb{W}_t$.
Then by the definition,
\[
	\dist_a(T,R)=(-T_s-R_t-a)^++(-T_t-R_s-a)^+
\]
and so are $\dist_a(T',R'),\dist_a(T\wedge T',R\wedge R'),\dist_a(T\vee T',R\vee R')$.
Thus the submodularity inequality \cref{eq:submod} follows from \cref{lem:Lconv_grid}.
\end{proof}

We note that adding $B(\cdot)$ in \cref{prop:Lconv_term}~(4)(5) is not essential.
In fact, we can show the L-convexity of $\dist(\cdot,\cdot)$ itself on $\overline{\mathbb{S}}^2$.
We add $B(\cdot)$ just to simplify the proof.

\subsection{Sensitivity of DTB}
\label{sec:sensitivity}

Our goal is to show \cref{lem:sensitivity}.
We introduce a new concept of discrete convexity, called \emph{N-convexity}.
This concept is defined via \emph{normal paths}~\cite{Chepoi2000Graphs}
in a (general) median graph.
Instead of giving a general theory, we directly introduce N-convexity
on our special median graph $\overline{\mathbb{S}}^n$.
Let us first consider the grid graph $(\mathbb{Z}^*)^k$.
For $x,y\in (\mathbb{Z}^*)^k$, let define $x\to y\in (\mathbb{Z}^*)^k$ by
\begin{equation}
\label{eq:normal}
	x\to y:=x+\frac{1}{2}\sum_{i:x_i<y_i}\chi_i-\frac{1}{2}\sum_{i:x_i>y_i}\chi_i.
\end{equation}
Let $x\to^t y:=(x\to^{t-1} y)\to y$ for $t\geq 1$, where $x\to^0 y:=x$.
Observe that $\lVert (x\to^t y)-y\rVert_\infty
=\lVert (x\to^{t-1}y)-y\rVert_\infty-1/2$ for $t\leq 2\lVert x-y\rVert_\infty$.
Also $x\to^d y=y$, where $d:=2\lVert x-y\rVert_\infty$.
Let define $x\twoheadrightarrow y\in (\mathbb{Z}^*)^k$ by
\[
	x\twoheadrightarrow y:=y\to^{d-1}x=x+
	\frac{1}{2}\sum_{i:y_i-x_i=d/2}\chi_i-\frac{1}{2}\sum_{i:y_i-x_i=-d/2}\chi_i,
\]
where $d:=2\lVert x-y\rVert_\infty$.
A function $f:(\mathbb{Z}^*)^k\to \overline{\mathbb{R}}$ is called \emph{N-convex} if it satisfies
for any $x,y\in (\mathbb{Z}^*)^k$,
\begin{align}
    f(x)+f(y)&\geq f(x\to y) + f(y\to x),\label{cond:N1}\\
    f(x)+f(y)&\geq f(x\twoheadrightarrow y) + f(y\twoheadrightarrow x).\label{cond:N2}
\end{align}

As \cref{lem:Lconv_grid}, the next lemma can be routinely verified.

\begin{lemma}[\cite{Hirai2020cost}]
\label{lem:Nconv_grid}
\begin{enumerate}
\renewcommand{\labelenumi}{\textup{(\arabic{enumi})}}
\item $(\mathbb{Z}^*)^k\ni x\mapsto a^\top x+b$ is N-convex for $a\in \mathbb{R}^k,\ b\in\mathbb{R}$.
\item $(\mathbb{Z}^*)^k\ni x\mapsto (x_i+x_j-a)^+$ is N-convex for $i,j\in\{1,\dotsc,k\},\ a\in\mathbb{Z}$.
\item The N-convexity does not change by an individual sign inversion of a variable,
i.e., if a function $f:(\mathbb{Z}^*)^k\to \mathbb{R}$ is N-convex,
 then $x\mapsto f(x_1,\dotsc,-x_i,\dotsc,x_k)$ is also N-convex.
\end{enumerate}
\end{lemma}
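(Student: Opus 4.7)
The plan is to handle the three claims in increasing order of difficulty, with (1) by direct computation, (2) by reduction to one‑variable convexity, and (3) by a change‑of‑variables argument.

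For (1), I will simply verify that $(x \to y) + (y \to x) = x + y$ and $(x \twoheadrightarrow y) + (y \twoheadrightarrow x) = x + y$. Both equalities follow directly from the definition~\eqref{eq:normal}: in each coordinate $i$, the sign‑$\chi_i$ correction applied to $x$ cancels exactly the opposite correction applied to $y$ (for $\to$ the pairing is over coordinates with $x_i\neq y_i$, and for $\twoheadrightarrow$ it is over coordinates achieving the maximum $|y_i-x_i|=d/2$). Since $a^\top(\cdot)+b$ is affine, both N-convex inequalities \eqref{cond:N1}\eqref{cond:N2} hold with equality.

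For (2), I will fix indices $i,j$ (possibly equal) and reduce to the one‑dimensional convex function $\phi(s):=(s-a)^+$ by setting
\[
 t:=x_i+x_j,\quad t':=y_i+y_j,\quad u:=(x\to y)_i+(x\to y)_j,\quad u':=(y\to x)_i+(y\to x)_j,
\]
and analogously for $\twoheadrightarrow$. The key sub‑claim is that $u+u'=t+t'$ (which is again an immediate cancellation) and that $u,u'$ lie in the closed interval between $t$ and $t'$. Granted this, the desired inequality $\phi(t)+\phi(t')\ge \phi(u)+\phi(u')$ is the standard ``splitting'' consequence of convexity of $\phi$. The interval condition will be checked by a short case analysis on the signs of $y_i-x_i$ and $y_j-x_j$: each half‑step changes $t$ by at most $1$, and the constraint that the coordinates moved in $\twoheadrightarrow$ achieve the maximum $|y_\bullet-x_\bullet|=d/2$ (with $d:=2\|x-y\|_\infty\in\mathbb{Z}_{\ge1}$) forces $|t'-t|$ to be large enough to accommodate the half‑step. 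This interval/parity bookkeeping for $\twoheadrightarrow$, especially when only one of $i,j$ achieves the max while the other moves in the opposite direction, is the main obstacle; it should be dispatched by noting that in that case $y_j-x_j\ge -d/2+1/2$ forces $t'-t\ge 1/2$, which is exactly what is needed.

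For (3), let $\sigma:(\mathbb{Z}^*)^k\to(\mathbb{Z}^*)^k$ invert the $i$-th coordinate. I will verify the equivariance
\[
\sigma(x\to y)=\sigma(x)\to\sigma(y),\qquad \sigma(x\twoheadrightarrow y)=\sigma(x)\twoheadrightarrow \sigma(y),
\]
using that $\operatorname{sign}(-\alpha)=-\operatorname{sign}(\alpha)$ in the $i$-th coordinate and that $\|\sigma(x)-\sigma(y)\|_\infty=\|x-y\|_\infty$, so that the distinguished coordinates defining $\twoheadrightarrow$ are preserved. Then if $f$ is N‑convex and $g(x):=f(\sigma(x))$, applying N‑convexity of $f$ at $\sigma(x),\sigma(y)$ and translating back via equivariance yields both N‑convex inequalities for $g$. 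Nothing here is subtle; the only point to be careful about is that the operations are defined coordinatewise, so the equivariance check is purely local in the $i$-th coordinate.
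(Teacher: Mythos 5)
Your proposal is correct. The paper itself offers no proof of this lemma --- it is attributed to \cite{Hirai2020cost} and dismissed with ``can be routinely verified'' by analogy with \cref{lem:Lconv_grid} --- so there is no in-paper argument to compare against, but your direct verification supplies exactly the computations the paper gestures at. The three key facts you identify are all right: for (1), the pointwise cancellation $(x\to y)+(y\to x)=x+y$ and $(x\twoheadrightarrow y)+(y\twoheadrightarrow x)=x+y$ holds because $d=2\lVert x-y\rVert_\infty$ is symmetric in $x,y$, so the indicator sets pair up; for (2), the reduction to one-variable convexity via $t=x_i+x_j$, $u=(x\to y)_i+(x\to y)_j$ (and the $\twoheadrightarrow$-analogues) works because $u+u'=t+t'$ by the same cancellation, and the case analysis you sketch does establish $u,u'\in[\min(t,t'),\max(t,t')]$ --- in particular, the one delicate case you flag (only one of $i,j$ attains $|y_\bullet-x_\bullet|=d/2$ while the other moves oppositely but not maximally) is correctly handled by the observation that non-attainment forces $|y_j-x_j|\le d/2-1/2$, giving $|t'-t|\ge 1/2=|u-t|$; and for (3), the coordinatewise equivariance of $\to$ and $\twoheadrightarrow$ under $\sigma$ holds because negating the $i$-th coordinate negates $y_i-x_i$ while leaving $\lVert x-y\rVert_\infty$ (hence $d$) unchanged.
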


Let us return to our graph $\overline{\mathbb{S}}^n$.
Let $T,T'\in\overline{\mathbb{S}}$.
Since $\overline{\mathbb{S}}$ is a subset of $(\mathbb{Z}^*)^k$,
$T\to T'\in\overline{\mathbb{S}}$ is defined via \cref{eq:normal},
except for (1) $(T,T')\in (\mathbb{W}_s\setminus\mathbb{C})\times (\mathbb{W}_t\setminus\mathbb{C})\ (s,t\in S,\ s\neq t)$
or (2) $(T,T')\in (\mathbb{C}\setminus\mathbb{W}_s)\times(\mathbb{W}_s\setminus\mathbb{C})\ (s\in S)$.
In these cases, it may happen that \cref{eq:normal} is outside of $\overline{\mathbb{S}}$.
Instead, let define for (1),
\[
	T\to T':=\begin{dcases*}
		T+\frac{1}{2}\chi_{S-s}-\frac{1}{2}\chi_s & if $T_s>0$,\\
		T+\frac{1}{2}\chi_{S-s} & if $T_s=0$,
	\end{dcases*}
\]
and for (2),
\[
	T\to T':=T+\frac{1}{2}\sum_{t:T_t<T'_t}\chi_t-\frac{1}{2}\sum_{t:0<T_t>T'_t}\chi_t.
\]
Then we see ${T\to T'}\in \overline{\mathbb{S}}$.
Let ${T\to^t T'}:={({T\to^{t-1} T'})\to T'}$ for $t\geq 1$, where ${T\to^0 T'}:=T$.
We can see the desired property still holds:
$\lVert ({T\to^t T'})-T'\rVert=\lVert ({T\to^{t-1} T'})-T'\rVert-1/2$ for $t\leq d$
and ${T\to^d T'}=T'$, where $d:=d^\Delta(T,T')\,
(=2\lVert T-T'\rVert\text{ if $T,T'\in \mathbb{S}$})$.
The $\Delta$-path $(T={T\to^0 T'},{T\to^1 T'},\dotsc,{T\to^d T'}=T')$
is called the \emph{normal path} from $T$ to $T'$.
Let ${T\twoheadrightarrow T'}:={T'\to^{d-1} T}\in\overline{\mathbb{S}}$.
For $p,q\in \overline{\mathbb{S}}^l$, we define ${p\to q}$
 by ${(p\to q)_i}:={p_i\to q_i}$,
and let ${p\to^t q}:={({p\to^{t-1} q})\to q}$ for $t\geq 1$,
where ${p\to^0 q}:=p$,
 and $p\twoheadrightarrow q:={p\to^{d-1} q}$,
where $d=d^\Delta(p,q)$.
A function $f:\overline{\mathbb{S}}^l\to\overline{\mathbb{R}}$
 (in particular, $f:\mathbb{G}\to\overline{\mathbb{R}}$) is called \emph{N-convex}
if it satisfies \cref{cond:N1,cond:N2}
 for any $p,q\in \overline{\mathbb{S}}^l$.

Let define $h:\mathbb{G}\to \overline{\mathbb{R}}$ by
\begin{equation}
\label{eq:hN}
	h(p):=\begin{dcases*}
		-\sum_{s\in S}r_s\dist(0,p_s)+\sum_{i\in V\setminus S}c_i\size(p_i)
		+\sum_{ij\in E}u_{ij}\dist_{a_{ij}}(p_i,p_j) & if $p$ is a potential,\\
		\infty & otherwise.
	\end{dcases*}
\end{equation}
That is, $h$ is obtained from $\tilde{h}$
by taking $M\to \infty$.
Since \cref{eq:hN} is consistent with \cref{eq:horig} on $\mathbb{G}\cap\mathbb{S}^n$,
we use $h$ for denoting \cref{eq:hN}.

\begin{proposition}
\label{prop:Nconv_h}
$h$ is N-convex on $\mathbb{G}$.
\end{proposition}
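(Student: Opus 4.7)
The plan is to mirror the proofs of \cref{prop:Lconv,prop:Lconv_term} used for L-convexity, replacing the submodularity/antipodal inequalities with the normal-path inequalities \cref{cond:N1,cond:N2}.

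First I would observe that N-convexity on $\overline{\mathbb{S}}^n$ is preserved under nonnegative linear combinations and that a function depending only on a subset of coordinates extends to an N-convex function on the full product, since both $\to$ and $\twoheadrightarrow$ act coordinatewise. Because $h(p)=+\infty$ whenever $p$ is not a potential, the inequalities \cref{cond:N1,cond:N2} are trivial unless both $p$ and $q$ are potentials, and if both are potentials then I claim $p\to q$ and $q\to p$ are again potentials: for $s\in S$, the membership $p_s\in P_s$ amounts to $(p_s)_s=-(p_s)_t=l\geq 0$ for $t\in S-s$, and a direct application of \cref{eq:normal} to two such configurations moves both coordinates by opposite half-integer amounts and keeps $(p_s\to q_s)\in P_s$; for $i\in V\setminus S$, the $\mathbb{S}$-validity $B(p_i)=0$ is preserved by a short case analysis invoking the exceptional rules in the definition of $\to$ on $\overline{\mathbb{S}}$.

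Second, I would establish an N-convex analog of \cref{prop:Lconv_term}: the functions $-\dist(0,\cdot)$ on $\mathbb{W}_s$, $\size(\cdot)$ and $B(\cdot)$ on $\overline{\mathbb{S}}$, and $\dist_a(T,R)+M(B(T)+B(R))$ on $\overline{\mathbb{S}}^2$ (with large $M$) are all N-convex. Within a single grid piece ($\mathbb{C}$ or one $\mathbb{W}_s$) each of these is a nonnegative combination of linear functions and positive parts of the form $(x_i+x_j-a)^+$, so N-convexity follows from \cref{lem:Nconv_grid}. For the ``crossing'' cases in which the two arguments lie in distinct pieces of the amalgam, one invokes the exceptional rules (1)--(2) for $\to$ and verifies \cref{cond:N1,cond:N2} by explicit calculation in each combination of piece types. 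Assembling these building blocks as a nonnegative linear combination of the summands of \cref{eq:hN} and letting $M\to\infty$ (pointwise limits preserve \cref{cond:N1,cond:N2}) then yields N-convexity of $h$ on $\mathbb{G}$.

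The main obstacle is the crossing-case analysis for $\dist_a(T,R)$: the piecewise definition of both $\dist_a$ and the normal-path operator means that the four points $T,R,T\to R,R\to T$ may in principle lie in different pieces of the amalgam, and one must check that the resulting piecewise formulas fit together consistently with \cref{cond:N1,cond:N2}. As in the L-convex case, the key simplification is to reduce to configurations where the relevant ``meeting point'' lies in a common $\mathbb{C}\cap\mathbb{W}_s$, where $\dist_a$ collapses to sums of a single coordinate-difference and the inequalities become one-dimensional; the barrier terms $M\cdot B$ prevent configurations outside $\mathbb{S}$ from contributing, thereby shrinking the genuinely hard part of the case analysis to $\mathbb{S}$-valid quadruples.
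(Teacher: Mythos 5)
Your overall strategy is the same as the paper's: reduce to N-convexity of the same building blocks (Proposition~\ref{prop:Nconv_term}), verify them via \cref{lem:Nconv_grid} plus case analysis for the amalgam ``crossing'' configurations, and assemble everything by nonnegative linear combination together with the $\infty\cdot B$ barrier. However, one supporting claim is false: \emph{$\twoheadrightarrow$ does not act coordinatewise}. The operator $\to$ does ($({p\to q})_i = {p_i\to q_i}$), but ${p\twoheadrightarrow q} = {p\to^{d-1}q}$ where $d = d^\Delta(p,q) = \max_i d^\Delta(p_i,q_i)$; in a coordinate $i$ where $d^\Delta(p_i,q_i) < d$, the $i$-th coordinate of ${p\twoheadrightarrow q}$ is $q_i$ itself, not ${p_i\twoheadrightarrow q_i}$. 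So the coordinate-extension lemma you invoke does not follow from ``coordinatewise action.''

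The extension lemma you want is still true, but it needs the argument the paper gives: for $\tilde{f}(Q,T) := f(Q)$, the pair $\bigl(\tilde{f}\bigl((Q,T)\twoheadrightarrow(Q',T')\bigr),\ \tilde{f}\bigl((Q',T')\twoheadrightarrow(Q,T)\bigr)\bigr)$ equals either $\bigl(f(Q'),f(Q)\bigr)$ (when the other block of coordinates achieves the larger $\Delta$-distance, so both $Q$-coordinates have already landed) or $\bigl(f({Q\twoheadrightarrow Q'}),\ f({Q'\twoheadrightarrow Q})\bigr)$ (when the $Q$-block achieves the maximum). Inequality \cref{cond:N2} then holds trivially in the first case and by N-convexity of $f$ in the second. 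Once this is patched, the rest of your plan matches the paper; the remaining ``crossing-case'' verification for $\dist_a$ and $\size$ that you flag as the main obstacle is exactly the content of the paper's proof of Proposition~\ref{prop:Nconv_term}, which you would still need to carry out.
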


First we prove \cref{lem:sensitivity} using \cref{prop:Nconv_h}.

\begin{proof}[Proof of \cref{lem:sensitivity}]
Let $e=ij$.
Recall that $a':=a-\chi_{ij}$.
Then we observe that
\begin{equation}
\label{eq:h_diff}
	h_{a'}(q)=\begin{dcases*}
		h_a(q)+u_{ij} & if $\dist(q_i,q_j)\geq a_{ij}$,\\
		h_a(q)+u_{ij}/2 & if $\dist(q_i,q_j)=a_{ij}-1/2$,\\
		h_a(q) & if $\dist(q_i,q_j)\leq a_{ij}-1$.
	\end{dcases*}
\end{equation}
Take $q\in \opt(h_{a'})$ having the minimum $\Delta$-distance from $p$.
Suppose that $q\neq p$.
Then $h_{a'}(q)<h_{a'}(p)$ and also $h_a(p)\leq h_a(q)$.
Let $(p=p^0,p^1,\dotsc,p^{\ell}=q)$ be the normal path from $p$ to $q$.
We show
\begin{gather}
\label{eq:h1_chain}
	h_a(p)\leq h_a(p^1)\leq h_a(p^2)\leq \dotsb\leq h_a(p^{\ell-1})\leq h_a(q),\\
\label{eq:h2_chain}
	h_{a'}(p)> h_{a'}(p^1)> h_{a'}(p^2)> \dotsb > h_{a'}(p^{\ell-1}) > h_{a'}(q).
\end{gather}
By \cref{prop:Nconv_h}, we have
$h_a(p)+h_a(p^{t+1})\geq h_a(p\twoheadrightarrow p^{t+1})+h_a(p^t)$.
It follows from $h_a(p)\leq h_a(p\twoheadrightarrow p^{t+1})$
that $h_a(p^{t+1})\geq h_a(p^t)$.
Similarly, we have
$h_{a'}(p^{t})+h_{a'}(q)\geq h_{a'}(p^{t+1})+h_{a'}(q\to p^t)$ by \cref{prop:Nconv_h}.
Since $d^\Delta(p,q\to p^t)<d^\Delta(p,q)$, it holds $h_{a'}(q)<h_{a'}(q\to p^t)$.
Then $h_{a'}(p^t)>h_{a'}(p^{t+1})$ follows.
By \cref{eq:h1_chain,eq:h2_chain}, we see
\begin{equation}
\label{eq:hdiff_chain}
	h_{a'}(p)-h_a(p)>h_{a'}(p^1)-h_a(p^1)>\dotsb
	 > h_{a'}(p^{\ell-1})-h_a(p^{\ell-1}) > h_{a'}(q)-h_a(q).
\end{equation}
Now $\ell\leq 2$ follows from \cref{eq:h_diff,eq:hdiff_chain}.
\end{proof}

This proof method is also used in \cite{Hirai2020cost}.

We prove \cref{prop:Nconv_h}.
It is straightforward that the N-convexity is preserved
under the addition and the multiplication of a nonnegative real number.
Moreover, if $f$ is N-convex on $\overline{\mathbb{S}}^l$,
then $\tilde{f}:\overline{\mathbb{S}}^{l+1}\to \mathbb{R}$ defined by
\[
	\tilde{f}(Q,T):=f(Q)\quad (Q\in\overline{\mathbb{S}}^l,
		\ T\in\overline{\mathbb{S}})
\]
is also N-convex.
To see this for \cref{cond:N2},
notice that $(\tilde{f}((Q,T)\twoheadrightarrow (Q',T')),\tilde{f}((Q',T')\twoheadrightarrow (Q,T)))$
is $(f(Q),f(Q'))$ or $(f(Q\twoheadrightarrow Q'),f(Q'\twoheadrightarrow Q))$.
Then we only need to show the following.

\begin{proposition}
\label{prop:Nconv_term}
The functions
\textup{(1)} $-\dist(0,\cdot)$ on $\mathbb{W}_s$,
\textup{(2)} $\size(\cdot)+\infty\cdot B(\cdot)$ on $\overline{\mathbb{S}}$,
and
\begin{align*}
	\text{\textup{(3)}}&\quad \overline{\mathbb{S}}^2\ni (T,R)\mapsto \dist_a(T,R)+\infty\cdot(B(T)+B(R))\quad (0<a\in \mathbb{Z}_+)
\end{align*}
are all N-convex.
\end{proposition}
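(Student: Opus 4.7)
The plan is to verify the two N-convexity inequalities \eqref{cond:N1}, \eqref{cond:N2} for each of (1)--(3) by a case analysis on which pieces $\mathbb{C}, \mathbb{W}_s$ of $\overline{\mathbb{S}}$ contain the arguments, reducing to \cref{lem:Nconv_grid} whenever possible.

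For (1), the key observation is that on $\mathbb{W}_s$ the constraint $T_t = T_{t'}$ for $t, t' \in S-s$ forces the ``non-$s$'' coordinates to move in lockstep along any normal path, so $T\to T'$ and $T\twoheadrightarrow T'$ remain in $\mathbb{W}_s$ whenever $T, T' \in \mathbb{W}_s$. On this piece $-\dist(0, T) = T_t$ for any $t \in S-s$ is linear, and \cref{lem:Nconv_grid}(1) gives \eqref{cond:N1}, \eqref{cond:N2} with equality.

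For (2) and (3), the infinite penalty $\infty\cdot B$ restricts effective evaluation to points with $B = 0$, namely honest subtrees in $\mathbb{S}$. At such points $\size$ and $\dist_a$ are sums of terms of the form $(\textup{linear})^+$ with integer thresholds, hence N-convex on any single grid piece by \cref{lem:Nconv_grid}(1)(2), combined with the sign-inversion closure of \cref{lem:Nconv_grid}(3) to handle the negative coordinates of $\mathbb{W}_s$. Thus whenever $T, T'$ (and $R, R'$ in (3)) lie in a common piece, the standard normal-path formula \eqref{eq:normal} applies and \cref{lem:Nconv_grid} yields the inequalities at once. The remaining ``cross'' cases are $(T, T') \in (\mathbb{W}_s\setminus\mathbb{C}) \times (\mathbb{W}_{s'}\setminus\mathbb{C})$ with $s\neq s'$, and $(T, T') \in (\mathbb{C}\setminus\mathbb{W}_s) \times (\mathbb{W}_s\setminus\mathbb{C})$, where the exceptional definitions of $T\to T'$ given before the statement are invoked. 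For each such case I plan to unfold the special definition, use $T_s + T_t \geq 0$ (forced by $B = 0$) to strip outer positive parts from $\size$ and from $\dist_a$, and then check \eqref{cond:N1}, \eqref{cond:N2} by direct finite computation on the resulting affine expressions.

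The main obstacle will be case (3) when three or more of $T, T', R, R'$ lie in pairwise distinct pieces $\mathbb{W}_s$. There the functional form of $\dist_a$ switches between its one-term expression $(-T_t - R_s - a)^+$ and its two-term expression $(-T_s - R_t - a)^+ + (-T_t - R_s - a)^+$ along the normal path, and the source and target $\mathbb{W}$-indices may all differ, so there is no uniform coordinate in which a single step of the normal path acts. I plan to subdivide by the $\mathbb{W}$-memberships encountered and by whether a normal step leaves a coordinate at zero or moves it off zero; in each subcase I expect either a direct reduction to a lockstep move in the ambient grid $(\mathbb{Z}^*)^k$, or a short algebraic check that the two-term drop on one side is compensated by the one-term drop on the other. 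If such a reduction can be argued uniformly, the proof runs parallel to that of \cref{prop:Lconv_term}(5), with the antipodal/bounded split replaced by the normal-path steps $\to$ and $\twoheadrightarrow$.
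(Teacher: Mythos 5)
Your overall strategy---case analysis on the pieces $\mathbb{C}, \mathbb{W}_s$ of $\overline{\mathbb{S}}$, reducing to \cref{lem:Nconv_grid} whenever a normal path stays within a single piece, and treating the cross-piece cases (where the exceptional definitions of $\to$ apply) separately---is the same approach as the paper, and your handling of (1) is correct. However, for (2) and (3) you have only a plan, and the hard part of (3) is left as an expectation (``I expect either a direct reduction \ldots or a short algebraic check''), not an argument. The paper's actual proof of (3) rests on several specific ideas you have not arrived at, and without them your subdivision threatens to blow up in cases.

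The key ideas you are missing are these. First, one reduces WLOG to the situation where one of the two $\dist_a$-terms strictly increases, say $\dist_a(T_1,T_2) < \dist_a(T'_1,T'_2)$; since a normal step changes $\dist$ by at most $1$ and $\dist_a(T'_1,T'_2)>0$, this forces $\dist(T_1,T_2) > a - 1 \ge 0$, hence at least one of $T_1,T_2$ lies in $\mathbb{W}_s \setminus \mathbb{C}$. This ``the moved pair is already far apart'' observation pins down the piece, and it is what replaces your open-ended enumeration of which pieces $T,T',R,R'$ occupy. Second, within the case $T_1 \in \mathbb{W}_s\setminus\mathbb{C}$ one splits on whether $Q_1 \in \mathbb{W}_s$; in the affirmative case the paper projects to $(\mathbb{Z}^*)^{\{s,t\}}$ and checks that $\to$ and $\twoheadrightarrow$ commute with this projection (with one explicitly isolated exceptional subcase), so \cref{lem:Nconv_grid}(2) applies---this is sharper than a generic ``lockstep'' reduction. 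Third, and crucially, the genuinely pathological subcase $Q_1 \notin \mathbb{W}_s$ can only arise for $\twoheadrightarrow$ (not $\to$): for $\to$, every differing coordinate moves, which would decrease $-T_1(t)$ and hence cannot produce the required increase of $\dist$; whereas $\twoheadrightarrow$ leaves non-extremal coordinates fixed. Identifying this asymmetry between \eqref{cond:N1} and \eqref{cond:N2} is what makes the final algebraic check (via the inequality \eqref{eq:increase}) tractable.

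One further caution: your suggestion that the argument should ``run parallel to that of \cref{prop:Lconv_term}(5)'' is not borne out. That proof leans on \cref{lem:Lconv_two}, i.e., on checking submodularity only for $2$-bounded pairs, and there is no analogous reduction principle for N-convexity stated in the paper; the proof of \cref{prop:Nconv_term}(3) is necessarily more global, tracking a full normal step on a product coordinate pair. For (2), the paper's argument is likewise not a bare ``strip positive parts and compute affine'': after restricting to $T\in\mathbb{W}_s\setminus\mathbb{C}$, $Q\in\mathbb{C}\setminus\bigcup_s\mathbb{W}_s$ it tracks which coordinates of $Q$ actually drop under $Q\to T$ and pairs the potential increase $T'(t)-T(t) = 1/2$ against a compensating drop $Q(t')-Q'(t') = 1/2$ for some $t'\in S-s$---a combinatorial matching of moved coordinates, not a purely affine computation.
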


Here $\infty\cdot B(T)$ takes value $0$ if $B(T)=0$ and $\infty$ otherwise.
This means we may only consider the points in $\mathbb{S}$.

\begin{proof}
(1) is obvious from \cref{lem:Nconv_grid}~(1).
(2) Let $T,Q\in \mathbb{S}$ and
let $T':=T\to Q,\ Q':=Q\to T$
(or $T':=T\twoheadrightarrow Q,\ Q':=Q\twoheadrightarrow T$);
our argument can be applied to both cases.
We can assume $T\in \mathbb{W}_s\setminus\mathbb{C}\ (s\in S)$
 and $Q\in \mathbb{C}\setminus\bigcup_{s\in S} \mathbb{W}_s$;
 other cases can be reduced to \cref{lem:Nconv_grid}~(1) and (2).
Let $t\in S-s$.
Then $\size(T)=T(s)+T(t)$.
We have $T(s)-T'(s)=Q'(s)-Q(s)$.
Also $T(t)\leq T'(t)$ and $Q'(t')\leq Q(t')$ for any $t'\in S-s$.
If $T(t)<T'(t)\ (=T(t)+1/2)$,
then at least one of $t'\in S-s$ satisfies $Q'(t')=Q(t')-1/2$.
Thus
\[
	\size(T)+\size(Q)-\size(T')-\size(Q')
	\geq T(t)-T'(t)+Q(t')-Q'(t')=0.
\]

(3) Let $(T_1,T_2),(Q_1,Q_2)\in \mathbb{S}^2$
and let $(T'_1,T'_2):=(T_1,T_2)\to(Q_1,Q_2),\ 
(Q'_1,Q'_2):=(Q_1,Q_2)\to(T_1,T_2)$
(or $\to$ replaced by $\twoheadrightarrow$).
To show \cref{cond:N1,cond:N2},
we may assume that $\dist_a(T_1,T_2)<\dist_a(T'_1,T'_2)\ (>0)$.
Since $\dist(T'_1,T'_2)-\dist(T_1,T_2)\in\{1/2,1\}$,
we have $\dist(T_1,T_2)\geq \dist(T'_1,T'_2)-1>a-1\geq 0$.
Then at least one of $T_1,T_2$ does not belong to $\mathbb{C}$,
say, $T_1\in \mathbb{W}_s\setminus\mathbb{C}$.
If $T_2$ also belongs to $\mathbb{W}_s$,
then we can further assume that $\dist(0,T_1)>\dist(0,T_2)$.
Hence $\dist(T_1,T_2)=-T_1(t)-T_2(s)\ (\forall t\in S-s)$.
We also see $T'_1\in \mathbb{W}_s$ and
$\dist(T'_1,T'_2)=-T'_1(t)-T'_2(s)$.

Suppose that $Q_1\in \mathbb{W}_s$ (and thus $Q'_1\in \mathbb{W}_s$).
Then we can assume there exists $t\in S-s$ such that
 $T_2,Q_2$ (and $T'_2,Q'_2$) are belong to
$\mathbb{W}_s\cup \mathbb{C}\cup \mathbb{W}_t$.
Indeed, if $T_2\in \mathbb{W}_t\setminus\mathbb{C}$
 and $Q_2\in \mathbb{W}_{t'}\setminus\mathbb{C}$
for distinct $t,t'\in S-s$,
then this can be reduced to the next case by interchanging indices 1 and 2.
We see $\dist(Q_1,Q_2)=-Q_1(t)-Q_2(s)$ and $\dist(Q'_1,Q'_2)=-Q'_1(t)-Q'_2(s)$.
We consider the projection $\sigma$ onto $(\mathbb{Z}^*)^{\{s,t\}}$
as in \cref{prop:Lconv_term}~(5).
For $(T_1,T_2),(Q_1,Q_2)$, the projection $\sigma$ and
 $\to$ ($\leftarrow,\twoheadrightarrow,\twoheadleftarrow$) commute, i.e.,
$(\sigma(T_1),\sigma(T_2))\to(\sigma(Q_1),\sigma(Q_2))
=(\sigma(T'_1),\sigma(T'_2))$ and so on
except for the case of $\to$, $T_2(s)=0$, $Q_2(s)<0$
but there exists $t'\in S\setminus\{s,t\}$ with $T_2(t')>0$.
In the commuting cases, \cref{cond:N1,cond:N2} follow from \cref{lem:Nconv_grid}~(2).
For the exceptional case,
$T_2(s)=T'_2(s)$ but $\sigma(T_2)(s)>(\sigma(T_2)\to \sigma(Q_2))(s)$.
Then $\dist(T'_1,T'_2)<\dist((\sigma(T_1),\sigma(T_2))\to(\sigma(Q_1),\sigma(Q_2)))$ holds,
and hence \cref{cond:N1} holds.

Suppose that $Q_1\notin \mathbb{W}_s$.
This happens only if $(T'_1,T'_2):=(T_1,T_2)\twoheadrightarrow (Q_1,Q_2)$
(and $(Q'_1,Q'_2):=(Q_1,Q_2)\twoheadrightarrow (T_1,T_2)$)
and $-T_1(t)=-T'_1(t)\geq T_2(s)=T'_2(s)+1/2\ (\forall t\in S)$.
We take $t\in S-s$ with maximum $Q_1(t)>0$.
By the definition of ``$\twoheadrightarrow$'',
we have
\begin{equation}
\label{eq:increase}
	T_2(s)-Q_2(s)>Q_1(t)-T_1(t)
\end{equation}
and thus $Q_2(s)< T_2(s)+T_1(t)-Q_1(t)<T_2(s)+T_1(t)\leq 0$.
Hence $Q_2\in \mathbb{W}_{t'}\setminus\mathbb{C}$
 for some $t'\in S-s$ (possibly $t'=t$).
By \cref{eq:increase},
$\dist(Q_1,Q_2)=-Q_1(t')-Q_2(s)\geq -Q_1(t)-Q_2(s)>-T_1(t)-T_2(s)=\dist(T_1,T_2)\geq a$.
Similarly $Q'_2\in \mathbb{W}_{t'}$ and
$\dist(Q'_1,Q'_2)=-Q'_1(t')-Q'_2(s)\geq a$.
Also by the definition of ``$\twoheadrightarrow$'',
we have $Q_1(t')-Q'_1(t')\in \{0,-1/2\}$ and $Q'_2(s)=Q_2(s)+1/2$.
Hence $\dist_a(Q'_1,Q'_2) \leq \dist_a(Q_1,Q_2) -1/2$, implying \cref{cond:N2}.
\end{proof}

\section*{Acknowledgements}
The first author was supported by JSPS KAKENHI Grant Number JP17K00029
and JST PRESTO Grant Number JPMJPR192A, Japan.


\bibliography{ntb}

\end{document}